\newcommand{\eqs}[1]{\begin{gather}\begin{aligned}#1\end{aligned}\end{gather}}
\newcommand{\ddl}{\mathfrak{D}^d\ell}
\newcommand{\cI}{\mathcal{I}}
\newcommand{\cJ}{\mathcal{J}}
\newcommand{\cR}{\mathcal{R}}
\newcommand{\cL}{\mathcal{L}}
\newcommand{\cM}{\mathcal{M}}
\newcommand{\ord}{\mathcal{O}}
\newcommand{\cQ}{\mathcal{Q}}
\newcommand{\cS}{\mathcal{S}}
\newcommand{\eps}{\epsilon}
\newcommand{\ivec}{\begin{pmatrix}\cI_1(\eps;x)\\\cI_2(\eps;x)\\\cI_3(\eps;x)\end{pmatrix}}
\newcommand{\bZ}{\mathbb{Z}}
\newcommand{\bQ}{\mathbb{Q}}
\newcommand{\bC}{\mathbb{C}}
\newcommand{\bP}{\mathbb{P}}
\newcommand{\slz}{\mathrm{SL}_2(\bZ)}
\newcommand{\pslz}{\mathrm{PSL}_2(\bZ)}
\newcommand{\abcd}{\left(\begin{smallmatrix}a&b\\c&d\end{smallmatrix}\right)}
\newcommand{\fH}{\mathfrak{H}}
\g@addto@macro\bfseries{\boldmath}
\newif\ifnote 
\def\beq{\begin{equation}}
\def\eeq{\end{equation}}
\def\bsp#1\esp{\begin{split}#1\end{split}}
\DeclareMathOperator{\EK}{K}
\newcommand{\ban}{\textsf{ban}}
\newcommand{\sun}{\textsf{sun}}
\newcommand{\tz}{\xi}
\newcommand{\mH}{\aleph}
\newcommand{\HP}{\mathfrak{H}}
\newcommand{\eHP}{\mathfrak{H}^*}
\theoremstyle{definition}
\newtheorem{defi}{Definition}
\newtheorem{thm}{Theorem}
\newtheorem{lemma}{Lemma}
\newtheorem{proposition}{Proposition}
\newtheorem{rmk}{Remark}
\newcommand{\qu}{\nicefrac{1}{4}}
\newcommand{\tbt}{\mathsf{2\times2}}
\title{Meromorphic modular forms and the three-loop equal-mass banana integral}
\author[a]{Johannes Broedel,} 
\author[b]{Claude Duhr,}
\author[c]{Nils Matthes,}
\affiliation[a]{Institute for Theoretical Physics, ETH Zurich, Wolfgang-Pauli-Str.~27, 8093 Zürich, Switzerland} 
\affiliation[b]{Bethe Center for Theoretical Physics, Universit\"at Bonn, D-53115, Germany} 
\affiliation[c]{Mathematical Institute, University of Oxford, Andrew Wiles Building, Radcliffe Observatory Quarter, Woodstock Road, Oxford OX2 6GG, United Kingdom}
\emailAdd{jbroedel@ethz.ch}
\emailAdd{cduhr@uni-bonn.de}
\emailAdd{nils.matthes@maths.ox.ac.uk}
\abstract{We consider a class of differential equations for multi-loop Feynman integrals which can be solved to all orders in dimensional regularisation in terms of iterated integrals of meromorphic modular forms. We show that the subgroup under which the modular forms transform can naturally be identified with the monodromy group of a certain second-order differential operator. We provide an explicit decomposition of the spaces of modular forms into a direct sum of total derivatives and a basis of modular forms that cannot be written as derivatives of other functions, thereby generalising a result by one of the authors form the full modular group to arbitrary finite-index subgroups of genus zero. Finally, we apply our results to the two- and three-loop equal-mass banana integrals, and we obtain in particular for the first time complete analytic results for the higher orders in dimensional regularisation for the three-loop case, which involves iterated integrals of meromorphic modular forms.}
\keywords{Feynman integrals, elliptic polylogarithms, modular forms.}
\preprint{\begin{minipage}[t]{8cm}\begin{flushright}BONN-TH-2021-08\\
          \end{flushright}\end{minipage}}
\begin{document}
\maketitle\thispagestyle{empty}



\section{Introduction}

Feynman integrals are a cornerstone of perturbative computations in Quantum Field Theory, and so it is important to have a good knowledge of the mathematics underlying them, including efficient techniques for their computation and a solid understanding of the space of functions needed to express them. The simplest class of functions that arise in Feynman integral computations are multiple polylogarithms (MPLs)~\cite{Lappo:1927,GoncharovMixedTate,Goncharov:1998kja} (see also refs.~\cite{Remiddi:1999ew,Gehrmann:2000zt,Ablinger:2011te}). The success of MPLs in Feynman integral computations can to a large extent be traced back to the fact that their algebraic properties are well understood (see, e.g., ref.~\cite{Duhr:2014woa}), and there are several efficient public implementations for their numerical evaluation~\cite{Gehrmann:2001pz,Gehrmann:2001jv,Buehler:2011ev,Vollinga:2004sn,Frellesvig:2016ske,Ablinger:2018sat,Naterop:2019xaf}. Moreover, it is well known that Feynman integrals satisfy systems of coupled first-order differential equations~\cite{Kotikov:1990kg,Kotikov:1991hm,Kotikov:1991pm,Gehrmann:1999as}, and MPLs are closely connected to the concepts of pure functions~\cite{ArkaniHamed:2010gh} and canonical differential equations~\cite{Henn:2013pwa}. It is fair to say that, whenever one can find a system of canonical differential equations that can be solved in terms of MPLs, the problem can be considered solved. 

However, it was realised already early on that MPLs do not suffice to express solutions to higher-loop Feynman diagrams \cite{Sabry,Broadhurst:1987ei,Bauberger:1994by,Bauberger:1994hx,Laporta:2004rb,Kniehl:2005bc,Aglietti:2007as,Czakon:2008ii,Brown:2010bw,MullerStach:2011ru,CaronHuot:2012ab,Huang:2013kh,Brown:2013hda,Nandan:2013ip,Ablinger:2017bjx}, though no analytic results in terms of a well-defined class of functions was available. The situation changed less than a decade ago, when it was shown that the two-loop sunrise integral can be expressed in terms of so-called elliptic dilogarithms~\cite{Bloch:2013tra,Adams:2015ydq,Adams:2016xah,Adams:2014vja,Adams:2013nia,Adams:2013kgc,Adams:2015gva,Broedel:2017siw}. 
The elliptic dilogarithm is a special case of elliptic multiple polylogarithms~\cite{BeilinsonLevin,LevinRacinet,BrownLevin,Broedel:2017kkb}, which also play a prominent role in the context of string amplitudes at one-loop, cf.~e.g., refs.~\cite{Broedel:2014vla,Broedel:2015hia,Broedel:2017jdo}. 
Soon after, it was realised that in the equal-mass case the two-loop sunrise integral can also be expressed as iterated integrals of modular forms~\cite{Adams:2017ejb,Broedel:2018iwv}. This class of functions is also of interest in pure mathematics~\cite{ManinModular,Brown:mmv,Matthes:QuasiModular,Brown:mmv2,matthes2021iterated}, and it is understood how to manipulate and evaluate these integrals efficiently~\cite{Duhr:2019rrs,Walden:2020odh}. More generally, it was suggested that modularity is an important feature of Feynman integrals associated to families of elliptic curves~\cite{Weinzierl:2020fyx}.

Despite all this progress in understanding Feynman integrals that do not evaluate to MPLs, there are still many questions left unanswered, and no general and algorithmic solution to evaluate and manipulate them is known, contrary to the case of ordinary MPLs. For example, while the importance of iterated integrals of modular forms is by now well established, the reason for why modular forms appear in the first place, and if so of which type, is not completely settled in the literature, and there was even an argument in the literature as to which congruence subgroup to attach to the two-loop sunrise integral~\cite{Adams:2017ejb,Frellesvig:2021vdl}. Also the link between differential equations and the appearance of these functions is not completely satisfactory (though there are indications that the concepts of pure functions and canonical forms known from MPLs carry over to Feynman integrals associated to families of elliptic curves~\cite{Broedel:2018qkq,Adams:2018yfj,Bogner:2014mha}). Finally, and probably most importantly, holomorphic modular forms are not sufficient to cover even the simplest cases of Feynman integrals depending on one variable. Indeed, it is known that, while in general higher-loop analogues of the sunrise integral -- the so-called $l$-loop banana integrals -- are associated to families of Calabi-Yau $(l-1)$-folds~\cite{Bloch:2014qca,Bloch:2016izu,Bourjaily:2018ycu,Bourjaily:2018yfy,Klemm:2019dbm,Bonisch:2020qmm,Bonisch:2021yfw}, the three-loop equal-mass banana integral in $D=2$ dimensions can be expressed in terms of the same class of functions as the two-loop equal-mass sunrise integral~\cite{Bloch:2014qca,Bloch:2016izu,Broedel:2019kmn}. However, if higher terms in the $\eps$-expansion in dimensional regularisation are considered, new classes of iterated integrals are required, which cannot be expressed in terms of modular forms alone.

The goal of this paper is to describe the (arguably) simplest class of differential equations beyond MPLs for which the space of solutions can be explicitly described, to all orders in the $\eps$-expansion. The relevance of this class of differential equations for Feynman integrals stems from the fact that they cover in particular the two- and three-loop equal-mass banana integrals. Their solution space can be described in terms of iterated integrals of meromorphic modular forms, introduced and studied by one of us in the context of the full modular group $\slz$~\cite{matthes2021iterated}. For Feynman integrals, however, modular forms for the full modular group are insufficient. We extend the results of ref.~\cite{matthes2021iterated} to arbitrary finite-index  subgroups of genus zero, and we provide in particular a basis for the algebra of iterated integrals they span. Our construction also naturally provides an identification of the type of modular forms required, namely those associated to the monodromy group of the associated homogeneous differential operator. This explains in particular the origin and the type of iterated integrals of modular forms encountered in Feynman integral computations. As an application of our formalism, we provide for the first time complete analytic for results for all master integrals of the three-loop equal-mass banana integrals in dimensional regularisation, including the higher orders in the $\eps$-expansion, and we see the explicit appearance of iterated integrals of meromorphic modular forms.


The article is organized as follows: in section \ref{sec:DEQs} we review general material on Feynman integrals and the differential equations they satisfy, and we describe the class of differential operators that we consider. In section~\ref{sec:modular} we review modular and quasi-modular forms. Section~\ref{sec:mero_sec} presents the main results of this paper, and we consider iterated integrals of meromorphic modular forms and present the main theorems. In section~\ref{sec:sunban} we calculate the monodromy groups for the equal-mass two- and three-loop and banana integrals, while section~\ref{sec:bananameromorphic} is devoted to framing the higher-orders in $\eps$ results for the three-loop banana integrals in terms of iterated integrals  of meromorphic modular forms. We include several appendices. In appendix~\ref{app:mathy} we present a rigorous mathematical proof the main theorem from section~\ref{sec:mero_sec}, and in appendix~\ref{app:sunban} we collect formulas related to the sunrise and banana integrals

\vspace{1mm}\noindent


\section{Differential equations and modular parametrisations}
\label{sec:DEQs}

\subsection{Feynman integrals and differential equations}
\label{sec:FIs_and_deqs}
The goal of this paper is to study a certain class of Feynman integrals and to characterize the functions necessary for their evaluation. We work in dimensional regularisation in $d=d_0-2\eps$ dimensions, where $d_0$ is an even integer. The Feynman integrals to be considered depend on a single dimensionless variable $x$ or -- equivalently -- two dimensionful scales. 

It is well known that using integration-by-parts identities~\cite{Chetyrkin:1981qh,Tkachov:1981wb}, all Feynman integrals that share the same set of propagators raised to different integer powers can be expressed as linear combinations of a small set of so-called \emph{master integrals}. Those master integrals satisfy a system of first-order linear differential equations of the form~\cite{Kotikov:1990kg,Kotikov:1991hm,Kotikov:1991pm,Gehrmann:1999as,Henn:2013pwa}
\beq\label{eq:DEQ_generic}
\partial_x\cI(x,\eps) = A(x,\eps)\cI(x,\eps) + \mathcal{N}(x,\eps)\,,
\eeq
where $\cI(x,\eps) = (I(x,\eps),\partial_xI(x,\eps),\ldots,\partial_x^{r-1}I(x,\eps))^T$ is the vector of independent master integrals depending on the maximal set of propagators in the family. $\mathcal{N}(x,\eps)$ is an inhomogeneous term stemming from integrals with fewer propagators, which we assume to be known and expressible to all orders in the dimensional regulator $\eps$ as a linear combination with rational functions in $x$ as coefficients of multiple polylogarithms (MPLs), defined by:
\beq\label{eq:MPL_def}
G(a_1,\ldots,a_n;x) = \int_0^x\frac{dt}{t-a_1}\,G(a_2,\ldots,a_n;t)\,,
\eeq
where the $a_i$ are complex constants that are independent of $x$. 
The entries of $A(x,\eps)$ are rational functions in $x$ and $\eps$. 
The differential equation~\eqref{eq:DEQ_generic} is equivalent to the inhomogeneous differential equation:
\beq\label{eq:DEQ_generic_higher}
\cL_{x,\eps}^{(r)}I(x,\eps) = N(x,\eps)\,,
\eeq
where $\cL_{x,\eps}^{(r)}$ is a differential operator of degree $r$ whose coefficients are rational functions in $x$ and $\eps$. 

In order to solve the differential equation~\eqref{eq:DEQ_generic}, we first note that it is always possible to choose the basis of master integrals such that the matrix $A(x,\eps)$ is finite as $\eps\to0$ (see, e.g., refs.~\cite{Chetyrkin:2006dh,Lee:2019wwn}). In that case we can change the basis of master integrals according to 
\beq\label{eq:change_IJ}
\cI(x,\eps) = W_r(x)\cJ(x,\eps)\,,
\eeq
where $W_r(x)$ is the Wronskian matrix of the homogeneous part of eq.~\eqref{eq:DEQ_generic_higher} at $\eps=0$, 
\beq\label{eqn:DEgen}
\cL_x^{(r)}u(x) = 0\,,
\eeq
where $\cL_x^{(r)}= \cL_{x,\eps=0}^{(r)}$. Let us write 
\beq\label{eq:cL_with_coefficients}
\cL^{(r)}_x = \sum_{j=0}^r a_j(x)\partial_x^j\,,
\eeq
with $a_j(x)$ being rational functions and $a_r(x)=1$. If we denote the solution space of $\cL_x^{(r)}$ by 
\beq\label{eq:sol_space}
\textrm{Sol}(\cL_x^{(r)}) = \bigoplus_{s=1}^r\mathbb{C}\psi_s(x)\,,
\eeq
then the Wronskian is $W_r(x) = (\psi_s^{(p-1)}(x))_{1\le p,s\le r}$, where $\psi_s^{(p)}(x) := \partial_x^p\psi_s(x)$. The Wronskian is in fact the matrix for a basis of maximal cuts for $\cI(x,\eps)$~\cite{Primo:2016ebd,Frellesvig:2017aai,Harley:2017qut,Bosma:2017ens}. 

After the change of variables in eq.~\eqref{eq:change_IJ}, the differential equation for $\cJ(x,\eps)$ takes the form
\beq\bsp
\partial_x\cJ(x,\eps) &\,= W_r(x)^{-1}\big(A(x,\eps)-A(x,0)\big)W_r(x)\,\cJ(x,\eps) + W_r(x)^{-1}\mathcal{N}(x,\eps)\\
&\, =\eps \widetilde{A}(x,\eps)\,\cJ(x,\eps) + \widetilde{\mathcal{N}}(x,\eps)\,.
\esp\eeq
This solution to the above system can be written as a path-ordered exponential:
\beq\label{eq:path_ordered}
\cJ(x,\eps) = \mathbb{P}\,\exp\left[\eps\int_{x_0}^xdx'\,\widetilde{A}(x',\eps)\right]\cJ(x_0,\eps)\,.
\eeq
The path-ordered exponential can easily be expanded into a series in $\eps$, and the coefficients of this expansion involve iterated integrals over one-forms multiplied by polynomials in the entries of the Wronskian. We see that in our setting where the $\eps$-expansion of the differential operator $\cL_{x,\eps}^{(r)}$ and the inhomogeneity $\mathcal{N}(x,\eps)$ involve rational functions and MPLs only, the class of iterated integrals needed to express $\cJ(x,\eps)$ is determined by the solution space of $\cL_x^{(r)}$ in eq.~\eqref{eq:sol_space}. It is an interesting question when these iterated integrals can be expressed in terms of other classes of special functions studied in the literature. For example, in the case where $\widetilde{A}(x,\eps)$ is rational in $x$, these iterated integrals can be evaluated in terms of MPLs. In general, however, little is known about these iterated integrals. 

The main goal of this paper is to discuss a certain class of differential equations where the resulting iterated integrals can be completely classified and an explicit basis can be constructed algorithmically. Before we describe this class of differential equations, we need to review some general material on linear differential equations.

\subsection{Linear differential operators and their monodromy group}
\label{sec:frobenius_review}
A point $x_0$ is called a \textit{singular} point of eq.~\eqref{eqn:DEgen} if one of the coefficient functions $a_j(x)$ in eq.~\eqref{eq:cL_with_coefficients} has a pole at $x_0$. The point at infinity is called \emph{singular} if after a change of variables $x\to 1/y$ in eq.~\eqref{eqn:DEgen} the transformed equation has a pole at $y=0$ in one of the coefficient functions. Points which are not singular are called \textit{ordinary} points of the differential equation. A singular point $x_i$ is called \textit{regular}, if the coefficients $a_{r-j}$ have a pole of at most order $j$ at $x_i$. If all singular points of eq.~\eqref{eqn:DEgen} are regular, the equation is called \textit{Fuchsian}. In the following we only discuss Fuchsian differential equations, and we denote the (finite) set of singular points by $\Sigma:=\{x_0,\ldots,x_{q-1}\}\subset \mathbb{P}^1_{\mathbb{C}}$, and use the notation $X:=\mathbb{P}^1_{\mathbb{C}}\setminus\Sigma$.
The differential operators obtained from Feynman integrals are expected to be of Fuchsian type. 

For every point $y_0\in\mathbb{P}^1_{\mathbb{C}}$ of a Fuchsian differential operator, the \textit{Frobenius method} can be used to construct a series representation of $r$ independent local solutions in a neighbourhood of this point. The starting point is the \textit{indicial polynomial} of a point, which can be obtained as follows: The differential equation~\eqref{eqn:DEgen} is equivalent to $\widetilde{\cL}_x^{(r)}u(x)=0$, where $\widetilde{\cL}_x^{(r)}$  has the form
\beq
	\widetilde{\cL}^{(r)}_x = \sum_{j=0}^r \tilde{a}_j(x)\theta_x^j\,,\qquad \theta_x = x\partial_x\,,
\end{equation}
where the $\tilde{a}_j(x)$ are polynomials that are assumed not to have a common zero. Note that the singular points are precisely the zeroes of $\tilde{a}_r(x)$. The indicial polynomial of $\widetilde{\cL}^{(r)}_x$ at $y_0=0$ is then  
$P_0(s) = \sum_{j=0}^r \tilde{a}_j(0)s^j$. The roots $s_i$ of $P_0(s)$ are called the \emph{indicials} or \emph{local exponents} at $0$. The indicial polynomial $P_{y_0}(s)$ and the local exponents at another point $y_0$ can be obtained by changing variables to $y=x-y_0$ (or $y=1/x$ if $y_0=\infty$). The local exponents characterise the solution space locally close to the point $y_0$ in the form of convergent power series. More precisely, if $y_0\in X = \mathbb{P}^1_{\mathbb{C}}\setminus \Sigma$ is a regular point, then $P_{y_0}(s)$ has degree $r$, and so there are precisely $r$ local exponents $s_1,\ldots,s_r$ (counted with multiplicity). Correspondingly, there are $r$ linearly independent power series solutions $\phi_i(y_0;x)$, $i\in\{1,...,r\}$, to eq.~\eqref{eqn:DEgen} of the form
\beq
\label{eqn:powerseriessolutions}
\phi_i(y_0;x) =  \sum_{n\ge 0}c_{i,n}(x-y_0)^{s_i+n}\,, \qquad c_{i,0}=1 \,.
\eeq
Note that this representation holds for $y_0\neq\infty$; if $y_0=\infty$, the expansion parameter is $1/x$. 

If $y_0\in\Sigma$ is a singular point, the degree of the indicial polynomial is less than $r$, and so there are less than $r$ local exponents (even when counted with multiplicity) and thus less than $r$ local solutions of the form \eqref{eqn:powerseriessolutions}. 
Without loss of generality we assume $y_0=x_0\in\Sigma$. The missing solutions generically exhibit a logarithmic behaviour as one approaches $x_0$. In particular, in the case of a single local exponent $s_1$, there is a single power series solution, and a tower of $(r-1)$ logarithmic solutions (we only consider the case $x_0\neq\infty$)
\beq\bsp\label{eq:Frob_log_solution}
\phi_i(x_0;x) &\, = (x-x_0)^{s_1}\sum_{k=1}^{i} \frac{1}{(k-1)!}\log^{k-1}(x-x_0)\, \sigma_{k}(x_0;x)\,,
\esp\eeq
where the $\sigma_{k}(x_0;x)$ are holomorphic at $x=x_0$.
A singular point with such a hierarchical logarithmic structure of solutions with $s_1$ an integer is called a point of \emph{maximal unipotent monodromy} (MUM-point). 

The power series obtained from the Frobenius method have finite radius of convergence: the solutions $\phi(y_0;x):=(\phi_i(y_0;x))_{1\le i\le r}$ converge in a disc whose radius is the distance to the nearest singular point. It is possible to analytically continue the basis of solutions $\phi(y_0;x)$ to all points in $X$. We can cover $\mathbb{P}^1_{\mathbb{C}}$ by a finite set of open discs $D_{y_k}$ centered at $y_k\in\mathbb{P}^1_{\mathbb{C}}$ such that $\phi(y_k;x)$ converges inside $D_{y_k}$. Since the solutions $\phi(y_k,x)$ and $\phi(y_l,x)$ have to agree for each value of $x$ in the overlapping region $D_{y_k}\cap D_{y_l}$, one can find a matching matrix from the following equation: 
\begin{equation}
	\label{eqn:calcmatchingmatrices}
\phi(y_k;x)=	\begin{pmatrix}\phi_r(y_k;x)\\\vdots\\\phi_1(y_k;x)\\\end{pmatrix}=R_{y_k,y_l}\phi(y_l;x)=R_{y_k,y_l}\begin{pmatrix}\phi_r(y_l;x)\\\vdots\\\phi_1(y_l;x)\\\end{pmatrix}\,.
\end{equation}
Note that the matching matrix $R_{y_k,y_l}$ must be constant. 
Practically, it can be found by numerically evaluating each component of the above equation for several numerical points in the overlapping region. This allows one to determine the matching matrices at least numerically to high precision by taking enough orders in the expansion. In some cases, one may even be able to determine its entries analytically by solving for them in an ansatz for the matrix $R_{y_k,y_l}$. A precise numerical evaluation allows one to identify corresponding analytic expressions in many situations.

\paragraph{The monodromy group.} The Frobenius method allows one to construct a basis of solutions locally for each point $y_0\in \mathbb{P}_{\mathbb{C}}^1$. The local solutions can be analytically continued to a global basis of solutions defined for all $x\in X$. We can also take a point $x\in X$ and a closed loop $\gamma$ starting and ending at $x$ and analytically continue the solution $\phi(y_0;x)$ along $\gamma$. Clearly, if $\gamma$ does not encircle any singular point, Cauchy's theorem implies that the value of $\phi(y_0;x)$ must be the same before and after analytic continuation. One can now ask the question how the vector of solutions $\phi(y_0;x)$ is altered if transported along the small loop $\gamma$ encircling a singular point $x_k$. Let us denote by $\phi_{\circlearrowleft}(y_0;x)$ the value of $\phi(y_0,x)$ after analytic continuation along $\gamma$. Since $\phi_{\circlearrowleft}(y_0;x)$ must still satisfy the differential equation even after analytic continuation, it must be expressible in the original basis $\phi(y_0;x)$, and so there must be a constant $r\times r$ matrix $\rho_{y_0}(\gamma)$ -- called the \emph{monodromy matrix} -- such that
\beq
\phi_{\circlearrowleft}(y_0;x) = \rho_{y_0}(\gamma)\phi(y_0;x)\,.
\eeq
The subindex on $\rho$ denotes the local basis in which the monodromy is
expressed.  Changing the local basis from $y_0$ to $y_1$ amounts to conjugating the monodromy matrix by the matching matrix from eq.~\eqref{eqn:calcmatchingmatrices}:
\beq
\rho_{y_0}(\gamma) =R_{y_0,y_1}\rho_{y_1}(\gamma)R_{y_0,y_1}^{-1}\,.
\eeq

\begin{figure}
	\begin{center}\includegraphics{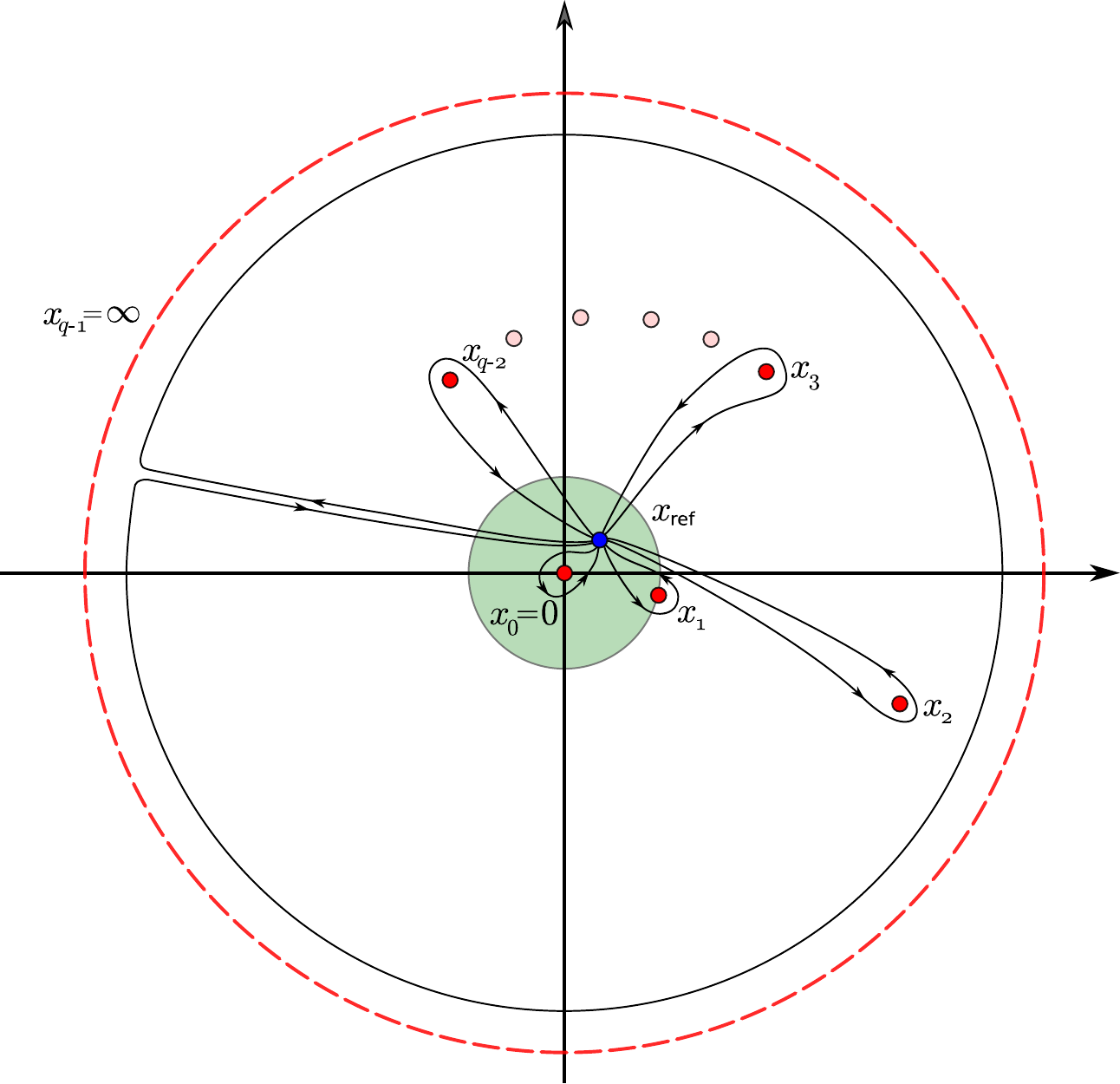}\end{center}
	\caption{Paths for the analytic continuation and the calculation of the monodromies for a differential operator with $q$ regular singular poles, one of which at zero and one at infinity. The (blue) reference point $x_\mathsf{ref}$ has been conveniently chosen in the (green) disc $D_{x_0}$ around $x_0=0$.}
	\label{fig1}
\end{figure}
Let us now explain how we can find the monodromy matrix for a collection of loops $\gamma_{x_k}$ encircling the singular points $x_k$ in the counter-clockwise direction, but no other singular points (see figure~\ref{fig1}). We focus for now on the singular point $x_0$. We can fix a reference point $x_\mathsf{ref} \in D_{x_0}$, and we can also choose the loop $\gamma_{x_0}$ to lie entirely inside $D_{x_0}$. The effect of the analytic continuation on $\phi(x_0;x_\mathsf{ref})$ is easy to describe. Indeed, consider for example the local solution in eq.~\eqref{eq:Frob_log_solution}. Since $\sigma_k(x_0;x)$ is holomorphic at $x_0$, its value does not change when it is analytically continued along $\gamma_{x_0}$. So, only the logarithms $\log(x-x_0)$ and the non-integer powers $(x-x_0)^{s_1}$ are affected. Hence, we find
\beq\bsp
\phi_{i,\circlearrowleft}&(x_0;x_\mathsf{ref}) =\\
& =e^{2\pi i s_1}(x_{\mathsf{ref}}-x_0)^{s_1}\sum_{k=1}^{i} \frac{1}{(k-1)!}\left[\log(x_{\mathsf{ref}}-x_0) + 2\pi i\right]^{k-1}\, \sigma_{k}(x_0;x_\mathsf{ref})\,.
\esp\eeq
In this way, we can work out the entries of the \emph{local} monodromy matrices $\rho_{x_0}(\gamma_{x_0})$ for each singular point $x_0$.

For a singular point $x_k\neq x_0$, we can decompose the loop $\gamma_{x_k}$ based at $x_{\mathsf{ref}}\in D_{x_0}$ into a segment from $x_{\mathsf{ref}}$ to a new reference point $\tilde{x}_{\mathsf{ref}}\in D_{x_k}$, followed by a loop $\tilde{\gamma}_{x_k}$ based at $\tilde{x}_{\mathsf{ref}}$ around $x_k$ and lying entirely inside $D_{x_k}$, and finally we add the segment from $\tilde{x}_{\textrm{ref}}$ to ${x}_{\mathsf{ref}}$ in the opposite direction. Correspondingly, we can then express the monodromy matrix as
\beq\label{eqn:monodromytranslation}
\rho_{x_0}(\gamma_{x_k}) = R_{x_0,x_k}\rho_{x_k}(\tilde{\gamma}_{x_k})R_{x_0,x_k}^{-1}\,,
\eeq
and the local monodromy matrix $\rho_{x_k}(\tilde{\gamma}_{x_k})$ can be determined as described previously. 

Following this procedure, we can associate a monodromy matrix to every singular point $x_k\in\Sigma$.
The set of global monodromy matrices around all singularities but one\footnote{The reason for the monodromy group being generated by one generator less than the number of poles is the following: a loop enclosing no singularity will lead to a trivial monodromy, which is represented as unit matrix. Accordingly, the appropriately ordered product of monodromy matrices with respect to all poles should yield the unit matrix, as the corresponding contour can be deformed into the trivial loop.} expressed in the
basis of the reference neighbourhood will then generate the \emph{monodromy group}.

\paragraph{Mathematical interpretation.} 
%
The differential operator $\cL^{(r)}_x$ determines a \mbox{rank-$r$} vector bundle over $X= \mathbb{P}_{\mathbb{C}}^1\setminus \Sigma$, i.e., for each $x\in X$ the fiber $V_x$ over $x$ is an \mbox{$r$-dimensional} complex vector space, and the solution $\phi(y_0;x)$ is a basis of $V_x$ (because the solutions are linearly independent if $x$ is not a singular point) .

Let $\gamma$ in $X$ be a closed loop based at $x\in X$. The analytic continuation of $\phi(y_0;x)$ along $\gamma$ does not depend on the details of the path. More precisely, the result of the analytic continuation depends on the homotopy class of $\gamma$ in $X$ only. 
Accordingly, it is sufficient to consider the fundamental group $\pi_1(X,x)$. If we fix the basis of solutions $\phi(y_0;x)$, analytic continuation provides a group homomorphism:
\beq\bsp\label{eq:monodromy_rep}
\rho_{y_0} : \pi_1(X,x) &\,\to \textrm{GL}(V_x)\simeq \textrm{GL}_r(\mathbb{C})\,\\
\gamma&\,\mapsto \rho_{y_0}(\gamma)\,.
\esp\eeq
In other words, we can interpret analytic continuation as a representation of the fundamental  group of $X$ of loops based at $x$ in the fiber $V_x$, called the \emph{monodromy representation}. The monodromy group is then the image of $\pi_1(X,x)$ in $\textrm{GL}_r(\mathbb{C})$ under $\rho_{y_0}$. 
In the case of the punctured Riemann sphere $X= \mathbb{P}_{\mathbb{C}}^1\setminus \{x_0,\ldots,x_{q-1}\}$ the structure of the fundamental group is easy to describe: it is the free group generated by the loops $\gamma_{x_k}$, $0\le k<q-1$. Hence, we see that the monodromy group is generated by the matrices $\rho_{y_0}(\gamma_{x_k})$ with $0\le k<q-1$, which are precisely the matrices we have constructed earlier in this section.


\subsection{A class of differential equations allowing for a modular parametrisation} \label{ssec:ClassModularParametrization}
\label{sec:modular_DEQs}
After the general review in the previous subsection, we are now going to describe the class of differential equations we want to discuss in the remainder of this paper. Our starting point is a differential equation of the form~\eqref{eq:DEQ_generic_higher} satisfying the assumptions from section~\ref{sec:FIs_and_deqs}, that is, to all orders in the $\eps$-expansion $\cL_{x,\eps}^{(r)}$ and $N(x,\eps)$ only involve rational functions and MPLs. 
Here, we would like to make the following additional assumptions:
\begin{enumerate}
\item The operator $\cL_x^{(r)}$ is the $(r-1)^{\textrm{th}}$ symmetric power of a degree-two operator $\tilde{\cL}_x^{(2)}$. That is, if the solution space of $\tilde{\cL}_x^{(2)}$ is 
\beq
\textrm{Sol}(\tilde{\cL}_x^{(2)}) = \mathbb{C}\,\psi_1(x)\oplus \mathbb{C}\,\psi_2(x)\,,
\eeq
then the solution space of $\cL_x$ reads 
\beq\label{eq:Sol_L_x}
\textrm{Sol}({\cL}_x^{(r)}) = \bigoplus_{a+b=r-1}\mathbb{C}\,\psi_1(x)^a\psi_2(x)^b\,.
\eeq
\item 
	We make the following assumptions about $\tilde{\cL}_x^{(2)}$. First, we assume that all singular points of $\tilde{\cL}_x^{(2)}$ are MUM-points. We denote the holomorphic and logarithmically-divergent solutions at $x=x_0$ by $\psi_1(x)$ and $\psi_2(x)$ respectively. Second, its monodromy group, which we will call $\Gamma_2$ in the following, is conjugate to a subgroup of $\slz$ of finite index, i.e., there exists $\gamma \in \operatorname{SL}_2(\mathbb C)$ such that $\gamma\Gamma_2\gamma^{-1}$ is a subgroup of $\slz$ of finite index. 
\end{enumerate}
Note that these assumptions imply that the determinant of the Wronskian matrix,
\beq\label{eq:Det_def}
D(x) := \det \left(\begin{smallmatrix}
\psi_1(x) & \psi_2(x) \\
\psi_1'(x) & \psi_2'(x)
\end{smallmatrix}\right)\,,\qquad \psi_a'(x) = \partial_x\psi_a(x)\,,
\eeq
is a rational function of $x$.
While it may seem that these assumptions are rather restrictive, differential equations of this type cover several cases of interesting Feynman integrals. For example, they cover the case of (several) Feynman integrals associated to one-parameter families of elliptic curves ($n=2$) and K3 surfaces~\cite{Doran:1998hm} ($n=3$) where the subtopologies can be expressed in terms of MPLs without additional non-rationalisable square roots. This includes in particular the case of the equal-mass two- and three-loop banana integrals, which are going to be discussed explicitly in section~\ref{sec:bananameromorphic}.
In the remainder of this section, we present a characterisation of the space of functions that is needed to express the result.

\paragraph{The modular parametrisation for $\tilde{\cL}_x^{(2)}$.}
Let us first discuss the structure of the solution space $\textrm{Sol}(\tilde{\cL}_x^{(2)})$. We assume that $x_0=0$ is a MUM-point, and $\psi_1(x)=\phi_1(0;x)$ is holomorphic at $x=0$, while $\psi_2(x)=\phi_2(0;x)$ is logarithmically divergent. We define
\beq\label{eq:tau_def_generic}
\tau := \frac{\psi_2(x)}{\psi_1(x)}\,,\qquad q:= e^{2\pi i \tau}\,.
\eeq
We can always choose a basis of $\textrm{Sol}(\tilde{\cL}_x^{(2)})$ such that $\Im\tau>0$ for $x\in X=\mathbb{P}^1_{\mathbb{C}}\setminus\Sigma$, and so $\tau\in \HP:=\{\tau\in\mathbb{C}:\Im\tau>0\}$. We see that the change of variable from $x$ to $q$ is holomorphic at $x=0$. It can be inverted (at least locally, as a power series) to express $x$ in terms of $q$. This series will converge for $|q|<1$, or equivalently, for all $\tau\in\HP$. It may, however, diverge whenever $x$ approaches a singular point of the differential equation. 

Let us analyse how the monodromy group $\Gamma_2$ acts in the variable $\tau$. Consider $\gamma\in\pi_1(X,x)$. We know that if we analytically continue $\psi(x) = (\psi_2(x),\psi_1(x))^T$ along $\gamma$, then the solution changes to $\psi_{\circlearrowleft}(x) = \tilde{\rho}_0(\gamma)\psi(x) = \left(\begin{smallmatrix}a& b\\c&d \end{smallmatrix}\right)\psi(x)$, for some $\left(\begin{smallmatrix}a& b\\c&d \end{smallmatrix}\right)\in \Gamma_2\subseteq \slz$. It is then easy to see that the monodromy group acts on $\tau$ via M\"obius transformations:
\beq\label{eq:Moebiusaction}
\tau_{\circlearrowleft} = \frac{a\tau+b}{c\tau+d} =: \gamma\cdot \tau\,.
\eeq
Clearly, $x$ should not change under analytic continuation (because $x$ is a rational function, and thus free of branch cuts), and so $x(\tau)$ must be invariant under the action of the monodromy group:
\beq\label{eq:modular_functions_def}
x\left( \frac{a\tau+b}{c\tau+d}\right) = x(\tau) \,, \textrm{ for all } \left(\begin{smallmatrix}a& b\\c&d \end{smallmatrix}\right)\in \Gamma_2\,.
\eeq
A (meromorphic) function from $\HP$ to $\mathbb{C}$ that satisfies eq.~\eqref{eq:modular_functions_def} is called a \emph{modular function} for  $\Gamma_2$. If we define $h_1(\tau) := \psi_1(x(\tau))$, then $h_1$ changes under analytic continuation according to:
\beq\bsp\label{eq:h1_transform}
h_1\left( \frac{a\tau+b}{c\tau+d}\right) &\,= h_1(\tau)_{\circlearrowleft} = \psi_1(x(\tau))_{\circlearrowleft} \\
&\,= c\,\psi_2(x(\tau))+d\, \psi_1(x(\tau)) = (c\tau+d)\,h_1(\tau)\,.
\esp\eeq
A holomorphic function from $\eHP := \HP\cup \mathbb{P}^1_{\mathbb{Q}}$ to $\mathbb{C}$ that satisfies eq.~\eqref{eq:h1_transform} is called a \emph{modular form of weight 1} for  $\Gamma_2$ (see section~\ref{sec:modular_review}). 
We see that whenever $\Gamma_2\subseteq \slz$, the differential equation $\tilde\cL_x^{(2)}u(x)=0$ admits a \emph{modular parametrisation}, by which we mean that there is a modular function $x(\tau)$ and a modular form $h_1(\tau)$ of weight 1 for $\Gamma_2$ such that
\beq
\textrm{Sol}(\tilde\cL_x^{(2)}) = h_1(\tau)\big(\mathbb{C} \oplus \mathbb{C}\tau\big)\,.
\eeq

\paragraph{Mathematical interpretation.} The solutions of $\tilde\cL_x^{(2)}$ define multivalued holomorphic functions on $X$. We can ask the question: On which surface these functions are single-valued holomorphic functions? This can be realised when expressing the solutions in the new variable $\tau\in\HP$. The monodromy group $\Gamma_2\subset \textrm{GL}_2(\mathbb{C})$ associated to the differential operator acts on $\HP$ via M\"obius transformations. We can identify the space on which $\psi_1(x(\tau))=h_1(\tau)$ is holomorphic and single-valued with $\HP$. Let us mention, however, that the action of $\Gamma_2$ on $\HP$ factors through its projection $\overline{\Gamma}_2$ on $\textrm{PGL}_2(\mathbb{C})$, where we have identified matrices that only differ by a non-zero multiplicative constant. Indeed, it is easy to see that $\left(\begin{smallmatrix}a & b\\ c& d\end{smallmatrix}\right) \in \textrm{GL}_2(\mathbb{C})$ and $\lambda\left(\begin{smallmatrix}a & b\\ c& d\end{smallmatrix}\right) \in \textrm{GL}_2(\mathbb{C})$ lead to the same M\"obius transformation in eq.~\eqref{eq:Moebiusaction}, for all $\lambda\in \mathbb{C}^*$. The action on $h_1(\tau)$, however, may be sensitive to $\lambda$.

Different points $\tau$ in $\HP$ correspond to the same value of $x$ in our original space $X$, and the points that are identified are precisely those related by the action of the monodromy group $\Gamma_2$. It is thus natural to consider the space $Y_{\Gamma_2}=\Gamma_2\backslash\HP$. The function $x(\tau)$ defines a holomorphic map from $\HP$ to $X$, and it is a bijection between  $Y_{\Gamma_2}$ and $X$. The punctured Riemann sphere can be compactified to $\overline{X}\simeq \mathbb{P}^1_{\mathbb{C}}$ by adding the singularities. Similarly, we can compactify $Y_{\Gamma_2}$ to the space $X_{\Gamma_2} = \Gamma_2\backslash \eHP$, with $\eHP:=\HP\cup\mathbb{P}^1_\mathbb{Q}$ the extended upper half-plane. The pre-images of the singular points at the orbits $\Gamma_2\backslash \mathbb{P}^1_{\mathbb{Q}}$ are called the \emph{cusps} of $X_{\Gamma_2}$ (see section~\ref{sec:modular_review}).

Let us finish this interlude by mentioning that $X_{\Gamma_2}$ and $Y_{\Gamma_2}$ are not manifolds, but \emph{orbifolds}. Loosely speaking, an $n$-dimensional manifold is a topological space that locally `looks like' $\mathbb{R}^n$. Similarly, an $n$-dimensional orbifold locally looks like a quotient $\Gamma_2\backslash \mathbb{R}^n$. This has a bearing on how we choose coordinates on $X_{\Gamma_2}$ and $Y_{\Gamma_2}$. Indeed, the chosen coordinate in a neighbourhood of $\tau\in\eHP$ will depend on whether $\tau$ has a non-trivial stabilizer $(\Gamma_2)_{\tau} = \{\gamma\in\Gamma_2: \gamma\cdot\tau=\tau\}$. We will discuss this in more detail in section~\ref{sec:modular_review}.

\paragraph{The modular parametrisation for ${\cL}_x^{(r)}$.} Since the solution spaces of $\tilde{\cL}_x^{(2)}$ and ${\cL}_x^{(r)}$ are related, it is not surprising that all the symmetric powers of $\tilde{\cL}_x^{(2)}$ will also admit a modular prarametrisation. If we define $\tau$ again by eq.~\eqref{eq:tau_def_generic}, we have
\beq
\textrm{Sol}(\cL_x^{(r)}) = h_1(\tau)^{r-1}\,\bigoplus_{s=0}^{r-1}\mathbb{C}\tau^s\,.
\eeq
Since the elements of $\textrm{Sol}(\cL_x^{(r)})$ are the maximal cuts of the Feynman integral $I(x,0)$ in $d=d_0$ dimensions, we see that the maximal cuts are linear combinations of a modular form of weight $r-1$ for $\Gamma_2$, multiplied by additional powers of $\tau$. More generally, $h_1(\tau)^{r-1}$ is also a modular form of weight $r-1$ for the monodromy group of $\cL_x^{(r)}$. 
The maximal cuts of the other master integrals are obtained by differentiation. Using
\beq
\label{eqn:inverseJacobian}
\partial_x = \frac{\mathcal{D}(\tau)}{h_1(\tau)^2}\,\partial_{\tau}\,, \qquad \mathcal{D}(\tau):=D(x(\tau))\,,
\eeq
we see that the maximal cuts of the other master integrals also involve the derivatives of $h_1(\tau)$ (with respect to $\tau$). As we will see in the next section, the latter are no longer modular forms, but they give rise to so-called quasi-modular forms.

Let us now return to the original inhomogeneous differential equation. To solve this equation in terms iterated integrals, we can turn it into a first-order inhomogeneous system for the vector $\cI(x,\eps)$ and proceed similar to  section~\ref{sec:FIs_and_deqs}. The entries of the Wronskian matrix of $\cL_x^{(r)}$ can be expressed in terms of $\psi_1(x)$ and $\psi_2(x)$:
\begin{align}
&W_r(x)_{ij} = \\
\nonumber&\,=\binom{r-1}{i-1}^{-1}\,\sum_{k=0}^{j-1}\binom{r-j}{i-k-1}\binom{j-1}{k} \psi_1(x)^{r-i-j+k+1}\,\psi_2(x)^{j-k-1}\psi_1'(x)^{i-k-1}\,\psi_2'(x)^{k}\,,
\end{align}
with determinant
\beq
\det W_r(x) = D(x)^{r(r-1)/2}\,\prod_{k=1}^{r-1}\frac{k!}{k^k}\,.
\eeq
Note that $W_r(x)$ is rational whenever $D(x)$ is.
The iterated integrals that arise from expanding the path-ordered exponential in eq.~\eqref{eq:path_ordered} will involve differential one-forms of the form
\beq\label{eq:diff_form_sample}
dx\,R(x)\,\psi_1(x)^{\alpha}\,\psi_2(x)^{\beta}\psi_1'(x)^{\gamma}\,\psi_2'(x)^{\delta}\,,
\eeq
where $R(x)$ is a rational function and $\alpha$, $\beta$, $\gamma$, $\delta$ are positive integers. For applications, one is often interested in knowing a basis of special functions and associated differential forms by integration of which all the iterated integrals can be built. In the case $\alpha=\beta=\gamma=\delta=0$, the answer to this question is well known, and the corresponding basis of special functions are the multiple polylogarithms in eq.~\eqref{eq:MPL_def}. In the case where at least one of the exponents is non-zero, we can change variables to $\tau$. The Jacobian is (cf.~eq.~\eqref{eqn:inverseJacobian}) 
\beq\label{eq:jacobian}
dx = \frac{h_1(\tau)^2}{\mathcal{D}(\tau)}\,d\tau\,.
\eeq
Since $D(x)$ is a rational function, we can eliminate $\psi_2'(x)$. We can also eliminate $\psi_2(x)$ in favour of $\psi_1(x)$ and $\tau$. Hence, it is sufficient to consider differential forms of the form
\beq\label{eq:diff_form}
d\tau\,R(x(\tau))\,h_1(\tau)^m\,h_1'(\tau)^s\,\tau^p\,,
\eeq
where $m,s,p\in\mathbb{Z}$, with $s,p$ positive. We can write 
\beq
\frac{1}{p!}\tau^p = \int_{i\infty}^\tau d\tau_1\int_{i\infty}^{\tau_1}d\tau_2\cdots \int_{i\infty}^{\tau_{p-1}}d\tau_p\,,
\eeq
where the divergence at $i\infty$ is regulated by interpreting the lower integration boundary as a tangential base-point~ \cite{Brown:mmv}. It is therefore sufficient to consider differential forms with $p=0$.
One of the main tasks of the remainder of this paper is to answer this question for the class of iterated integrals in eq.~\eqref{eq:diff_form_sample}. More precisely, we will give a constructive proof of the following result.
\begin{thm}\label{thm:section2}
	With assumptions and notations as in section \ref{ssec:ClassModularParametrization}, at every order in $\eps$, the solution of the differential equation~\eqref{eq:DEQ_generic_higher} can be written as a $\mathbb{C}$-linear combination of iterated integrals of meromorphic modular forms for the monodromy group $\Gamma_2$.
\end{thm}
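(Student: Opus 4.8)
The plan is to trace the iterated integrals produced by the path-ordered exponential~\eqref{eq:path_ordered} back to their integration kernels and to show that, after passing to the variable $\tau$, every kernel is a meromorphic modular form for $\Gamma_2$, up to total derivatives that can be removed by integration by parts. Concretely, the $\eps$-expansion of~\eqref{eq:path_ordered} produces iterated integrals whose one-forms are exactly those of~\eqref{eq:diff_form_sample}, and the change of variables already carried out in~\eqref{eq:jacobian} and~\eqref{eq:diff_form}, together with the reduction to $p=0$, reduces the task to the one-forms $\omega = R(x(\tau))\,h_1(\tau)^m\,h_1'(\tau)^s\,d\tau$ with $m\in\bZ$ and $s\ge 0$. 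After conjugating so that $\Gamma_2\subseteq\slz$, the genus-zero property of $X_{\Gamma_2}$ (which follows from the existence of the Hauptmodul $x(\tau)$ identifying $X_{\Gamma_2}$ with $\mathbb{P}^1_{\mathbb{C}}$) shows that the field of modular functions is $\bC(x)$, so that $R(x(\tau))\,h_1(\tau)^k$ is a meromorphic modular form of weight $k$ for every rational $R$. The only obstruction to modularity of $\omega$ is therefore the factor $h_1'^s$.

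I would isolate this obstruction by introducing $g := h_1'/h_1$. Differentiating the weight-one transformation~\eqref{eq:h1_transform} shows that $g$ is a quasi-modular form of weight $2$ and depth $1$, namely $g(\gamma\cdot\tau) = c(c\tau+d) + (c\tau+d)^2 g(\tau)$ for $\gamma=\abcd\in\Gamma_2$, i.e.\ the $\Gamma_2$-analogue of the quasi-modular Eisenstein series $E_2$. A short computation then yields the two structural identities I rely on: for any meromorphic modular form $f$ of weight $k$, the Serre-type derivative $\vartheta_k f := f' - k\,g\,f$ is again modular of weight $k+2$, and $g'-g^2$ is modular of weight $4$ (the analogue of Ramanujan's relation for $E_2$). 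Writing $\omega = \phi\,g^s\,d\tau$ with $\phi := R(x(\tau))\,h_1^{m+s}$ a meromorphic modular form of weight $m+s$, these identities let me express, by induction on the depth $s$,
\begin{equation}
(m+2s-1)\,\phi\,g^s = \partial_\tau\!\big(\phi\,g^{s-1}\big) - (\vartheta_{m+s}\phi)\,g^{s-1} - (s-1)\,(\varphi\phi)\,g^{s-2}\,,
\end{equation}
where $\varphi := g'-g^2$ is modular of weight $4$. Modulo the total derivative $\partial_\tau(\phi g^{s-1})$, the right-hand side has strictly smaller depth in $g$, so iterating reduces $\phi g^s$ to a genuine meromorphic modular form plus total derivatives.

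Inside an iterated integral, a letter that is a total derivative $\partial_\tau\Psi\,d\tau = d\Psi$ is exact and is removed by integration by parts, at the cost of shorter iterated integrals and of boundary values regulated by tangential base-points at the cusps, exactly as for the $\tau^p$ reduction already used before the statement. Applying this depth-lowering recursion to every letter, and finally invoking the decomposition theorem of section~\ref{sec:mero_sec} (proved in appendix~\ref{app:mathy}), which splits the space of meromorphic modular forms of each weight into total derivatives and a distinguished basis of primitive forms that are not derivatives, I would conclude that every iterated integral coming from~\eqref{eq:path_ordered} is a $\bC$-linear combination of iterated integrals whose kernels are primitive meromorphic modular forms for $\Gamma_2$, which is the assertion of Theorem~\ref{thm:section2}.

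The main obstacle is the resonant case $m+2s-1=0$, where the recursion coefficient above vanishes and $\phi\,g^s$ cannot be traded for a total derivative; these are precisely the forms that survive as primitive, non-exact kernels, and controlling them is what forces one to prove the decomposition theorem rather than merely quote the classical holomorphic splitting. The genuinely hard point is that the kernels are \emph{meromorphic}, so they carry poles both in the interior of $\HP$ and at the cusps; establishing a finite primitive basis and a well-defined regularization in the presence of these poles, together with verifying that the depth- and length-lowering recursions terminate, is the technical heart of the argument and is relegated to appendix~\ref{app:mathy}.
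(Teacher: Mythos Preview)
Your depth-reduction recursion is correct and is essentially a hands-on version of the algorithm from ref.~\cite{AMBP_2012__19_2_297_0} underlying Theorem~\ref{thm:main}. However, you misidentify what the resonant case $m+2s-1=0$ produces. In weight--depth language this is $s=k-1$, and these forms are precisely the summand $\cM_{2-k}(\Gamma_2,R_S)\,G_2^{k-1}$ in Theorem~\ref{thm:main}. That summand is \emph{quasi}-modular of maximal depth, not modular, and the decomposition theorem does \emph{not} absorb it into total derivatives or into $\widetilde{\cM}_k$; it isolates it as a separate direct summand. If such terms occurred, the conclusion of Theorem~\ref{thm:section2} would be false as stated, so invoking the decomposition theorem here does not close the argument.

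The missing step, which the paper supplies, is a power-counting bound showing that the resonant case never arises from the differential equation. One goes back to the primitive one-forms~\eqref{eq:diff_form_sample} with non-negative exponents $\alpha,\beta,\gamma,\delta$, changes to $\tau$, writes $h_1'=A_1G_2+A_3$ with $A_1\in M_1(\Gamma_2)$, $A_3\in M_3(\Gamma_2)$, and tracks the highest power of $G_2$: it is $\gamma+\delta$, while the total modular weight is $k=\alpha+\beta+\gamma+\delta+2$. Hence the depth $s\le\gamma+\delta$ satisfies $k\ge s+2>s+1$, so $s=k-1$ is excluded and your recursion never hits the resonance. Starting from the reduced form~\eqref{eq:diff_form} with unconstrained $m\in\mathbb{Z}$ loses exactly this information; you need to return to~\eqref{eq:diff_form_sample} to recover it. A minor second omission: you do not treat the inhomogeneous term $N(x,\eps)$. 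The paper observes that each MPL kernel $dx/(x-a_i)$ becomes, under~\eqref{eq:jacobian}, a weight-two meromorphic modular form $h_1(\tau)^2\,d\tau/[\mathcal{D}(\tau)(x(\tau)-a_i)]$, so the inhomogeneity already lives in the desired space.
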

We will give a constructive proof of Theorem \ref{thm:section2} in section~\ref{sec:mero_sec}. In addition, in section~\ref{sec:mero_sec} we will completely classify the relevant iterated integrals and give an explicit basis.

\vspace{1mm}\noindent



\section{Review of (quasi-)modular forms and their iterated integrals}
\label{sec:modular}
The previous section has shown that there are certain classes of Feynman integrals whose differential equations admit a modular parametrisation. This is to say that their maximal cuts in $D=d_0$ dimensions are linear combinations of derivatives of modular forms multiplied by powers of $\tau$, and the higher orders in $\eps$ of the maximal cuts and the full uncut integral can be expressed in terms of iterated integrals of such functions. The aim of this section is to briefly review the theory of holomorphic modular forms and their iterated integrals. In the next section we will extend this to include iterated integrals of meromorphic modular forms.

\subsection{The modular group $\slz$ and its subgroups}
\label{sec:modular_review}
We start by reviewing some general facts about (certain) subgroups of the modular group $\slz$. For a review, see ref.~\cite{diamond2005first}, and references therein. Let $\Gamma$ denote a subgroup of $\slz$ of finite index, i.e., the quotient $\Gamma\backslash\slz$ is finite (which means, intuitively, the we can cover $\slz$ by a finite number of copies of $\Gamma$). In the following we denote the index of $\Gamma$ in $\slz$ by
\beq
[\slz:\Gamma] = \left|\Gamma\backslash\slz\right| < \infty\,.
\eeq
An important example of finite-index subgroups are the \emph{congruence subgroups of level $N$}, with $N$ a positive integer, defined as those subgroups $\Gamma$ that contain the principal congruence subgroups $\Gamma(N) = \{\abcd \in \slz \, :\abcd = \left(\begin{smallmatrix}1&0\\0&1\end{smallmatrix}\right)\bmod N\}$. An important example of congruence subgroup are the groups 
\beq\label{eq:Gamam1(N)_def}
\Gamma_1(N) := \{\abcd \in \slz \, :\abcd = \left(\begin{smallmatrix}1&*\\0&1\end{smallmatrix}\right)\bmod N\}\,.
\eeq
In the following we will keep the discussion general, and we do not restrict ourselves to congruence subgroups, unless specified otherwise.

The modular group and its subgroups naturally act on the extended upper half-plane $\eHP = \HP\cup \mathbb{P}^1_\mathbb{Q}$ by M\"obius transformations via
\begin{equation}
  \label{eq:modulartrafo}
  \gamma \cdot \tau = \frac{a\tau + b}{c\tau + d},\quad \gamma = \abcd\in\slz.
\end{equation}
$\Gamma$ acts separately on $\HP$ and
$\bP^1_\bQ$, and decomposes $\bP^1_\bQ$ into disjoint orbits, called \emph{ cusps of }$\Gamma$:\footnote{By abuse of language, one also often calls the elements of $\bP^1(\bQ)$ cusps.}
\begin{equation}
	S_\Gamma:=\Gamma\setminus\bP^1_\bQ\,.
\end{equation}
The number of cusps of $\Gamma$ is always finite and we denote it by $\eps_{\infty}(\Gamma) := \#S_{\Gamma} < \infty$. The stabilizer of a cusp $s\in\mathbb{P}^1_\mathbb{Q}$ is generated by an element of the form $\pm T^{h} = \pm\left(\begin{smallmatrix} 1& h\\0&1\end{smallmatrix}\right)$, for some integer $h$ called the \emph{width} of the cusp. In case the stabilizer of the cusp $s$ contains an element $-T^h\in\Gamma_{s}$, the cusp is called \emph{irregular}, otherwise it is \emph{regular}. The numbers of regular and irregular cusps are denoted by $\eps_r(\Gamma)$ and $\eps_i(\Gamma)$ respectively. Note that for every cusp $s\in \mathbb{Q}$ there exists a $\gamma\in\slz$ such that $\gamma \cdot s=i\infty$.

A point $\tau\in\fH$ is called an \emph{elliptic point for $\Gamma$} if $\tau$ has a non trivial stabilizer group in $\Gamma$:
\beq
\Gamma_{\tau} := \{\gamma\in\Gamma: \gamma\cdot \tau = \tau\}\,.
\eeq
One can show that $\Gamma_\tau$ is always a finite-cyclic group. If $\Gamma_{\tau}$ is cyclic of order $n$, then $\tau$ is called an elliptic point of order $n$. $\slz=\Gamma(1)$ has exactly two elliptic points, $i$ and $\rho :=  e^{2\pi i/3}$ in its fundamental domain $\mathcal{D}_1\cup \mathcal{D}_2\cup \mathcal{D}_3$, with
\beq\bsp
\mathcal{D}_1 &\,:= \Big\{\tau\in \HP: |\tau|>1\textrm{ and } |\Re\tau|<{\frac{1}{2}}\Big\}\,,\\
\mathcal{D}_2 &\,:=  \Big\{\tau\in \HP: |\tau|\ge1\textrm{ and } \Re\tau={-\frac{1}{2}}\Big\}\,,\\
\mathcal{D}_3 &\,:=  \Big\{\tau\in \HP: |\tau|=\textrm{ and } {-\frac{1}{2}}<\Re\tau\ge 0\Big\}\,.
\esp\eeq
They are of order two and three respectively,
\beq
 \Gamma_i \simeq \mathbb{Z}/2\mathbb{Z} \text{~~~and~~~}  \Gamma_{\rho} \simeq \mathbb{Z}/3\mathbb{Z}\,,\qquad \rho := e^{2\pi i/3}\,.
\eeq
Every elliptic point is $\slz$-equivalent to either $i$ or $\rho:=e^{2\pi i/3}$. The number of elliptic points of order two or three of $\Gamma$ will be denoted by $\eps_2(\Gamma)$ and $\eps_3(\Gamma)$. The principal congruence subgroups $\Gamma(N)$ have no elliptic points for $N>1$. The subgroups $\Gamma_1(N)$ have no elliptic points for $N>3$, while $\Gamma_1(3)$ has no elliptic points of order two and $\Gamma_1(2)$ has no elliptic points of order three.

\subsection{Modular curves}
The space of orbits $X_{\Gamma} := \Gamma\backslash\eHP$ can be equipped with the structure of a compact Riemann surface, called the \emph{modular curve for $\Gamma$}. 
The genus of $\Gamma$ is defined as the genus of $X_{\Gamma}$ and is related to the number of cusp and elliptic points of $\Gamma$:
\begin{equation}\label{eq:genus}
g = 1+d_{\Gamma}-\frac{\eps_2(\Gamma)}{4}-\frac{\eps_3(\Gamma)}{3}-\frac{\eps_{\infty}(\Gamma)}{2}\,,
\end{equation}
where we introduced the shorthand $d_{\Gamma} := \frac{[\slz:\{\pm1\}\Gamma]}{12}$. 
In the remainder of this paper we are only concerned with the case where $\Gamma$ has genus zero. It is known that $\Gamma_1(N)$ an $\Gamma(N)$ have genus zero for $N\le12$ and $N\le5$ respectively. In particular, the group $\Gamma_1(6)$ relevant to the equal-mass sunrise and banana graphs has genus zero. A complete list of all genus zero subgroups can be found in refs.~\cite{YifanYang,allgenus0}.

In the following it will be important to know how we can define local coordinate charts on the Riemann surface $X_{\Gamma}$. We recall that $X_{\Gamma}$ is an orbifold, and the points of $X_{\Gamma}$ are equivalence classes $[\tau] = \{\gamma\cdot \tau:\gamma\in\Gamma\}$. Let $P=[\tau_0]\in X_{\Gamma}$. To define a local coordinate $z$ such that $z(P)=0$ in a neighbourhood of $P$, we need to distinguish three cases:
\begin{itemize} 
\item If $\tau_0$ is an elliptic point of order $h$, a local coordinate is defined by $z=(\tau-\tau_0)^h$.
\item If $\tau_0$  is a cusp of width $h$, such that $\gamma\cdot \tau_0=i\infty$, a local coordinate is defined by $z= e^{2\pi i(\gamma\cdot\tau)/h'}$, with $h'=h$ is $\tau_0$ is a regular cusp, and $h'=2h$ otherwise.
\item If $\tau_0$ is neither a cusp nor an elliptic point, $z=\tau-\tau_0$ is a good local coordinate.
\end{itemize}

The field of meromorphic functions of $X_{\Gamma}$ is isomorphic to the field $\cM_0(\Gamma)$ of modular functions, i.e., meromorphic functions $f:\eHP\to \bC$ that satisfy
\begin{equation}\label{eq:modular_function}
   f\left(\frac{a\tau+b}{c\tau+d}\right) = f(\tau)\,,\qquad \forall \left(\begin{smallmatrix} a& b\\c& d\end{smallmatrix}\right)\in \Gamma\,.
\end{equation}
For every meromorphic function, we denote by $\nu_P(f)\in \mathbb{Z}$ the \emph{order of vanishing at $P$}, i.e., $\nu_P(f)>0$ ($<0$) if $f$ has a zero (pole) of order $|\nu_P(f)|$ at $P$. If $z$ denotes the local coordinate introduced above, we have $f(\tau) = A\,z^{\nu_P(f)} + \mathcal{O}(z^{\nu_P(f)+1})$, with $A\neq 0$.

If $X_{\Gamma}$ has genus zero, the field of meromorphic functions on $X_{\Gamma}$ has a single generator, $\cM_0({\Gamma})\simeq \mathbb{C}(\tz)$, for some $\tz\in \cM_0({\Gamma})$ called a \emph{Hauptmodul}. Every modular function is a rational function in the Hauptmodul $\tz$. If $h$ is the width of the infinite cusp, then we can choose the Hauptmodul to have the $q$-expansion~\cite{YifanYang}
\beq \label{eqn:Hauptmodul}
\tz(\tau) = q^{-1/h} + \sum_{n\ge 0} a_0\,q^{n/h}\,,\qquad q= e^{2\pi i\tau}\,.
\eeq
In the following we always assume that such a Hauptmodul $\tz$ has been fixed.

\subsection{Review of (quasi-)modular forms}
\label{ssec:mfcs}

\subsubsection{Meromorphic modular forms}
Let $k$ be an integer,  $\Gamma\subseteq \slz$.
We define the action of weight $k$ of $\Gamma$ on a function $f:\eHP\to \bC$ by 
\beq
f[\gamma]_k(\tau) := (c\tau+d)^{-k}\,f(\gamma\cdot\tau)\,,\qquad \gamma=\left(\begin{smallmatrix}a&b\\c&d\end{smallmatrix}\right)\in\Gamma\,.
\eeq
 A weakly modular form of weight $k$ for $\Gamma$ is a function that is invariant under this $\Gamma$-action,
 \beq
 \label{eq:defmf}
f[\gamma]_k(\tau) = f(\tau)\,,\qquad \forall \gamma\in\Gamma\,.
 \eeq
A \emph{meromorphic modular form of weight $k$ for $\Gamma$} is a weakly modular form $f$ of weight $k$ for $\Gamma$ that is meromorphic on $\HP$ and at every cusp, i.e., it admits a $q$-expansion of the form
\beq
f[\gamma]_k(\tau) = \sum_{n\ge n_0}a_n\,q^{n/h}\,, \qquad \forall\gamma=\left(\begin{smallmatrix}a&b\\c&d\end{smallmatrix}\right)\in\slz\,,
\eeq
where $h$ is the width of the cusp $\frac{a}{c}$. We denote the $\bC$-vector space of meromorphic modular forms of weight $k$ for $\Gamma$ by $\cM_k(\Gamma)$, and we write $\cM(\Gamma) := \bigoplus_k\cM_k(\Gamma)$. In particular, $\cM_0(\Gamma)$ is the field of modular functions for $\Gamma$ (see eq.~\eqref{eq:modular_function}).
Holomorphic modular forms are defined in an analogous manner. The $\bC$-vector space of holomorphic modular forms of weight $k$ for $\Gamma$ is denoted by $M_k(\Gamma)$, and we define $M(\Gamma) := \bigoplus_kM_k(\Gamma)$. Note that $M_k(\Gamma)$ is always finite-dimensional, and $\dim_{\bC} M_k(\Gamma)=0$ for $k\le 0$. In the following we refer to holomorphic modular forms simply as modular forms.

A (meromorphic) \emph{cusp form} is a (meromorphic) modular form for which $a_0=0$ for every cusp. We denote the vector space of (meromorphic) cusp forms of weight $k$ by $S_k(\Gamma)$ ($\cS_k(\Gamma)$).  The space of cusp forms $S_k(\Gamma)$ is an ideal in $M_k(\Gamma)$. The quotient is the space of Eisenstein series $E_k(\Gamma)$, and there is a direct sum decomposition
\beq
M_k(\Gamma) = E_k(\Gamma) \oplus S_k(\Gamma)\,.
\eeq

\subsubsection{Meromorphic quasi-modular forms}
In general, the derivative of a (meromorphic) modular form is no longer a modular form, but we need to introduce a more general class of functions. A \emph{meromorphic quasi-modular form of weight $k$ and depth $p$ for $\Gamma$} is a function $f:\eHP\to\bC$ that is meromorphic on $\HP$ and at the cusps, and it transforms according to 
\begin{equation}\label{eq:quasi_modular}
f[\gamma]_k(\tau) = \sum_{r=0}^{p}f_r(\tau)\left(\frac{c}{c\tau+d}\right)^r\,,\quad\gamma=\abcd\in\Gamma\,,
\end{equation}
where the $f_0,\ldots,f_p$ are meromorphic functions, with $f_p\neq 0$. Note that eq.~\eqref{eq:quasi_modular} for $\gamma=\operatorname{id}$ implies that $f_0=f$.
The $\bC$-vector space of meromorphic quasi-modular forms of weight $k$ and depth at most $p$ is denoted by $\cQ\cM_k^{\le p}(\Gamma)$. Quasi-modular forms of depth zero are precisely the modular forms. Holomorphic quasi-modular forms are defined in an analogous way, and the corresponding (finite-dimensional) vector space is denoted by $QM_k^{\le p}(\Gamma)$. Note that $\dim_{\bC}QM_k^{\le p}(\Gamma)=0$ for $k\le 0$ and $2p>k$. We also use the notations 
\beq
QM_k(\Gamma) := \bigcup_{p\ge 0}QM_k^{\le p}(\Gamma) \textrm{~~and~~} QM(\Gamma) := \bigoplus_{k}QM_k(\Gamma)\,.
\eeq
The vector spaces $\cQ\cM_k(\Gamma)$ and $\cQ\cM(\Gamma)$ are defined in a similar fashion.

The algebra of all (meromorphic) quasi-modular forms is closed under differentiation. We use the notation $\delta:=\frac{1}{2\pi i} \partial_{\tau}= q\,\partial_q$. If $f$ is a quasi-modular form of weight $k$ and depth at most $p$, then $\delta f$ has weight $k+2$ and depth at most $p+1$.

The Eisenstein series $G_2(\tau)$ of weight two is the prime example of a quasi-modular form of depth 1. The Eisenstein series are defined as\footnote{For $k=1$ the series is not absolutely convergent. Here we assume the standard summation convention for $G_2(\tau)$, cf., e.g., ref.~\cite{diamond2005first}} 
\beq
G_{2k}(\tau) = \sum_{(m,n)\in\bZ^2\setminus(0,0)}\frac{1}{(m\tau+n)^{2k}}\,.
\eeq
For $k\neq 1$, $G_{2k}(\tau)$ is a modular form of weight $2k$ for $\slz$. For $k=1$, we have
\begin{equation}\label{eq:G2_transform} 
G_2[\gamma]_2(\tau) =  (c\tau+d)^{-2} G_2(\gamma\cdot\tau)=G_2(\tau)-\frac{1}{4\pi i}\frac{c}{c\tau+d}\,,
\end{equation}
for every $\gamma= \abcd\in\slz$. 
Hence $G_{2}(\tau)$ defines a (holomorphic) quasi-modular form of weight 2 and depth 1 for $\slz$. In fact, one can show that every meromorphic quasi-modular form of depth $p$ can be written as a polynomial of degree $p$ in $G_2(\tau)$:
\beq
\cQ\cM(\Gamma) = \cM(\Gamma)[G_2(\tau)]\,.
\eeq


\subsection{Iterated integrals of holomorphic quasi-modular forms}
The previous discussion makes it clear that the functions $f(\tau) := R(x(\tau))\,h_1(\tau)^m\,h_1'(\tau)^s$ in eq.~\eqref{eq:diff_form} (with $p=0$) are a meromorphic quasi-modular forms of weight $m+3s$ and depth at most $s$. Hence, we see that the iterated integrals encountered at the end of section~\ref{sec:DEQs} are iterated integrals of meromorphic quasi-modular forms for the monodromy group $\Gamma_2$. In the remainder of this section we give a short review of iterated integrals of holomorphic quasi-modular forms, following refs.~\cite{ManinModular,Brown:mmv}. In the next section we present the extension to the meromorphic case.

Let $h_1,\ldots, h_k$ be meromorphic quasi-modular forms for $\Gamma\subseteq \slz$ We define their iterated integral by~\cite{ManinModular,Brown:mmv}
\beq\label{eq:II_def}
I(h_1,\ldots, h_k;\tau) := \int_{i\infty}^\tau d{\tau_1}\,h_1(\tau_1)\int_{i\infty}^{\tau_1} d{\tau_2}\,h_2(\tau_2)\int_{i\infty}^{\tau_2}\cdots\int_{i\infty}^{\tau _{k-1}}d{\tau_k}\,h_k(\tau_k)\,.
\eeq
At this point we have to mention that this definition requires a regularisation of the divergence at  $\tau=i\infty$, already in the holomorphic case (at least for Eisenstein series). For the holomorphic case we follow ref.~\cite{Brown:mmv}, and interpret the lower integration boundary as a tangential base point. In the meromorphic case, the regularisation requires the use of tools from renormalisation theory, see ref.~\cite{matthes2021iterated}. We refer to refs.~\cite{Brown:mmv,matthes2021iterated} for a detailed discussion. 

The iterated integrals in eq.~\eqref{eq:II_def} are not necessarily independent, even if the $h_1,\ldots, h_k$ are linearly independent in $\cQ\cM(\Gamma)$. Rather, we have to identify a set of quasi-modular forms that are linearly independent up to total derivatives, i.e., modulo $\delta \cQ\cM(\Gamma)$ (see also the discussion in section~\ref{sec:main_thm}). 
Said differently, we need to would like to know which quasi-modular forms can be expressed as derivatives of (quasi-)modular forms. This question can be answered completely in the holomorphic case. Writing $QM_k(\Gamma) := \bigcup_{p\ge 0}QM_k^{\le p}(\Gamma)$, we have the decomposition~\cite{,ZagierModular,AMBP_2012__19_2_297_0}
\beq\label{eq:holomorphic_decomposition}
QM_k(\Gamma) = \left\{\begin{array}{ll}
\bC\,,& k = 0\,,\\
M_1(\Gamma)\,,& k=1\,,\\
M_2(\Gamma)\oplus\bC\,G_2(\tau)\,,&k=2\,,\\
M_k(\Gamma)\oplus\delta QM_{k-2}(\Gamma)\,,&k>3\,.
\end{array}\right.
\eeq
Note that the sums are direct, i.e., every holomorphic quasi-modular form of weight $k>2$ can be written modulo derivatives as a holomorphic modular form, and this decomposition is unique. The decomposition can be performed in an algorithmic way, cf.~ref.~\cite{AMBP_2012__19_2_297_0}.
In other words, modulo derivatives, $QM(\Gamma)$ is generated as a vector space by $M(\Gamma)$ and $G_2(\tau)$. Consequently, in the holomorphic case it is sufficient to consider iterated integrals of modular forms and $G_2(\tau)$~\cite{Matthes:QuasiModular}. The equivalent of the decomposition in eq.~\eqref{eq:holomorphic_decomposition} in the meromorphic case for general subgroups $\Gamma$ is currently still unknown, and results are only available for \emph{weakly holomorphic modular forms} (i.e., with poles at most at the cusps)~\cite{MR2407067} and for meromorphic quasi-modular forms for the whole modular group, $\Gamma=\slz$~\cite{matthes2021iterated}. One of the main results of this paper is the generalisation of eq.~\eqref{eq:holomorphic_decomposition} and the results of ref.~\cite{matthes2021iterated} to arbitrary subgroups of genus zero.

\section{Iterated integrals of meromorphic modular forms}
\label{sec:mero_sec}

\subsection{A decomposition theorem for meromorphic quasi-modular forms}
\label{sec:main_thm}

In this section we state and prove the generalisation of eq.~\eqref{eq:holomorphic_decomposition} for all genus-zero subgroups. The special case $\Gamma=\slz$ was proved by one of us in ref.~\cite{matthes2021iterated}, and the proof presented here is a generalisation of that proof. Before we state the main theorem in this section, we need to introduce some notation. Let $R\subset X_{\Gamma}\setminus S_{\Gamma}$ be a finite set of points which are not cusps, and let $s_0\in S_{\Gamma}$ be a cusp of $X_{\Gamma}$, and $R_{s_0} := R\cup \{s_0\}$ and $R_S:=R\cup S_{\Gamma}$. We define $\cM_k(\Gamma,R_{s_0})$ to be the sub-vector space of $\cM_k(\Gamma)$ consisting of all meromorphic modular forms of weight $k$ for $\Gamma$ with poles at most at points in $R_{s_0}$.

\begin{defi}\label{defi:Mtilde}
Define $\widetilde{\cM}_k(\Gamma,R_{s_0})$ to be the subset of those $f\in\cM_k(\Gamma,R_{s_0})$ which satisfy: 
\begin{enumerate}
\item $\nu_P(f)\ge\frac{1-k}{h_P}$, for all $P\in R$;
\item $\nu_s(f)\ge 0$, for all $s\in S_{\Gamma}\setminus\{s_0\}$;
\item $\lfloor\nu_{s_0} (f)\rfloor \ge -\dim S_k(\Gamma)$.
\end{enumerate}
\end{defi}
\noindent In the previous definition $\lfloor x \rfloor$ is the floor function, i.e., the largest integer less or equal than $x\in \mathbb{R}$.

\begin{thm}\label{thm:main}
Let $\Gamma\subseteq \slz$ have genus zero, $R_{s_0}$ as defined above. We have a decomposition
\beq\nonumber
\cQ\cM_k(\Gamma,R_S) = \left\{\begin{array}{ll}
 \delta\cQ\cM_{k-2}(\Gamma,R_S) \oplus \cM_{k}(\Gamma,R_S)\,, & \text{ for $k<2$}\,,\\
\displaystyle \delta\cQ\cM_{k-2}(\Gamma,R_S) \oplus \cM_{2-k}(\Gamma,R_S)\,G_2^{k-1}\oplus \widetilde{\cM}_k(\Gamma,R_{s_0})\,,&\text{ for $k\ge 2$}\,.
\end{array}\right.
\eeq
\end{thm}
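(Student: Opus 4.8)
The plan is to exploit the structure theorem $\cQ\cM(\Gamma)=\cM(\Gamma)[G_2]$ recalled above, which writes any $f\in\cQ\cM_k(\Gamma,R_S)$ uniquely as $f=\sum_{j=0}^p g_j\,G_2^j$ with $g_j\in\cM_{k-2j}(\Gamma,R_S)$ and $g_p\neq0$, so $p$ is the depth. I would reduce $f$ modulo $\delta\cQ\cM_{k-2}(\Gamma,R_S)$ in two stages: first lower the depth as far as the derivative allows, isolating the obstruction $\cM_{2-k}(\Gamma,R_S)\,G_2^{k-1}$; then reduce the pole orders of the remaining depth-zero (genuinely modular) piece into $\widetilde{\cM}_k(\Gamma,R_{s_0})$. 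Throughout I track pole locations to ensure they stay inside $R_S$.

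Step~1 (depth reduction). Using $\delta G_2=\tfrac{1}{4\pi^2}G_2^2+\omega_4$ with $\omega_4$ a weight-four holomorphic form, together with the anomaly $\delta a=\tfrac{w}{4\pi^2}a\,G_2+\vartheta a$ for $a$ modular of weight $w$ (with $\vartheta a$ the Serre derivative, again modular), a direct computation shows that for $a\in\cM_{k-2p}(\Gamma,R_S)$ the $G_2^p$-coefficient of $\delta(a\,G_2^{p-1})$ equals $\tfrac{k-1-p}{4\pi^2}\,a$. Hence, as long as $p\neq k-1$, I choose $a$ proportional to the top coefficient $g_p$ of $f$ and subtract $\delta(a\,G_2^{p-1})\in\delta\cQ\cM_{k-2}(\Gamma,R_S)$ to cancel the top $G_2$-power, lowering the depth by one; since $G_2$ is holomorphic on $\HP$ and at the cusps, the subtracted term has poles only where $g_p$ does, i.e.\ inside $R_S$. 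Iterating, every positive $G_2$-power is removed except at $p=k-1$, where the multiplier $k-1-p$ vanishes and the coefficient $g_{k-1}\in\cM_{2-k}(\Gamma,R_S)$ survives. This produces the summand $\cM_{2-k}(\Gamma,R_S)\,G_2^{k-1}$ (present only for $k\geq2$; for $k<2$ there is no depth $p=k-1\geq1$, so the reduction reaches depth zero outright) and leaves a residual modular form $\tilde f_0\in\cM_k(\Gamma,R_S)$.

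Step~2 (pole reduction). Here the genus-zero hypothesis and Bol's identity enter: $\delta^{k-1}$ maps weight-$(2-k)$ meromorphic modular forms to weight-$k$ ones, and since $\delta^{k-2}h\in\cQ\cM_{k-2}(\Gamma,R_S)$ for $h\in\cM_{2-k}(\Gamma,R_S)$, the forms $\delta^{k-1}h$ are \emph{exact} modular forms lying in $\delta\cQ\cM_{k-2}(\Gamma,R_S)$ and may be subtracted freely. I would reduce $\tilde f_0$ modulo $\delta^{k-1}\cM_{2-k}(\Gamma,R_S)$ by analysing principal parts: in a local uniformizer at a non-cusp point $P$ of orbifold order $h_P$, the operator $\delta^{k-1}$ annihilates exactly the monomials of order $0,\ldots,k-2$, so its image misses the Laurent window of orders $-1,\ldots,-(k-1)$ and forces the irreducible part to satisfy $\nu_P(f)\geq\frac{1-k}{h_P}$; at a cusp $\delta^{k-1}$ kills only the constant term, so cusp poles can be removed. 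Cancelling all poles simultaneously is a global question on $X_{\Gamma}\simeq\bP^1$, whose cokernel I would compute by Riemann--Roch/Serre duality: the leftover freedom is a single bounded pole at the chosen cusp $s_0$ of order $\dim S_k(\Gamma)$ together with holomorphicity at the remaining cusps. This is precisely the content of Definition~\ref{defi:Mtilde}, and it yields the two key lemmas
\beq\nonumber
\cM_k(\Gamma,R_S)=\widetilde{\cM}_k(\Gamma,R_{s_0})\oplus\delta^{k-1}\cM_{2-k}(\Gamma,R_S)\,,
\eeq
\beq\nonumber
\cM_k(\Gamma,R_S)\cap\delta\cQ\cM_{k-2}(\Gamma,R_S)=\delta^{k-1}\cM_{2-k}(\Gamma,R_S)\,,
\eeq
the second (a modular form that is a total derivative must be Bol-exact) proved by running the depth-lowering of Step~1 in reverse.

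For directness, suppose $\delta w+a\,G_2^{k-1}+b=0$ with $w\in\cQ\cM_{k-2}(\Gamma,R_S)$, $a\in\cM_{2-k}(\Gamma,R_S)$ and $b\in\widetilde{\cM}_k(\Gamma,R_{s_0})$. Passing to the associated graded for the depth filtration, the degree-$(k-1)$ component receives no contribution from $\delta w$ (the depth-raising map into depth $k-1$ is identically zero, which is exactly why $p=k-1$ obstructs Step~1), while $b$ has depth zero; reading off this component forces $a=0$. Then $b=-\delta w\in\cM_k(\Gamma,R_S)\cap\delta\cQ\cM_{k-2}(\Gamma,R_S)=\delta^{k-1}\cM_{2-k}(\Gamma,R_S)$, which meets $\widetilde{\cM}_k$ trivially by the first lemma, so $b=0$ and hence $\delta w=0$. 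I expect the main obstacle to be Step~2, specifically the global Riemann--Roch/Serre-duality computation pinning down Definition~\ref{defi:Mtilde}: correctly accounting for the orbifold factors $h_P$ at elliptic points, proving that the residual cusp pole is concentrated at the single cusp $s_0$ with order governed precisely by $\dim S_k(\Gamma)$, and controlling the tangential-base-point regularisation of the $q$-expansions at the cusps. The depth bookkeeping of Step~1 and Bol's identity are comparatively mechanical; the arithmetic of the cokernel is the crux.
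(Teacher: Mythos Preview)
Your proposal is correct and follows the same architecture as the paper: depth reduction via the $G_2$-expansion isolates the $\cM_{2-k}G_2^{k-1}$ summand (the paper phrases this equivalently via the coefficient functions $f_r$ of the quasi-modular transformation law), and the residual modular piece is reduced modulo Bol-exact forms $\delta^{k-1}\cM_{2-k}$ into $\widetilde{\cM}_k(\Gamma,R_{s_0})$ by a divisor computation on the genus-zero curve, for which the paper replaces your appeal to Riemann--Roch/Serre duality with the concrete degree formula $\deg\lfloor\operatorname{div} f\rfloor=-1-\dim S_k(\Gamma)$ for $0\neq f\in\cM_{2-k}(\Gamma)$, $k\geq 3$. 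Two small remarks: in your directness argument you should first observe that the relation $\delta w+aG_2^{k-1}+b=0$ forces $\operatorname{depth}(w)\leq k-2$ (by reading the top $G_2$-power, as the paper does) before concluding that $\delta w$ has no $G_2^{k-1}$-component, and the tangential-base-point regularisation you list as an obstacle is a red herring here since the decomposition is purely algebraic.
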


The proof is presented in appendix~\ref{app:mathy}. Theorem~\ref{thm:main} generalises the result of ref.~\cite{matthes2021iterated} to arbitrary subgroups of genus zero.
In section~\ref{sec:neat_proof} we sketch the proof for a subset of subgroups of genus zero, the so-called \emph{neat} subgroups (see Definition~\ref{defi:neat}). The proof of section~\ref{sec:neat_proof} is constructive, and allows one to perform the decomposition in Theorem~\ref{thm:main} explicitly for neat subgroups. We expect that this case covers most of the applications to Feynman integrals. Before we discuss the proof for neat subgroups, however, we review some consequences of Theorem~\ref{thm:main}.

\paragraph{Proof of Theorem~\ref{thm:section2}.}
We now show that the decomposition in Theorem~\ref{thm:main} immediately leads to a proof of Theorem~\ref{thm:section2}. We have already argued that the class of differential equations considered in section~\ref{sec:modular_DEQs} leads to iterated integrals involving the one-forms in eq.~\eqref{eq:diff_form} with $p=0$, and the functions $f(\tau) := R(x(\tau))\,h_1(\tau)^m\,h'_1(\tau)^s$ are quasi-modular forms of weight $k:=3s+m$ and depth at most $s$ for the monodromy group $\Gamma_2$. Let $f$ have poles at most at the cusps and at some finite set of points $R\subset \HP$, i.e., $f\in \cQ\cM_{k}^{\le s}(\Gamma_2,R_S)$. Theorem~\ref{thm:main} then implies that, for some fixed choice of cusp $s_{0}\in S_{\Gamma_2}$:
\begin{itemize}
\item If $k<2$, there is $h\in \cM_{k}(\Gamma_2,R_S)$ and $g\in\cQ\cM_{k-2}(\Gamma_2,R_S)$ such that $f = h+\delta g$. 
\item If $k\ge 2$, there are modular forms $\tilde{h}\in \widetilde{\cM}_{k}(\Gamma_2,R_{s_0})$, ${h}\in {\cM}_{2-k}(\Gamma_2,R_S)$ and a quasi-modular form $g\in\cQ\cM_{k-2}(\Gamma_2,R_S)$ such that $f = \tilde{h}+{h}\,G_2^{k-1}+\delta g$. 
\end{itemize}
The derivatives $\delta g$ can be trivially integrated away, and we see that we only need to consider the meromorphic modular form $\tilde{h}$ or the quasi-modular form ${h}\,G_2^{k-1}$, the latter being characterised by the fact that it has the maximally allowed depth, $s=k-1$. In order to show that Theorem~\ref{thm:section2} holds, we need to show that these quasi-modular forms with maximally allowed depth $s=k-1$ do not arise from our class of differential equations. To see this, we start from eq.~\eqref{eq:diff_form_sample} with $\alpha$, $\beta$, $\gamma$, $\delta$ positive integers, and we change variables from $x$ to $\tau$, and we trace the powers of $G_2$ that are produced along the way. The Jacobian is given in eq.~\eqref{eq:jacobian}. Moreover, we use eq.~\eqref{eqn:inverseJacobian} to obtain
\beq
\psi'_1(x) = h_1'(\tau)\,\partial_x\tau = h_1'(\tau)\,\frac{\mathcal{D}(\tau)}{h_1(\tau)^2}\,,
\eeq
and
\beq
\psi_2'(x) = \frac{\psi_2(x)\psi_1'(x)+D(x)}{\psi_1(x)} = \frac{\mathcal{D}(\tau)}{h_1(\tau)^2}\,\left(\tau\,{h_1'(\tau)}+{h_1(\tau)}\right)\,.
\eeq
Since $h_1$ is a modular form of weight 1, $h'_1$ is a quasi-modular form of weight 3 and depth at most 1, i.e., there are $A_1\in M_1(\Gamma_2)$ and $A_3\in M_3(\Gamma_2)$ such that $h_1' = A_1\,G_2 + A_3$. This gives:
\beq\bsp
dx\,&\psi_1(x)^{\alpha}\,\psi_2(x)^{\beta}\,\psi_1'(x)^{\gamma}\,\psi_2'(x)^{\delta} = \\
&\,=d\tau\,\mathcal{D}(\tau)^{\gamma+\delta-2}\,h_1(\tau)^{2+\alpha+\beta-2\gamma-2\delta}\,\tau^{\beta}\,(A_1\,G_2 + A_3)^{\gamma}\\
&\,\quad\times(\tau\,A_1(\tau)\,G_2(\tau)+\tau\,A_3(\tau)+h_1(\tau))^\delta\\
&\,=d\tau\,\mathcal{D}(\tau)^{\gamma+\delta-2}\,h_1(\tau)^{2+\alpha+\beta-2\gamma-2\delta}\,\tau^{\beta}\\
&\,\quad\times \sum_{p=0}^{\gamma}\sum_{q=0}^{\delta}\binom{\gamma}{p}\binom{\delta}{q}A_1(\tau)^{p+q}\,G_2(\tau)^{p+q}\,\tau^q\,A_3(\tau)^{\gamma-p}\,(\tau\,A_3(\tau)+h_1(\tau))^{\delta-q}\,.
\esp\eeq
The term with the highest power of $G_2$ is:
\beq
d\tau\,\mathcal{D}(\tau)^{\gamma+\delta-2}\,h_1(\tau)^{2+\alpha+\beta-2\gamma-2\delta}\,\tau^{\beta}\,A_1(\tau)^{\gamma+\delta}\,G_2(\tau)^{\gamma+\delta}\,\tau^\delta\,.
\eeq
It has depth $s=\gamma+\delta$ and weight $ k =\alpha+\beta+\gamma+\delta+2 = \alpha+\beta+s+2$. Since $\alpha$ and $\beta$ are positive integers, we have $k \ge s+2 > s+1$, and so we never reach the maximally allowed depth $s=k-1$. 

To finish the proof of Theorem~\ref{thm:section2}, we need to comment on the inhomogeneous term $N(x,\eps)$ in eq.~\eqref{eq:DEQ_generic_higher}. By assumption, the $\eps$-expansion of $N(x,\eps)$ involves at every order only sums of rational functions of $x$ multiplied by MPLs of the form $G(a_1,\ldots, a_n;x)$, with $a_i$ independent of $x$. It is easy to see that MPLs of this form can always be written as iterated integrals of modular forms for $\Gamma_2$, because
\beq
\frac{dx}{x-a_i} = \frac{h_1(\tau)^2\,d\tau}{\mathcal{D}(\tau)\,(x(\tau) - x(\tau_i))}\,,\textrm{~~with~~} x(\tau_i) = a_i\,.
\eeq
This finishes the proof of Theorem~\ref{thm:section2}.

\paragraph{Linear independence for iterated integrals.}
We have seen how Theorem~\ref{thm:main} leads to a proof of Theorem~\ref{thm:section2}, which characterises the iterated integrals that arise as solutions to a certain class of differential equations. In applications one is usually also interested in having a minimal set of of iterated integrals, i.e., a basis of linearly-independent iterated integrals. 
We now show how Theorem \ref{thm:main} yields the desired linear independence result. The crucial mathematical ingredient is a linear independence criterion for iterated integrals, which is very general and not at all limited to meromorphic modular forms. We first describe this criterion in a special case which, while being far from the most general possible statement, is sufficiently general to exhibit all essential features of the general case (for details, see ref.~\cite{DDMS}).

Suppose that $\mathcal{F}=\{f_i\}_{i\in I}$ is a family of meromorphic functions on the upper half-plane, and let $K$ be a differential subfield of the field of meromorphic functions on $\mathfrak{H}$, which contains all $f_i$. Here, `differential subfield' means that $K$ is a subfield which is closed under differentiation of meromorphic functions. The following theorem is a variant of a classical result due to Chen, \cite[Theorem 4.2]{ChenSymbol}.
\begin{thm}[{\cite[Theorem 2.1]{DDMS}}]
\label{thm:lindep}
The following assertions are equivalent.
\begin{itemize}
\item[(i)]
The family of all iterated integrals (viewed as functions of $\tau$) of the form
\[
\int^\tau_{i\infty}d\tau_1\,f_{i_1}(\tau_1)\int_{i\infty}^{\tau_1}d\tau_2\,f_{i_2}(\tau_2)\ldots \int_{i\infty}^{\tau_{n-1}}d\tau_{n}\,f_{i_n}(\tau_n) \, ,
\]
for all $n\geq 0$ and all $f_i \in \mathcal{F}$, is $K$-linearly independent.
\item[(ii)]
The family $\mathcal{F}$ is $\mathbb C$-linearly independent and we have 
\[
\partial_{\tau}(K)\cap \operatorname{Span}_\mathbb C\mathcal{F}=\{0\} \, ,
\]
where $\operatorname{Span}_\mathbb C\mathcal{F}$ denotes the vector space of all $\mathbb C$-linear combinations of functions in $\mathcal{F}$.
\end{itemize}
\end{thm}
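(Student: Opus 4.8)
The plan is to prove the two implications separately, handling (i) $\Rightarrow$ (ii) by contraposition and reserving the real work for (ii) $\Rightarrow$ (i). For (i) $\Rightarrow$ (ii) I would assume that (ii) fails and exhibit a nontrivial $K$-linear relation among the iterated integrals. If $\mathcal{F}$ is $\mathbb{C}$-linearly dependent, say $\sum_i c_i f_i = 0$ with $c_i\in\mathbb{C}$ not all zero, then by linearity of the regularised integral $\sum_i c_i\int_{i\infty}^\tau d\tau_1\, f_i(\tau_1) = 0$, a relation among length-one integrals. If instead there is $g\in K$ and constants $c_i$, not all zero, with $\partial_\tau g = \sum_i c_i f_i\in\operatorname{Span}_\mathbb{C}\mathcal{F}$, then integrating gives $g - g_{\mathrm{reg}} = \sum_i c_i\int_{i\infty}^\tau d\tau_1\, f_i(\tau_1)$, which rearranges into a nontrivial $K$-linear relation with coefficient $g - g_{\mathrm{reg}}\in K$ on the empty iterated integral $1$. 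In both cases (i) fails.

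The substance is (ii) $\Rightarrow$ (i), which I would prove by contradiction using a double minimality argument. Index iterated integrals by words $w = f_{i_1}\cdots f_{i_n}$ in the letters of $\mathcal{F}$, writing $I_w$ for the corresponding integral, $|w|$ for its length, and $\hat w$ for $w$ with its first letter deleted; the basic identity is $\partial_\tau I_w = f_{i_1} I_{\hat w}$ (and $\partial_\tau I_\emptyset = 0$). Suppose a nontrivial relation $\Phi = \sum_w a_w I_w = 0$ with $a_w\in K$ exists, and choose one for which $N := \max\{|w| : a_w\neq 0\}$ is minimal and, among those, the number of length-$N$ words is minimal. One checks $N\ge 1$, since $N=0$ forces $a_\emptyset = 0$. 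After dividing by a nonzero top coefficient $a_{w_0}$ (legitimate as $K$ is a field) I may assume $a_{w_0} = 1$ for some $|w_0| = N$.

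The key step is to differentiate. Collecting the coefficient of each $I_u$ gives
\[
0 = \partial_\tau\Phi = \sum_u\Big(\partial_\tau a_u + \sum_i a_{iu}\,f_i\Big)I_u\,,
\]
a new $K$-linear relation whose coefficients lie in $K$ because $K$ is a differential field. For $|u| = N$ the shift term is absent (as $|iu| = N+1$), so the length-$N$ coefficient is $\partial_\tau a_u$; in particular the coefficient of $w_0$ is $\partial_\tau 1 = 0$. Hence $\partial_\tau\Phi$ has strictly fewer length-$N$ words than $\Phi$ and maximal length $\le N$, so by double minimality it must be the trivial relation: $\partial_\tau a_u + \sum_i a_{iu}f_i = 0$ for every $u$. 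Reading this at $|u| = N$ shows each top coefficient is annihilated by $\partial_\tau$, hence is a complex constant (the constants of $K$ are exactly $\mathbb{C}$). Reading it at $|u| = N-1$ gives $\partial_\tau a_u = -\sum_i a_{iu}f_i$ with the $a_{iu}$ now constants; the right-hand side lies in $\operatorname{Span}_\mathbb{C}\mathcal{F}$ and the left in $\partial_\tau(K)$, so by hypothesis (ii) both vanish, and the $\mathbb{C}$-linear independence of $\mathcal{F}$ forces every $a_{iu} = 0$. As $iu$ ranges over all length-$N$ words, all top coefficients vanish, contradicting $a_{w_0} = 1$.

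The main obstacle is engineering the differentiation step so that it genuinely simplifies the relation: differentiation lowers the length of each $I_w$ but not that of its coefficient $a_w$, so a naive induction on length stalls. The resolution — normalising a top coefficient to $1$ and ordering candidate relations lexicographically by (length, number of top words) — is precisely what makes $\partial_\tau\Phi$ strictly simpler and forces it to be trivial, after which both clauses of (ii) are invoked in tandem at the final step. I would also verify that the $i\infty$-regularisation is compatible with linearity and with the identity $\partial_\tau I_w = f_{i_1} I_{\hat w}$, but this follows from the tangential-base-point prescription already in force.
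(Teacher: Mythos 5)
The paper does not actually prove Theorem~\ref{thm:lindep}: it is quoted verbatim from the reference [DDMS, Theorem~2.1] as a variant of Chen's criterion, so there is no in-paper proof to compare against. Your argument is correct and is essentially the standard proof of this criterion: contraposition for (i)~$\Rightarrow$~(ii), and for (ii)~$\Rightarrow$~(i) a minimal-counterexample argument in which differentiating the relation (using $\partial_\tau I_w = f_{i_1}I_{\hat w}$ and the fact that $K$ is a differential field) annihilates the normalised top coefficient, forces the derived relation to be trivial by double minimality, and then invokes both clauses of (ii) to kill all top coefficients. Two minor points are worth making explicit: you need $\mathbb{C}\subseteq K$ (true in the application, and implicitly required for the statement to be sensible) so that $g-g_{\mathrm{reg}}$ in the contrapositive and the constants $a_{iu}$ in the final step are legitimate $K$-coefficients; and the compatibility of the tangential-base-point regularisation at $i\infty$ with linearity and with $\partial_\tau I_w = f_{i_1}I_{\hat w}$, which you correctly flag at the end.
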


Let us apply this theorem to our setting. Here $K:= \cM(\Gamma_2,R_S)(G_2)$ is the field of fractions of $\cQ\cM(\Gamma_2, R_S)$, i.e., the field whose elements are ratios of quasi-modular forms, or equivalently ratios of polynomials in $G_2$ with coefficients that are meromorphic modular forms. $K$ is a differential subfield, because quasi-modular forms are closed under differentiation. Clearly, we have $\cQ\cM(\Gamma_2, R_S)\subset K$. For $\mathcal{F}$ we choose
\beq
\mathcal{F} = \bigcup_{k\in\mathbb{Z}}\mathcal{F}_k\,,
\eeq
with $\mathcal{F}_k := \{f_1^{(k)},\ldots, f_{p_k}^{(k)}\}$ a $\mathbb{C}$-linearly independent set of modular forms from $\cM_k(\Gamma_2,R_S)$ for $k<2$ and from $\widetilde{\cM}_k(\Gamma_2,R_{s_0})$ for $k\ge 2$ and some fixed choice of cusp $s_0\in S_{\Gamma_2}$ (see section~\ref{sec:neat_proof} how to construct explicit bases for these vector spaces). Since the sums in Theorem~\ref{thm:main} are direct, it is easy to see that we have $\partial_{\tau}(K)\cap\textrm{Span}_{\mathbb{C}}\mathcal{F}=\{0\}$, and so Theorem~\ref{thm:lindep} implies that the corresponding iterated integrals are $K$-linear independent.

\subsection{Sketch of the proof for neat subgroups}
\label{sec:neat_proof}

We now return to the proof of Theorem~\ref{thm:main} for a special class of of subgroups.
\begin{defi}\label{defi:neat}
A subgroup $\Gamma\subseteq\slz$ is called \emph{neat} if $\left(\begin{smallmatrix}-1 & 0\\0&-1\end{smallmatrix}\right)\notin\Gamma$  and $\Gamma$ has no elliptic points nor irregular cusps, $\eps_2(\Gamma)=\eps_3(\Gamma)=\eps_i(\Gamma)=0$.
\end{defi}
One can show that
\begin{enumerate}
\item $\Gamma_1(N)$ is neat and has genus zero for $N\in\{5,\ldots,10,12\}$;
\item $\Gamma(N)$ is neat and has genus zero for $N\in\{3,4,5\}$.
\end{enumerate}
In particular, the congruence subgroup $\Gamma_1(6)$ relevant to the banana graph is neat and has genus zero. 

For $k<2$, the proof of Theorem~\ref{thm:main} is identical to the proof for $\Gamma=\slz$ considered in ref.~\cite{matthes2021iterated}, and we do not consider it here.
In order to see that Theorem~\ref{thm:main} holds also for $k\ge 2$, we first note that for every $f\in\cQ\cM_k(\Gamma,R_{s_0})$ there are meromorphic modular forms $h_1\in \cM_k(\Gamma,R_{s_0})$ and $h_2\in \cM_{2-k}(\Gamma,R_{s_0})$ and a meromorphic quasi-modular form $g\in \cQ\cM_{k-1}(\Gamma,R_{s_0})$ such that
$f = h_1 + h_2\,G_2^{k-1}+\delta g$.
This decomposition holds for all subgroups $\Gamma$, and does not require $\Gamma$ to be neat. It is a direct consequence of the algorithms of ref.~\cite{AMBP_2012__19_2_297_0}. Theorem~\ref{thm:main} then follows from the following claim: For every meromorphic modular form $f\in\cM_k(\Gamma,R_{s_0})$ of weight $k\ge2$ there is $h\in \widetilde{\cM}_{k}(\Gamma,R_{s_0})$ and $g\in\cQ\cM_{k-2}(\Gamma,R_{s_0})$ such that 
\beq\label{eq:decomp_1}
f=h+\delta g\,.
\eeq
In the remainder of this section we show how to construct $h$ and $g$ explicitly for neat subgroups of genus zero. Before doing so, we review some mathematical tools required to achieve this decomposition.

\paragraph{Bol's identity.} A complication when trying to decompose a meromorphic modular $f$ into an elements $h\in \widetilde{\cM}_{k}(\Gamma,R_{s_0})$ up to a total derivative is the fact that in general derivatives of modular forms are themselves not modular, but only quasi-modular. However, an important result due to Bol~\cite{Bol} states that we recover again a modular form if we take enough derivatives. More precisely, Bol's identity states that for $k\ge 2$ there is a linear map 
\beq\label{eq:Bol}
\delta^{k-1}: {\cM}_{2-k}(\Gamma) \to {\cS}_{k}(\Gamma)\,.
\eeq
In other words, if $k\ge 2$ and $f\in {\cM}_{2-k}(\Gamma)$, then in general $\delta f$ will not be a modular form, i.e., $\delta f\notin  {\cM}_{4-k}(\Gamma)$, but the $(k-1)^{\textrm{th}}$ derivative will be a modular form of weight $k$, $\delta^{k-1} f\in  {\cM}_{k}(\Gamma)$ (and in fact, it will even be a cusp form). Note that eq.~\eqref{eq:Bol} remains true if we replace ${\cM}_{2-k}(\Gamma)$ and ${\cS}_{k}(\Gamma)$ by ${\cM}_{2-k}(\Gamma,R_{s_0})$ and ${\cS}_{k}(\Gamma,R_{s_0})$ respectively. 

The main idea to achieve the decomposition in eq.~\eqref{eq:decomp_1} then goes as follows: Assume we are given $f\in \cM_{k}(\Gamma,R_{s_0})$ with a pole of order $m>0$ at a point $P\in R_{s_0}$, and assume the order of the pole is too high for $f$ to lie in $\widetilde{\cM}_{k}(\Gamma,R_{s_0})$. We will show how to construct $\tilde{g}\in\cM_{2-k}(\Gamma,R_{s_0})$ such that $f-\delta^{k-1}\tilde{g}$ has a pole of order at most $m-1$ at $P$. Applying this approach recursively, we obtain the decomposition in eq.~\eqref{eq:decomp_1}.


\paragraph{Divisors and the valence formula.} From the previous discussion it becomes clear that an important step in achieving the decomposition in eq.~\eqref{eq:decomp_1} is the construction of $\tilde{g}\in\cM_{2-k}(\Gamma,R_{s_0})$ with prescribed poles. An important tool to understand meromorphic functions (or, more generally, meromorphic sections of line bundles) on a Riemann surface are divisors, which we review in this section. The material in this section is well-known in the mathematics literature, but probably less so in the context of Feynman integrals.

A \emph{divisor} on $X_{\Gamma}$ is an element in the free group $\text{Div}(X_{\Gamma})$ generated by the points of $X_{\Gamma}$ (divided by their order $h_P$), i.e., a divisor is an expression of the form $D=\sum_{P\in X_{\Gamma}}\frac{n_P}{h_P} [P]$, where the $n_P$ are integers, and only finitely many of the $n_P$ are non zero.
We can use divisors to encode the information on the zeroes and poles of a meromorphic function or modular form. More precisely, if $0\neq f\in \cM_k(\Gamma)$, we can associate a divisor to it, defined by
\beq
(f) = \sum_{s\in S_\Gamma} \frac{\nu_s(f)}{h_s}\,[s] + \sum_{P\in X_{\Gamma}\setminus S_\Gamma} \frac{\nu_P(f)}{\#\overline{\Gamma}_P}\,[P]\,,
\eeq
where $h_s=2$ if $s$ is irregular and $h_s=1$ otherwise, and $\overline{\Gamma}_P$ is the projection of ${\Gamma}_P$ to $\pslz$ (i.e., we have identified $\gamma\in \slz$ and $-\gamma\in\slz$). Note that we have the obvious relation $(fg)=(f)+(g)$.

To every divisor $D=\sum_{P\in X_{\Gamma}}\frac{n_P}{h_P} [P]$ we can associate its \emph{degree} $\deg D = \sum_{p\in X_{\Gamma}}n_p$. Since every meromorphic function on a compact Riemann surface must have the same number of zeroes and poles (counted with multiplicity), we must have $\deg(f)=0$ for all $f\in \cM_0(\Gamma)$. 
For meromorphic modular forms of weight $k$, the degree of the associated divisor is no longer zero, but it is given by the \emph{valence formula}:
\beq\label{eq:valence_formula}
\deg(f) = k\,d_{\Gamma}\,.
\eeq


\paragraph{Modular forms for neat subgroups of genus zero.}
From now on we focus on the case where $\Gamma$ is neat and has genus zero. Equation~\eqref{eq:genus} then implies
\beq\label{eq:eps_infty_to_dGamma}
\eps_{\infty}(\Gamma) =2(1+ d_{\Gamma})\,.
\eeq
As we will now show, the spaces of meromorphic modular forms for neat subgroups can be described very explicitly. 
\begin{lemma}
Let $\Gamma$ be a neat subgroup of genus zero. Then there exists $\mH_k \in \cM_k(\Gamma)$ such that $\nu_\infty(\mH_k)=k\,d_{\Gamma}$ and $\nu_P(\mH_k)=0$ otherwise. In particular, for $k>0$, $\mH_k$ is a modular form of weight $k$ for $\Gamma$.
\end{lemma}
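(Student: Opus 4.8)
The plan is to reduce the whole statement to the construction of a single weight-one form and then pass to powers. First I would record what neatness provides. Since $\Gamma$ has no elliptic points, no irregular cusps and $-\mathds{1}\notin\Gamma$, every stabiliser $\overline{\Gamma}_P$ is trivial and every $h_s=1$, so for $0\neq f\in\cM_k(\Gamma)$ the divisor $(f)=\sum_P\nu_P(f)[P]$ is an ordinary integral divisor on the smooth genus-zero curve $X_\Gamma\cong\bP^1_{\bC}$, and the valence formula~\eqref{eq:valence_formula} reads $\deg(f)=\sum_P\nu_P(f)=k\,d_\Gamma$. In particular $d_\Gamma=\tfrac{1}{12}[\slz:\{\pm\mathds{1}\}\Gamma]$ is a \emph{positive integer}, being the degree of an honest divisor. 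It therefore suffices to construct $\mH_1\in\cM_1(\Gamma)$ with $(\mH_1)=d_\Gamma[\infty]$: then $\mH_k:=\mH_1^{\,k}\in\cM_k(\Gamma)$ has divisor $k\,d_\Gamma[\infty]$, which for $k>0$ is effective, so that $\mH_k$ is holomorphic on $\fH$ and at every cusp, i.e.\ $\mH_k\in M_k(\Gamma)$.

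To build $\mH_1$ I would proceed in two steps. The first is to exhibit \emph{some} nonzero $f_0\in\cM_1(\Gamma)$. Here I would use that $\cM_1(\Gamma)$ is the space of meromorphic sections of the line bundle $\omega$ of weight-one forms on $X_\Gamma$, and that neatness is exactly the hypothesis making $\omega$ a genuine holomorphic line bundle on the smooth curve $X_\Gamma$ (absence of elliptic points kills interior orbifold points, while regular cusps together with $-\mathds{1}\notin\Gamma$ remove the sign ambiguity of the weight-one automorphy factor at the cusps). A line bundle on a projective curve always admits nonzero rational sections, which provides $f_0$. The second step moves its divisor onto the infinite cusp. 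By the valence formula $\deg(f_0)=d_\Gamma$, so $D:=d_\Gamma[\infty]-(f_0)$ has degree zero. Since $X_\Gamma$ has genus zero, every degree-zero divisor is principal, i.e.\ $D=(r)$ for a modular function $r\in\cM_0(\Gamma)\cong\bC(\tz)$; writing $D=\sum_P n_P[P]$ one may take $r=\prod_P(\tz-\tz(P))^{n_P}$, suitably adjusted at the pole of $\tz$. Then $\mH_1:=f_0\,r$ is meromorphic of weight $1+0=1$, and $(\mH_1)=(f_0)+(r)=d_\Gamma[\infty]$, so $\nu_\infty(\mH_1)=d_\Gamma$ and $\nu_P(\mH_1)=0$ otherwise; this divisor is effective, hence $\mH_1\in M_1(\Gamma)$, as required.

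The step I expect to be the genuine obstacle is the existence claim $\cM_1(\Gamma)\neq 0$, equivalently that $\omega$ is an honest line bundle. For even weights it can be bypassed: $\delta\tz\in\cM_2(\Gamma)$ is a ready-made nonzero form, and one could run the divisor argument directly with $(\delta\tz)^{k/2}$ for even $k$. Odd weights truly require a weight-one form, and this is precisely where neatness is indispensable --- if instead $-\mathds{1}\in\Gamma$ there are no nonzero odd-weight forms at all, so the statement would fail. Everything else is standard genus-zero bookkeeping already implicit in the paper's use of the Hauptmodul and the valence formula: the fact that any two nonzero forms of the same weight differ by a modular function, and that degree-zero divisors on $\bP^1$ are principal.
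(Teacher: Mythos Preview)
Your argument is correct and follows essentially the same route as the paper's proof: exhibit a nonzero meromorphic section of the weight bundle (via the Riemann--Roch/line-bundle argument, which is exactly what the paper's footnote invokes), then use genus zero to move its divisor onto $[\infty]$ by multiplying with a rational function of the Hauptmodul. The only cosmetic difference is that you reduce to $k=1$ and take powers $\mH_k=\mH_1^{\,k}$, whereas the paper runs the divisor-moving step directly at each weight $k$; both versions rest on the same two ingredients, and your reduction is a perfectly clean way to organise them.
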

\begin{proof}
If $k=0$, we simply choose $\mH_0=1$, and for $k<0$ we set $\mH_k = 1/\mH_{-k}$. So it is sufficient to discuss $k>0$.
Since $\dim_{\mathbb{C}}\cM_k(\Gamma)\neq 0$,\footnote{This can be seen by thinking of (meromorphic) modular forms of weight $k$ for the group $\Gamma$ as (meromorphic) sections of a certain line bundle (the $k$-th power of the Hodge bundle) on the modular curve $X_\Gamma$. It then follows from the Riemann--Roch formula that every line bundle on a compact Riemann surface admits a meromorphic section. See also ref.~\cite{diamond2005first}, the discussion after Theorem 3.6.1, for a detailed proof.} it contains a meromorphic modular form $h$ with divisor
\beq
(h) = \sum_{P\in X_{\Gamma}}n_P\,[P] = kd_{\Gamma}\,[\infty] + D \,,
\eeq
 where we used the fact that $h_P=1$ for neat subgroups, and we defined
 \beq
 D := \left(n_{\infty}-kd_{\Gamma}\right)[\infty] + \sum_{\substack{P\in X_{\Gamma}\\ P\neq \infty}}n_P\,[P]\,.
 \eeq
 The valence formula implies $\deg D=0$, and so there is a meromorphic function $f\in\cM_0(\Gamma)$ such that $(f)=D$. Since $\Gamma$ has genus zero, every meromorphic function is a rational function in the Hauptmodul $\tz$, and it is sufficient to pick
 \beq
 f := \prod_{\substack{P\in X_{\Gamma}\\ P\neq \infty}}(\tz-P)^{n_P}\,.
 \eeq
 It is now easy to check that $\mH_k := \frac{1}{f}\,h$ has the desired property. The fact that $\mH_k$ is holomorphic follows from $\nu_P(\mH_k)\ge 0$ for all $P\in X_{\Gamma}$.
\end{proof}

Note that $\mH_k$ is unique, up to overall normalisation. Indeed, assume that $\mH_k^{(1)}$ and $\mH_k^{(2)}$ both satisfy the condition, then $(\mH_k^{(1)}/\mH_k^{(2)}) = (\mH_k^{(1)}) - (\mH_k^{(2)}) = 0$, and so there is $\alpha \in \mathbb{C}$ such that $\mH_k^{(1)} = \alpha \mH_k^{(2)}$. We assume from now on that the normalisation of $\mH_k$ is chosen such that at the infinite cusp we have the $q$-expansion ($h$ is the width at the infinite cusp):
\beq\label{eq:Hk_normalisation}
\mH_k(\tau) = q^{kd_{\Gamma}/h}\left[1 +\sum_{n\ge 1} a_n\,q^{n/h}\right]\,,\qquad q=e^{2\pi i\tau}\,.
\eeq

We can use $\mH_k$ to give an explicit representation of the spaces of modular forms of weight $k$ in terms of rational functions,
\beq\bsp
\cM_k(\Gamma) &\,= \mH_k\cdot \mathbb{C}(\tz)\,,
\esp\eeq
with holomorphic modular forms of weight $k$ corresponding to polynomials of degree at most $k\,d_{\Gamma}$:
\beq\bsp
M_k(\Gamma) &\,= \mH_k\cdot \mathbb{C}[\tz]_{\le kd_{\Gamma}}\,,
\esp\eeq
where $\mathbb{C}[X]_{\le m}$ denotes the vector space of polynomials of degree at most $m$. We can use this representation to write down a generating set for $\cM_k(\Gamma)$. For $p\in X_{\Gamma}$ and $m\in\mathbb{Z}_{>0}$, we define:
\beq\bsp
u_{P,m}(\tau) &\,= \left\{\begin{array}{ll}
(\tz(\tau)-P)^{-m}\,, & \text{ if }P\neq\infty\,,\\
\tz(\tau)^m\,,& \text{ if }P=\infty\,,
\end{array}\right.\\
u_{\infty,0}(\tau) &\,= 1\,.
\esp\eeq
It is an easy exercise (based on partial fractioning) to show that every rational function has a unique representation as a finite linear combination of the $u_{P,m}$. As a consequence, the meromorphic modular forms $U_{k,P,m} := \mH_k\,u_{P,m}$ are a generating set for $\cM_k(\Gamma)$. In particular, a basis for $M_k(\Gamma)$ is $\{\mH_k\,u_{\infty,m}:0\le m\le kd_{\Gamma}\}$. Moreover, we can use this generating set to write down an explicit basis for $\widetilde{\cM}_k(\Gamma,R_{\infty})$ in definition~\ref{defi:Mtilde}. For $k\ge 2$, we have
\beq\label{eq:Mtilde_def}
\widetilde{\cM}_k(\Gamma,R_{\infty}) := M_k(\Gamma) \cup \widehat{\cS}_k(\Gamma)\cup\widehat{\cM}_k(\Gamma,R)\,,
\eeq
where we defined
\beq\bsp\label{eq:Shat_def}
\widehat{\cM}_k(\Gamma,R):=\bigoplus_{\substack{p\in R\setminus \{\infty\} \\ 1\le m\le k-1}}\!\!\!\! \mathbb{C}\,U_{k,P,m} = \bigoplus_{\substack{P\in R \\ 1\le m\le k-1}}\!\!\!\! \mathbb{C}\,\frac{\mH_{k}}{(\tz-P)^m}\,,\\
\widehat{\cS}_k(\Gamma):=\bigoplus_{kd_{\Gamma}< m< 2d_{\Gamma}\,(k-1)}\!\!\!\!\!\!\!\!\!\! \mathbb{C}\,U_{k,\infty,m}=\bigoplus_{kd_{\Gamma}< m< 2d_{\Gamma}\,(k-1)} \!\!\!\!\!\!\!\!\mathbb{C}\,\mH_k\,\tz^m\,.
\esp\eeq
Note that $\dim_{\mathbb{C}}\widehat{\cS}_k(\Gamma) = \dim_{\mathbb{C}}S_k(\Gamma)$.

\paragraph{Sketch of the proof of Theorem~\ref{thm:main} for neat subgroups.}
We now show how we can use Bol's identity to construct for each $f=U_{k,P,m}$ a function $\tilde{g}$ such that the decomposition in eq.~\eqref{eq:decomp_1} holds. We will make repeated use of the following result:
\begin{lemma}\label{lem:valence} Let $\Gamma$ be neat and have genus zero, $f\in \cM_{2-k}(\Gamma)$, for $k\ge 2$.
\begin{enumerate}
\item If $P\in X_{\Gamma}\setminus S_{\Gamma}$, then $\nu_P(\delta^{k-1}f) \ge 0$ or $\nu_P(\delta^{k-1}f) =1-k+\nu_P(f)$.
\item If $s\in S_{\Gamma}$, then $\nu_s(\delta^{k-1}f) \ge 0$ or $\nu_s(\delta^{k-1}f)=\nu_s(f)$.
\end{enumerate}
\end{lemma}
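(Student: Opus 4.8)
The plan is to pass to a local coordinate at each point and to track how $\delta^{k-1}$ acts on the leading term of the local expansion of $f$. Throughout, $\delta^{k-1}f\in\cS_k(\Gamma)$ by Bol's identity~\eqref{eq:Bol}, so that $\nu_P(\delta^{k-1}f)$ and $\nu_s(\delta^{k-1}f)$ are well defined. The key simplification is that, because $\Gamma$ is neat, the local coordinates introduced in section~\ref{sec:modular_review} take their simplest possible form: at an interior point there are no elliptic points, so $h_P=1$; and at a cusp there are no irregular cusps, so the width parameter satisfies $h'=h$.

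For an interior point $P\in X_\Gamma\setminus S_\Gamma$ with representative $\tau_0\in\HP$, the local coordinate is $z=\tau-\tau_0$, so $\delta^{k-1}=(2\pi i)^{-(k-1)}\partial_z^{k-1}$. If $\nu_P(f)\ge 0$ then $f$ is holomorphic at $P$, hence so is every derivative, giving $\nu_P(\delta^{k-1}f)\ge 0$. If instead $\nu_P(f)<0$, write $f=c\,z^{\nu_P(f)}\big(1+O(z)\big)$ with $c\ne 0$; since $\partial_z^{k-1}z^{\nu_P(f)}=\nu_P(f)(\nu_P(f)-1)\cdots(\nu_P(f)-k+2)\,z^{\nu_P(f)-k+1}$ and all $k-1$ integer factors are strictly negative, this leading coefficient is nonzero, so $\nu_P(\delta^{k-1}f)=\nu_P(f)-(k-1)=1-k+\nu_P(f)$. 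This is exactly the stated dichotomy.

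For a cusp $s\in S_\Gamma$, choose $\sigma\in\slz$ with $\sigma\cdot i\infty=s$, so that $\nu_s(f)$ is read off from the $q$-expansion of $f[\sigma]_{2-k}$ at $i\infty$. The covariance of Bol's identity, $(\delta^{k-1}f)[\sigma]_k=\delta^{k-1}\big(f[\sigma]_{2-k}\big)$, shows that $\nu_s(\delta^{k-1}f)$ may equally be read off from $\delta^{k-1}g$ with $g:=f[\sigma]_{2-k}$, reducing us to the cusp $i\infty$ of width $h$. There the local coordinate is $z=q^{1/h}$, and since $q=z^h$ one finds $\delta=q\,\partial_q=\tfrac1h\,z\,\partial_z$, hence $\delta^{k-1}=h^{-(k-1)}(z\,\partial_z)^{k-1}$. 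Writing $g=\sum_{n\ge n_0}a_n z^n$ with $a_{n_0}\ne 0$ and $n_0=\nu_s(f)$, and using $(z\,\partial_z)^{k-1}z^n=n^{k-1}z^n$, we obtain
\[
\delta^{k-1}g=h^{-(k-1)}\sum_{n\ge n_0}n^{k-1}a_n\,z^n \, .
\]
The factor $n^{k-1}$ vanishes precisely for $n=0$. Hence, if $\nu_s(f)\ne 0$ the leading term is preserved and $\nu_s(\delta^{k-1}f)=\nu_s(f)$, whereas if $\nu_s(f)=0$ the constant term is annihilated and $\nu_s(\delta^{k-1}f)\ge 1\ge 0$, which again yields the stated alternative.

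The only genuine subtlety, and the step I would treat most carefully, is the cusp case: one must invoke the covariance of the iterated derivative under the weight-$(2-k)$ and weight-$k$ slash actions---precisely the content of Bol's identity---both to move an arbitrary cusp to $i\infty$ and to ensure that the $q$-expansion of $\delta^{k-1}f$ is obtained by applying $\delta^{k-1}$ termwise to that of $f$. The neat hypothesis is what keeps the power counting clean in both cases: an elliptic point of order $h_P$ would replace $z=\tau-\tau_0$ by $z=(\tau-\tau_0)^{h_P}$, and an irregular cusp would replace $q^{1/h}$ by $q^{1/(2h)}$, in either case shifting the exponents that enter the argument.
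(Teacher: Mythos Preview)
Your proof is correct and follows essentially the same approach as the paper: a local Laurent expansion at interior points and a local Fourier expansion at cusps, tracking how $\delta^{k-1}$ acts on the leading term. The only difference is cosmetic---the paper first reduces to the generating set $U_{2-k,P,m}=\mH_{2-k}u_{P,m}$ and then performs the same power-counting, whereas you work directly with an arbitrary $f$ and are more explicit about invoking the slash-covariance of Bol's identity to move a general cusp to $i\infty$.
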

\begin{proof}
It is sufficient to prove the claim for the elements of the generating set, $U_{2-k,P,m} = \mH_{2-k} u_{P,m} = \mH_{k-2}^{-1}u_{P,m}$. 

We need to show that if $\delta^{k-1}f$ has a pole at $P$, i.e., $\nu_P(\delta^{k-1}f)<0$, then it satisfies the claim of the lemma. Note that if $\nu_P(\delta^{k-1}f)<0$, then also $\nu_P(f)<0$.

Let $P\in X_{\Gamma}\setminus S_{\Gamma}$. We have $\nu_P(U_{2-k,P,m}) = -m<0$. Then, if $\tau_P$ is such that $t(\tau_P)=P$, $U_{k,P,m}$ admits a Laurent expansion of the form
\beq
U_{2-k,P,m}(\tau) = \frac{\alpha}{(\tau-\tau_P)^m} + \ord\left(\frac{1}{(\tau-\tau_P)^{m-1}}\right)\,,\qquad \alpha\in\mathbb{C}\setminus\{0\}
\eeq
and so
\beq\bsp
\delta^{k-1}U_{2-k,P,m}(\tau)&\, =(2\pi i)^{1-k}\partial_{\tau}^{k-1}U_{2-k,P,m}(\tau)\\
&\, = \frac{(m)_{k-2}\,\alpha}{(-2\pi i)^{k-1}\,(\tau-\tau_P)^{m+k-1}}+ \ord\left(\frac{1}{(\tau-\tau_P)^{m+k-2}}\right)\,,
\esp\eeq
where $(a)_n = a(a+1)\ldots(a+n)$.
Hence $\nu_P(\delta^{k-1}U_{2-k,P,m}) = 1-k-m = 1-k+\nu_P(U_{2-k,P,m})$.

Let $s\in S_{\Gamma}$, $s\neq [\infty]$. We have $\nu_P(U_{2-k,P,m}) = -m<0$. If $q$ is a local coordinate in a neighbourhood of the cusp $s$, $U_{2-k,P,m}$ admits the Fourier expansion:
\beq
U_{2-k,P,m}(q)=\frac{\alpha}{q^m} + \ord(q^{-m+1})\,,\qquad \alpha\in\mathbb{C}\setminus\{0\}\,,
\eeq
and so
\beq
\delta^{k-1}U_{2-k,P,m}(q) = (q\partial_q)^{k-1}U_{2-k,P,m}(q) = \frac{(-1)^{k-1}\,(m)_{k-2}\,\alpha}{q^m} + \ord(q^{-m+1})\,.
\eeq
Hence, $\nu_s(\delta^{k-1}U_{2-k,P,m}) = \nu_s(U_{2-k,P,m})$.

If $s=[\infty]$, then $\nu_{\infty}(U_{2-k,\infty,m}) = -m-\nu_{\infty}(\mH_{k-2}) = -m - (k-2)d_{\Gamma}$. By the same argument as in the case $s\neq[\infty]$, we conclude $\nu_{\infty}(\delta^{k-1}U_{2-k,P,m}) = -m - (k-2)d_{\Gamma} = \nu_{\infty}(U_{2-k,P,m})$.
\end{proof}

We are now in a position to prove the decomposition in eq.~\eqref{eq:decomp_1}. The proof is constructive, and allows one to recursively construct the functions $h$ and $\tilde{g}$ in eq.~\eqref{eq:decomp_1}. The decomposition in eq.~\eqref{eq:decomp_1} is equivalent to the following result:
\begin{thm}\label{thm:bijection} For $k\ge 2$, there is a decomposition 
\beq
\cM_k(\Gamma,R_S)=\widetilde{\cM}_k(\Gamma,R_\infty)\oplus \delta^{k-1}\cM_{2-k}(\Gamma,R_S)\,.
\eeq
\end{thm}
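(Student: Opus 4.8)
The plan is to prove the two halves of the claimed decomposition separately: that the sum $\widetilde{\cM}_k(\Gamma,R_\infty)+\delta^{k-1}\cM_{2-k}(\Gamma,R_S)$ is direct, and that it exhausts $\cM_k(\Gamma,R_S)$. Throughout I would work with the explicit description of meromorphic modular forms for a neat genus-zero group, $\cM_j(\Gamma)=\mH_j\cdot\mathbb{C}(\tz)$, and with the generating set $U_{j,P,m}=\mH_j\,u_{P,m}$. Both summands are genuine subspaces of the left-hand side: $\widetilde{\cM}_k(\Gamma,R_\infty)\subseteq\cM_k(\Gamma,R_S)$ because $R_\infty\subseteq R_S$, and $\delta^{k-1}\cM_{2-k}(\Gamma,R_S)\subseteq\cM_k(\Gamma,R_S)$ by Bol's identity \eqref{eq:Bol} in its version with poles allowed at $R_S$. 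The entire argument rests on Lemma~\ref{lem:valence}, together with the sharper fact contained in its proof: applied to a single generator, $\delta^{k-1}$ multiplies the leading Laurent (resp. Fourier) coefficient by the nonzero factor $(m)_{k-2}$, so a pole of an input can never be annihilated by differentiation.

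For surjectivity I would make the pole reduction sketched after \eqref{eq:Bol} precise, organised in three stages whose decisive feature is that they decouple. The structural input is that each generator $U_{2-k,Q,m'}=\mH_{k-2}^{-1}u_{Q,m'}$ has poles only at $Q$ and at the infinite cusp, since $\mH_{k-2}^{-1}$ is holomorphic and nonvanishing away from $\infty$. Then: (i) at each $P\in R$ where $f$ has a pole of order $m\ge k$, subtract a multiple of $\delta^{k-1}U_{2-k,P,m-k+1}$, whose leading pole at $P$ is of order $m$ by Lemma~\ref{lem:valence}(1); iterating lowers every $R$-pole to order $\le k-1$, polluting only $\infty$. (ii) At each finite cusp $s$ where $f$ has a pole of order $m$, subtract a multiple of $\delta^{k-1}U_{2-k,s,m}$, whose pole at $s$ is of order $m$ by Lemma~\ref{lem:valence}(2), again polluting only $\infty$; this clears all finite-cusp poles. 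Since these generators are holomorphic at the other special points, stages (i) and (ii) never interfere. (iii) What remains is holomorphic away from $\infty$, hence equals $\mH_k$ times a polynomial in $\tz$; using $\delta^{k-1}U_{2-k,\infty,m'}$, which by Lemma~\ref{lem:valence}(2) has a pole at $\infty$ of order $(k-2)d_\Gamma+m'$ and no other poles, lower the order at $\infty$ until it is at most $\dim S_k(\Gamma)=d_\Gamma(k-2)-1$. The result lies in $M_k(\Gamma)\oplus\widehat{\cS}_k(\Gamma)\oplus\widehat{\cM}_k(\Gamma,R)=\widetilde{\cM}_k(\Gamma,R_\infty)$, and the accumulated subtrahend is $\delta^{k-1}\tilde g$ with $\tilde g\in\cM_{2-k}(\Gamma,R_S)$.

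For directness I would take $h=\delta^{k-1}\tilde g\in\widetilde{\cM}_k(\Gamma,R_\infty)$ and show $h=0$. Because $\widetilde{\cM}_k(\Gamma,R_\infty)$ forbids poles at the finite cusps and allows poles at each $P\in R$ only up to order $k-1$, while the no-cancellation fact forces $\delta^{k-1}\tilde g$ to have a pole of order $\ge k$ at any $P\in R$ where $\tilde g$ is singular and a pole of equal order at any finite cusp where $\tilde g$ is singular, I conclude that $\tilde g$ is holomorphic at every $P\in R$ and at every finite cusp. Hence $\tilde g=\mH_{k-2}^{-1}Q(\tz)$ for a polynomial $Q$, so $\nu_\infty(\tilde g)=-(k-2)d_\Gamma-\deg Q$; for $\tilde g\ne 0$ this is strictly negative when $k>2$ (for $k=2$ one instead uses that $\delta$ kills constants), and the no-cancellation fact gives $\nu_\infty(h)=\nu_\infty(\tilde g)\le-(k-2)d_\Gamma$. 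This contradicts the bound $\nu_\infty(h)\ge-\dim S_k(\Gamma)=-(d_\Gamma(k-2)-1)$ built into Definition~\ref{defi:Mtilde}(3), so $\tilde g=0$ and $h=0$.

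The hard part, and the reason the reduction must be ordered as above, is ensuring that the three stages do not reintroduce one another's poles; this is exactly what the localisation of the poles of $U_{2-k,Q,m'}$ to $\{Q,\infty\}$ guarantees, with the infinite cusp serving as the single ``sink'' cleaned up last. The only genuinely delicate input is the non-vanishing of the factor $(m)_{k-2}$ under $\delta^{k-1}$, which simultaneously underlies the termination of the reduction and the directness, and which is furnished by Lemma~\ref{lem:valence}.
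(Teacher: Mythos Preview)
Your proposal is correct and follows essentially the same route as the paper: both surjectivity and directness are proved by pole reduction, using Lemma~\ref{lem:valence} (and the explicit non-vanishing of the leading coefficient under $\delta^{k-1}$) as the engine. The only difference is organisational: the paper reduces each generator $U_{k,P,m}$ separately, while you reduce a general $f$ through the three ordered stages; and in the injectivity half you make explicit the intermediate step (that $\tilde g$ must be holomorphic at every $P\in R$ and at every finite cusp before invoking the $\infty$-bound) which the paper leaves somewhat implicit. Your emphasis on the ``sink at $\infty$'' structure --- that the subtractions at $R$-points and finite cusps only pollute $\infty$, so the stages decouple --- is a nice clarification of why the reduction terminates cleanly.
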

\begin{proof}
It is sufficient to consider the case $\#R=1$.

We first show surjectivity. For this it is sufficient to show that all those $U_{k,P,m}$ not in $\widetilde{\cM}_k(\Gamma,R_{\infty})$ do not define independent classes modulo objects that lie in the image of Bol's identity, i.e., these classes can be expressed as linear combinations in $\widetilde{\cM}_k(\Gamma,R_{\infty})$, modulo $\delta^{k-1}\cM_{2-k}(\Gamma,R_S)$. 

Let $s\in S_{\Gamma}$, $s\neq [\infty]$ and $m>0$. Let $q$ be a local coordinate around $s$. Then there are $\alpha_1,\alpha_2\in\mathbb{C}\setminus\{0\}$ such that
\beq\bsp
U_{k,s,m}(q) &\,= \frac{\alpha_1}{q^m} + \ord(q^{-m+1})\,,\\
U_{2-k,s,m}(q) &\,= \frac{\alpha_2}{q^m} + \ord(q^{-m+1})\,.
\esp\eeq
Lemma~\ref{lem:valence} implies 
\beq
U_{k,s,m}(q) - \frac{\alpha_1}{\alpha_2}\,(-m)^{1-k}\,\delta^{k-1}U_{2-k,s,m}(q) = \ord(q^{-m+1})\,.
\eeq
Applying this identity recursively, we arrive at the conclusion that 
\beq
U_{k,s,m} = 0\!\!\! \mod \delta^{k-1}\cM_{2-k}(\Gamma,R_S)\,,\quad \text{for all } m>0\,.
\eeq

For the infinite cusp, we know from the proof of Lemma~\ref{lem:valence} that $\nu_\infty(\delta^{k-1}U_{2-k,\infty,m'}) = \nu_\infty(U_{2-k,\infty,m'}) = -m'-(k-2)d_{\Gamma}$, for all $m'>0$. 
Hence, for all $m\ge 2d_{\Gamma}(k-1)$ there is $m' = m-2d_{\Gamma}(k-1)\ge 0$, and we can pick a local coordinate $q$ at the infinite cusp such that there is $\alpha_1,\alpha_2\neq0$ such that
\beq\bsp
U_{k,\infty,m}(q) &\,= \frac{\alpha_1}{q^{m-kd_{\Gamma}}} + \ord(q^{-m+kd_{\Gamma}+1})\,,\\
U_{2-k,\infty,m'}(q) &\,= \frac{\alpha_2}{q^{m'-(2-k)d_{\Gamma}}} + \ord(q^{-m'+(2-k)d_{\Gamma}+1})= \frac{\alpha_2}{q^{m-kd_{\Gamma}}} + \ord(q^{-m+kd_{\Gamma}+1})\,.
\esp\eeq
Lemma~\ref{lem:valence} then implies
\beq
U_{k,\infty,m}(q) - \frac{\alpha_1}{\alpha_2}\,(kd_{\Gamma}-m)^{1-k}\,\delta^{k-1}U_{2-k,\infty,m'}(q) = \ord(q^{-m+kd_{\Gamma}+1})\,.
\eeq
It follows that $U_{k,\infty,m}$ for $m\ge 2d_{\Gamma}(k-1)$ does not define an independent class modulo total derivatives. 

Finally, let $R=\{P\}$, and $m>k-1$. We can take $m'=m-k+1>0$, and Lemma~\ref{lem:valence} implies $\nu_P(\delta^{k-1}U_{2-k,P,m'}) = 1-k-m'=-m$. Hence, with $\tau_P$ such that $t(\tau_P)=P$, there is $\alpha_1,\alpha_2\neq0$ such that
\beq\bsp
U_{k,P,m}(\tau) &\,= \frac{\alpha_1}{(\tau-\tau_P)^m} + \ord((\tau-\tau_P)^{-m+1})\,,\\
\delta^{k-1}U_{2-k,P,m}(\tau) &\,= \frac{\alpha_2}{(\tau-\tau_P)^{m'+k-1}} + \ord((\tau-\tau_P)^{-m'-k+2})\\
&\, = \frac{\alpha_2}{(\tau-\tau_P)^m} + \ord((\tau-\tau_P)^{-m+1})\,.
\esp\eeq
Hence
\beq
U_{k,P,m}(\tau)-\frac{\alpha_1}{\alpha_2}\delta^{k-1}U_{2-k,P,m}(\tau)  = \ord((\tau-\tau_P)^{-m+1})\,,
\eeq
and so $U_{k,P,m}$ for $m>k-1$ does not define an independent class modulo total derivatives. This finishes the proof of surjectivity.

Let us now show injectivity (for $R=\{P\}$). Consider the following general linear combination of elements from $\widetilde{\cM}_k(\Gamma,R_{\infty})$:
\beq\bsp
f&\,:=\sum_{m=1}^{k-1}\alpha_m\,U_{k,P,m} + \sum_{n=0}^{2d_{\Gamma}(k-1)-1}\beta_n\,U_{k,\infty,n}\\
&\,=\mH_k\sum_{m=1}^{k-1}\frac{\alpha_m}{(\tz-P)^m}+ \mH_k\sum_{n=0}^{2d_{\Gamma}(k-1)-1}\beta_n\tz^n\,.
\esp\eeq
We need to show that whenever there is $g\in \cM_{2-k}(\Gamma,R_S)$ such that $f=\delta^{k-1}g$, then necessarily $\alpha_m=0$ for $1\le m<k$ and $\beta_m=0$ for $0\le n<2d_{\Gamma}(k-1)$. Let us start by showing that the coefficients $\alpha_m$ must vanish. To see this, assume that $\alpha_m\neq 0$ for some $1\le m<k$. Then $f$ has a pole at $\tz=P$, i.e., $0>\nu_P(f)>-k$. Hence, $g$ must also have a pole at $\tz=P$, i.e., $\nu_P(g)<0$. Lemma~\ref{lem:valence} then implies $\nu_P(f) = \nu_P(\delta^{k-1}g) = 1-k+\nu_P(g)\le -k$, which is a contradiction. Hence $\alpha_m=0$ for all $1\le m<k$.

Next, let us assume that $\beta_n\neq0$ for some $2d_{\Gamma}k<n<2d_{\Gamma}(k-1)$. Then $f$ has a pole at the infinite cusp, with $0>\nu_{\infty}(f) \ge \nu_{\infty}(U_{k,\infty,2d_{\Gamma}(k-1)-1}) = (2-k)d_{\Gamma}+1$. Then $g$ must also have a pole. The order of the pole is bounded by
\beq
\nu_{\infty}(f) = \nu_{\infty}(\delta^{k-1}g) \le \nu_\infty(\delta^{k-1}U_{2-k,\infty,0}) = (2-k)d_{\Gamma}< \nu_{\infty}(f)\,,
\eeq
which is a contradiction. Hence $\beta_n=0$ for all $2d_{\Gamma}k<n<2d_{\Gamma}(k-1)$. It follows that $f$ must be a holomorphic modular form of weight $k$, but $M_k(\Gamma)\cap \delta^{k-1}(\cM_{2-k}(\Gamma)) = 0$ for $k\ge0$, and therefore $f=0$.
\end{proof}

\vspace{1mm}\noindent




\section{The monodromy groups of the equal-mass sunrise and banana integrals}
\label{sec:sunban}

In the remainder of this paper we will illustrate the abstract mathematical concepts on two very concrete families of Feynman integrals, namely the equal-mass two-loop sunrise and three-loop banana integrals, defined by:
\begin{align}
  \label{eq:sunrise-family} I^{\sun}&_{a_1,\dots,a_5}(p^2,m^2;d)=\\
  \nonumber& = \int \prod_{i=1}^2 \mathfrak{D}^d \ell_i 
		 \frac{(\ell_1\cdot p)^{a_4}(\ell_2\cdot p)^{a_5}}{[\ell_1^2-m^2]^{a_1}[\ell_2^2-m^2]^{a_2}[(\ell_1-\ell_2-p)^2-m^2]^{a_3}}\,,\\
  \label{eq:banana-family} I^{\ban}&_{a_1,\dots,a_9}(p^2,m^2;d)=\\
  \nonumber& = \int \prod_{i=1}^3 \mathfrak{D}^d \ell_i 
		 \frac{(\ell_3^2)^{a_5}(\ell_1\cdot p)^{a_6}(\ell_2\cdot p)^{a_7}(\ell_3\cdot
  p)^{a_8}(\ell_1\cdot\ell_2)^{a_9}}{[\ell_1^2-m^2]^{a_1}[\ell_2^2-m^2]^{a_2}[(\ell_1-\ell_3)^2-m^2]^{a_3}[(\ell_2-\ell_3-p)^2-m^2]^{a_4}}\,,
\end{align}
where $a_i\ge 0$ are positive integers, $m^2>0$ and $p^2$ are real. We work in dimensional regularisation in $d=2-2\eps$ dimensions. The integration measure reads 
\begin{equation}
  \label{eqn:intemeasure}
  \int\ddl=\frac{1}{\Gamma\left( 2-\frac{d}{2} \right)}\int\frac{d^d\ell}{i\pi^{d/2}}\,.
\end{equation}
We follow refs.~\cite{Remiddi:2016gno,Primo:2017ipr} for the choice of master integrals and the differential equations (see also section~\ref{sec:bananameromorphic} and appendix~\ref{app:sunban}).

In this section we focus on identifying the maximal cuts of these integrals as modular forms for the congruence subgroup $\Gamma_1(6)$, and in the next section we see how iterated integrals of meromorphic modular forms arise. This gives another way to resolve the debate in the literature whether the two-loop sunrise integral is associated with modular forms for $\Gamma_1(12)$ or $\Gamma_1(6)$; see, e.g., refs.~\cite{Adams:2017ejb,Frellesvig:2021vdl}. While the discussion in this section focuses on these specific Feynman integrals, it is easy to transpose the discussion to other differential operators of degree two or three. This may then provide a roadmap to identify the modular forms obtained from maximal cuts of one-parameter families of Feynman integrals that are not described by the same Picard-Fuchs operators as the examples considered here.

\subsection{The sunrise family}
\label{ssec:sunrisefamily}

\paragraph{The monodromy group.}
The maximal cuts of the integral $I^{\sun}_{1,1,1,0,0}(p^2,m^2;2)$ are annihilated by the second-order differential operator~\cite{Laporta:2004rb}:
\begin{align}
	\label{eq:sunriseDO}
	\cL^\sun_t&:=\partial_t^2 + \left(\frac{1}{t-9}+\frac{1}{t-1}+\frac{1}{t}\right)\partial_t +
	\left(\frac{1}{12(t-9)}+\frac{1}{4(t-1)}-\frac{1}{3t}\right)\,,
\end{align}
where we defined
\beq\label{eq:tt_def}
t := \frac{p^2}{m^2}\,.
\eeq
It is well-known that $\cL^\sun_t$ is the Picard-Fuchs operator describing a family of elliptic curves. Consequently, its solutions can be expressed in terms of elliptic integrals of the first kind (see appendix~\ref{app:sun}). In the following we explicitly construct the solutions using the Frobenius method reviewed in section~\ref{sec:frobenius_review} in order to outline the general strategy. While we only perform the calculations for the differential operator $\cL^\sun_t$, the different steps can be applied very generally to second-order differential operators describing one-parameter families of elliptic curves. 

\begin{figure}[h]
	\begin{center}\includegraphics{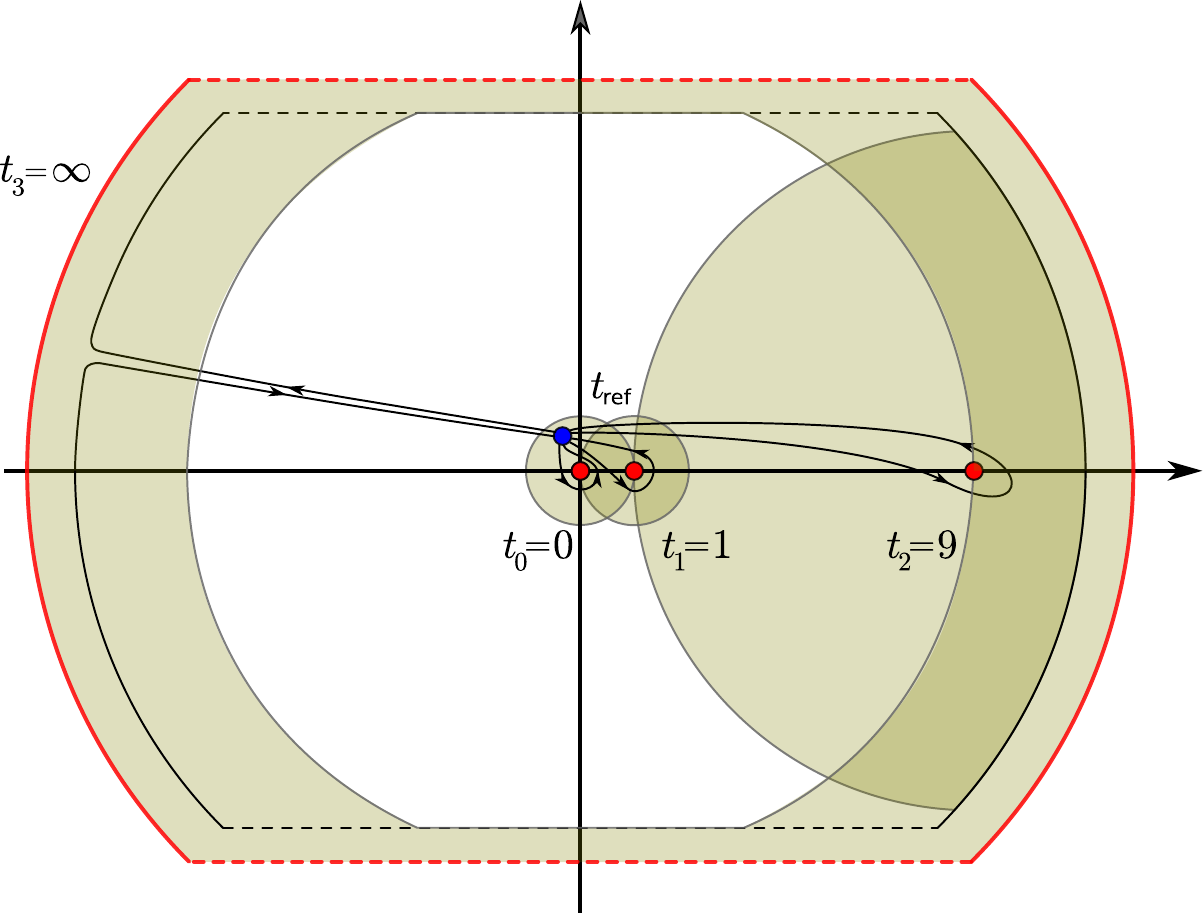}\end{center}
	\caption{Geometry associated to the sunrise differential operator $\cL^\sun_t$ in eq.~\eqref{eq:sunriseDO}. The coefficient functions have poles at $(t_0,\ldots,t_3)=(0,1,9,\infty)$. The corresponding radii of convergence are shaded in green.}
	\label{fig:sunmonodromy}
\end{figure}

The coefficients in $\cL^\sun_t$ have poles at $(t_0,\ldots,t_3)=(0,1,9,\infty)$, all of which are regular singular points. One can show that all of these points are MUM-points~\cite{Bonisch:2020qmm}, and close to each singular point we can choose a basis of solutions that consists of one holomorphic and one logarithmically-divergent function. For the singular point $t_0$ at the origin, the Frobenius method delivers the two power series solutions whose first terms in the expansion read:
\beq\bsp
\label{eqn:sunzerolocsolall}
	\phi_2(t_0;t)&=\frac{4 t}{9} + \frac{26t^2}{81}  + \frac{526t^3}{2187}  + \frac{1253t^4}{6561}  +\mathcal{O}(t^5) + \phi_1(t_0;t)\log t\,,\\
	\phi_1(t_0;t)&=1 + \frac{t}{3} + \frac{5t^2}{27} + \frac{31t^3}{243}  + \frac{71t^4}{729}   +\mathcal{O}(t^5)\,.
\esp\eeq
It is easy to check that these two local solution can be used to express the functions $\Psi_1(t)$ and $\Psi_2(t)$ in eq.~\eqref{eq:psi1_def}:
\beq\bsp\label{eq:phi_to_psi}
\Psi_1(t)&\, = \frac{2\pi}{\sqrt{3}}\,\phi_1(0;t)\,,\\
\Psi_2(t)&\, =-\frac{i}{\sqrt{3}}\,\phi_2(0;t)-\frac{2i}{\sqrt{3}}\,\log(3)\,\phi_1(0;t)\,.
\esp\eeq

Given the form of the solutions in eq.~\eqref{eqn:sunzerolocsolall}, it is not difficult to derive a representation of the local monodromy: the logarithm will acquire a phase of $2\pi i$ when transported around the pole at $t=t_0=0$ counterclockwise. Accordingly, one finds
\begin{align}
	\label{eqn:locmonzeropre}
\begin{pmatrix}\phi_2(t_0,t) \\ \phi_1(t_0;t)\end{pmatrix}_{\circlearrowleft}=\begin{pmatrix}1 & -2\pi i \\ 0 & 1 \end{pmatrix}\begin{pmatrix}\phi_2(t_0;t) \\ \phi_1(t_0;t)\end{pmatrix}. 
\end{align}
Repeating the calculation for $t_1=1$ and $t_2=9$ shows that the structure of the solutions equals those in eq.~\eqref{eqn:sunzerolocsolall}, just the coefficients are different. This is expected, since all singular points are MUM-points. Accordingly, the three local monodromy matrices are equal: 
\begin{align}
	\label{eqn:locmonzeron}
	\rho_{0}(\gamma_{0})=\rho_{1}(\gamma_{1})=\rho_{9}(\gamma_{9})=\rho_{\infty}(\gamma_{\infty})=\begin{pmatrix}1 & -2\pi i \\ 0 & 1 \end{pmatrix}. 
\end{align}
Matching the local solutions according to eq.~\eqref{eqn:calcmatchingmatrices} leads to:
\begin{subequations}
\begin{align}
	R_{0,1}&=
               \left(
               \begin{array}{cc}
                -\frac{3 \sqrt{3} \log (3)}{2 \pi } & \frac{24 \sqrt{2} \log (3)-\sqrt{3} \pi ^3}{2 \pi ^2}+\frac{3}{2} i \sqrt{3} \log (3) \\
                -\frac{3 \sqrt{3}}{4 \pi } & \frac{6 \sqrt{2}}{\pi ^2}+\frac{3 i \sqrt{3}}{4} \\
               \end{array}
               \right),\\
	R_{1,9}&=
		\left(
		\begin{array}{cc}
		 -\frac{8 \sqrt{2}}{3 \pi ^2} & \frac{1}{3} \sqrt{\frac{2}{3}} \pi  \log ^2(3)+\frac{8 i \sqrt{2}}{3 \pi } \\
		 -\frac{1}{\sqrt{3} \pi } & \frac{8 i \sqrt{6}+\pi ^2+\sqrt{2} \pi ^2 \log ^2(3)}{24 \sqrt{2}} \\
		\end{array}
		\right)\,,\\
	R_{9,\infty}&=
		\left(
		\begin{array}{cc}
		 -\frac{3 \pi^2 +3\sqrt{2} \pi^2  \log ^2(3)}{4 \sqrt{2}} & -4 \sqrt{3} \pi  \\
		 -\frac{6 \sqrt{3}}{\pi } & 0 \\
		\end{array}
		\right)
	\,.
\end{align}
\end{subequations}
Using eq.~\eqref{eqn:monodromytranslation} one can straightforwardly calculate the monodromy matrices in the basis of solutions $\phi(0;t)$:
\begin{align}
\rho_0(\gamma_0)&=\left(
\begin{array}{cc}
 1 & -2 i \pi  \\
 0 & 1 \\
\end{array}
\right)\,,\qquad
\rho_0(\gamma_9)=\left(
\begin{array}{cc}
 1-\frac{6 i \log (3)}{\pi } & \frac{12 i \log ^2(3)}{\pi } \\
 -\frac{3 i}{\pi } & 1+\frac{6 i \log (3)}{\pi } \\
\end{array}
\right),\\
\nonumber
\rho_0(\gamma_1)&=\left(
\begin{array}{cc}
 7-\frac{18 i \log (3)}{\pi } & -\frac{4 i (\pi -3 i \log (3))^2}{\pi } \\
 -\frac{9 i}{\pi } & -5+\frac{18 i \log (3)}{\pi } \\
\end{array}
\right),
\rho_0(\gamma_\infty)=\left(
\begin{array}{cc}
 7-\frac{12 i \log (3)}{\pi } & -\frac{6 i (\pi -2 i \log (3))^2}{\pi } \\
 -\frac{6 i}{\pi } & -5+\frac{12 i \log (3)}{\pi } \\
\end{array}
\right)\,.
\end{align}
This form of the monodromy matrices is not very enlightening. However, we can choose a basis of solutions such that the entries of the monodromy matrices have integer entries. Using a little algebra one can show that conjugation with the following matrix will bring all three monodromy matrices into integral form: 
\begin{align}
\label{eq:sunrise_a_p_rot}
a\,\left(
\begin{array}{cc}
	1 & -2 \log (3)+2{\pi i}\,p \\
	0 & 2 i \pi\,s  \\
\end{array}
\right)\,,\qquad p\in\mathbb{Z},\,s=\pm 1.
\end{align}
While the scaling parameter $a$ will drop out in the similarity transformation, choosing different parameters $p$ and $s$ will lead to different choices of generators. Setting $p=0$ and $s=-1$ for example leads to the following four matrices:
\beq\bsp
\label{eqn:sunresult}
\tilde{\rho}_0(\gamma_0)=\left(
\begin{array}{cc}
 1 & 1 \\
 0 & 1 \\
\end{array}
\right)\!,&\qquad
\tilde{\rho}_0(\gamma_1)=\left(
\begin{array}{cc}
 1 & 0 \\
 -6 & 1 \\
\end{array}
\right)\!,\\
\tilde{\rho}_0(\gamma_9)=\left(
\begin{array}{cc}
 7 & 2 \\
 -18 & -5 \\
\end{array}
\right)\!,
&\qquad \tilde{\rho}_0(\gamma_\infty)=\left(
\begin{array}{cc}
 7 & 3 \\
 -12 & -5 \\
\end{array}
\right)\,.
\esp\eeq
Combining the four matrices according to the succession of poles in figure~\ref{fig:sunmonodromy}, one finds $\tilde{\rho}_0(\gamma_{\infty})\tilde{\rho}_0(\gamma_9)\tilde{\rho}_0(\gamma_1)\tilde{\rho}_0(\gamma_0)=(\begin{smallmatrix}1 & 0\\0 & 1\end{smallmatrix}$). 
Note that the change of basis in eq.~\eqref{eq:phi_to_psi} corresponds to $(a,s,p) = (-i/\sqrt{3},1,0)$. This shows that the monodromy matrices in the basis $(\Psi_2(t),\Psi_1(t))$ have integer entries.

The matrices $\tilde{\rho}_0(\gamma_{i})$, $i\in\{0,1,9\}$, generate the monodromy group of $\cL_2^{\mathsf{sun}}$. We can see that 
\beq
\tilde{\rho}_0(\gamma_{i}) = \left(\begin{matrix}1 & \ast \\ 0 &1\end{matrix}\right)\!\!\!\! \mod 6\,.
\eeq
Since this relation holds for the generators, it must hold for all elements of the monodromy group, and so we see that the monodromy group must be a subgroup of $\Gamma_1(6)$ (cf.~eq.~\eqref{eq:Gamam1(N)_def}). We can show that the converse is also true. A short crosscheck shows that the generators for $\Gamma_1(6)$ delivered\footnote{The SAGE command to get a $2{\times}2$ matrix representation of a minimal set of generators of $\Gamma_1(6)$ reads \texttt{Gamma1(6).generators()} and delivers $(\begin{smallmatrix}1 & 1 \\ 0 & 1\end{smallmatrix}),\,(\begin{smallmatrix}-5 & 1 \\ -6 & 1\end{smallmatrix})$ and $(\begin{smallmatrix}7 & -3 \\ 12 & -5\end{smallmatrix})$.} by SAGE~\cite{SAGE} indeed generate the matrices in eq.~\eqref{eqn:sunresult}. Checking furthermore independence of the three matrices and noting that $\dim\Gamma_1(6)=3$, the monodromy group of $\cL_2^{\mathsf{sun}}$ is indeed $\Gamma_1(6)$.

\paragraph{The modular forms for the sunrise graph.}
Having identified the monodromy group of the two-loop sunrise integral with the congruence subgroup $\Gamma_1(6)$, it follows from the general discussion in section~\ref{sec:DEQs} that we expect the maximal cuts to define modular forms of weight 1 for $\Gamma_1(6)$. This agrees with the findings of refs.~\cite{Bloch:2013tra,Adams:2017ejb}. To make this explicit, we introduce a modular parametrisation and we
define the new variables $\tau$ and $q$ by (cf. eq.~\eqref{eq:tau_def_generic}):
\beq\bsp\label{eq:tau_def}
\tau = \frac{\Psi_2(t)}{\Psi_1(t)} &\,= \log(t/9) + \frac{4t}{9} + \frac{14t^2}{81} + \ord(t^3)\,,\\
q &\, =e^{2\pi i\tau} = \frac{t}{9} + \frac{4t^2}{81} + \ord(t^3)\,.
\esp\eeq
Note that we have chosen $\Psi_1(t)$ and $\Psi_2(t)$ such that $\Im\tau>0$ for $t\in X^{\mathsf{sun}}=\mathbb{P}^1_{\mathbb{C}}\setminus\{0,1,9,\infty\}$, and so $\tau\in \HP$. The change of variable from $t$ to $q$ is holomorphic at $t=0$ and can be inverted to express $t$ in terms of $q$. The result is the well-known expression for Hauptmodul for $\Gamma_1(6)$ in terms of Dedekind's eta function~\cite{Maier,Bloch:2013tra}:
 \beq\label{eq:t_def}
 t(\tau) = 9\,\frac{\eta(\tau)^4\eta(6\tau)^8}{\eta(2\tau)^8\eta(3\tau)^4} = 9\,q+\ord(q^2)\,,
 \eeq
 where $\eta(\tau)$ denotes Dedekind's eta function:
 \beq
 \eta(\tau) = e^{i\pi\tau/12}\,\prod_{n=1}^{\infty}(1-e^{2\pi in\tau})\,.
 \eeq
 It is easy to check (e.g., by comparing $q$-expansions with the basis of modular forms given by {\tt Sage}) that the function $h_1(\tau) := \Psi_1(t(\tau))$ defines a modular form of weight 1 for $\Gamma_1(6)$, as expected. In fact, it admits itself an expression in terms of Dedekind eta functions~\cite{Maier,Bloch:2013tra,Adams:2017ejb}:
 \beq
 h_1(\tau) = \frac{2\pi}{\sqrt{3}}\,\frac{\eta(2\tau)^6\eta(3\tau)}{\eta(\tau)^3\eta(6\tau)^2}\,.
 \eeq
Note that this is another way to see that the congruence subgroup naturally associated to the two-loop equal-mass sunrise integrals is $\Gamma_1(6)$ rather than $\Gamma_1(12)$, in agreement with the analysis in the literature~\cite{Adams:2017ejb,Frellesvig:2021vdl}. Moreover, we emphasise that nothing in our analysis is specific to the sunrise integral, and the exact same reasoning can be applied to other second-order differential operators describing families of elliptic curves, in particular those that appear in Feynman integrals computations.

\subsection{The banana family}
\paragraph{The third-order operator for the banana graph.}
We now repeat the computations of the previous section in the case of the three-loop equal-mass banana integral. Our goal is to determine the monodromy group of the differential operator in eq.~\eqref{eq:L_ban_3}. The 
calculation is slightly more complicated than in the two-loop case because the differential operator that annihilates the maximal cuts of $I^{\ban}_{1,1,1,1,0,0,0,0,0}(p^2,m^2;2)$ is of order three~\cite{Bloch:2014qca,Bloch:2016izu,Primo:2017ipr}:
\begin{equation}\label{eq:L_ban_3}
	\cL^{\ban,(3)}_x = \partial_x^3+\frac{3(8x-5)}{2(x-1)(4x-1)}\partial_x^2+\frac{4x^2-2x+1}{(x-1)(4x-1)x^2}\partial_x+\frac{1}{x^3(4x-1)}\,,
\end{equation}
with 
\beq
x = \frac{4m^2}{p^2}\,.
\eeq
In general, finding the kernel of a high-order operator can be a monumental task, and no closed form for the solution is necessarily known. 
The kernel of $\cL^{\ban,(3)}_x$ can be determined by noting that it is~\cite{Primo:2017ipr,joyce} the symmetric square
of (see subsection~\ref{ssec:ClassModularParametrization})
\begin{equation}\label{eq:L2x}
 \cL^{\ban,(2)}_x = \partial_x^2 + \frac{8x-5}{2(x-1)(4x-1)}\partial_x-\frac{2x-1}{4x^2(x-1)(4x-1)}\,.
\end{equation} 
The fact that $\cL^{\ban,(3)}_x$ is a symmetric square has a geometric origin: The $l$-loop equal-mass banana integral is associated to a one-parameter family of Calabi-Yau $(l-1)$-folds~\cite{Bloch:2014qca,Bloch:2016izu,Klemm:2019dbm,Bonisch:2020qmm,Bonisch:2021yfw}, and the maximal cuts of the $l$-loop equal-mass banana integral are annihilated by the Picard-Fuchs operator for this family, which has degree $l$. It is expected that the degree-three Picard-Fuchs operator of a one-parameter family of Calabi-Yau two-folds (also known as K3 surfaces) is always the symmetric square of a Picard-Fuchs operator describing a one-parameter family of elliptic curves, cf., e.g., ref.~\cite{Doran:1998hm}.

If we want to apply the results of section~\ref{sec:DEQs}, in particular Theorem~\ref{thm:section2}, we need all singular points of the second-order operator to be MUM-points. This, however, is not the case here, but only the singularities at $x=0$ and $x=\infty$ are MUM-points. 
We therefore perform the change of variables 
\begin{equation}\label{eq:change_of_vars}
  x(t)=\frac{-4\,t}{(t-1)(t-9)}\,.
\end{equation}
After this change of variables, one can see that $\textrm{Sol}(\cL^{\ban,(2)}_x)$ is spanned by $\sqrt{t}\Psi_1(t)$ and $\sqrt{t}\Psi_2(t)$, with $(\Psi_2(t),\Psi_1(t))$ defined in eq.~\eqref{eq:phi_to_psi}, i.e., they form a basis for $\textrm{Sol}(\cL_t^{\sun})$. It follows that 
\beq
\textrm{Sol}(\cL^{\ban,(3)}_x) = \mathbb{C}\,t\Psi_1(t)^2 \oplus  \mathbb{C}\,t\Psi_1(t)\Psi_2(t)  \oplus\mathbb{C}\,t\Psi_2(t)^2\,.
\eeq
In other words, we see that the solution space has the structure of the solution space of a symmetric square (up to the overall factor of $t$). The change of variables in eq.~\eqref{eq:change_of_vars} is 2-to-1, and the four MUM-points $t\in \{0,1,9,\infty\}$ of $\cL_x^{\sun}$ are mapped to the two MUM-points   $x\in \{0,\infty\}$ of $\cL^{\ban,(2)}_x$. The upshot is that after the change of variables in eq.~\eqref{eq:phi_to_psi}, Theorem~\ref{thm:section2} applies, and we expect the three-loop equal-mass banana integrals to be expressible in terms of iterated integrals of meromorphic modular forms for $\Gamma_1(6)$. We will investigate this in detail in section~\ref{sec:bananameromorphic}. In the remainder of this section we analyse the monodromy group associated to the banana integral in more detail.

\paragraph{The monodromy group.}
There are four regular singular points, and
the coefficient functions in $\cL_x^{\mathsf{ban},(3)}$ have poles at $(x_0,x_1,x_2,x_3)=(0,1/4,1,\infty)$ (see figure~\ref{fig:banana}).
\begin{figure}
\begin{center}\includegraphics{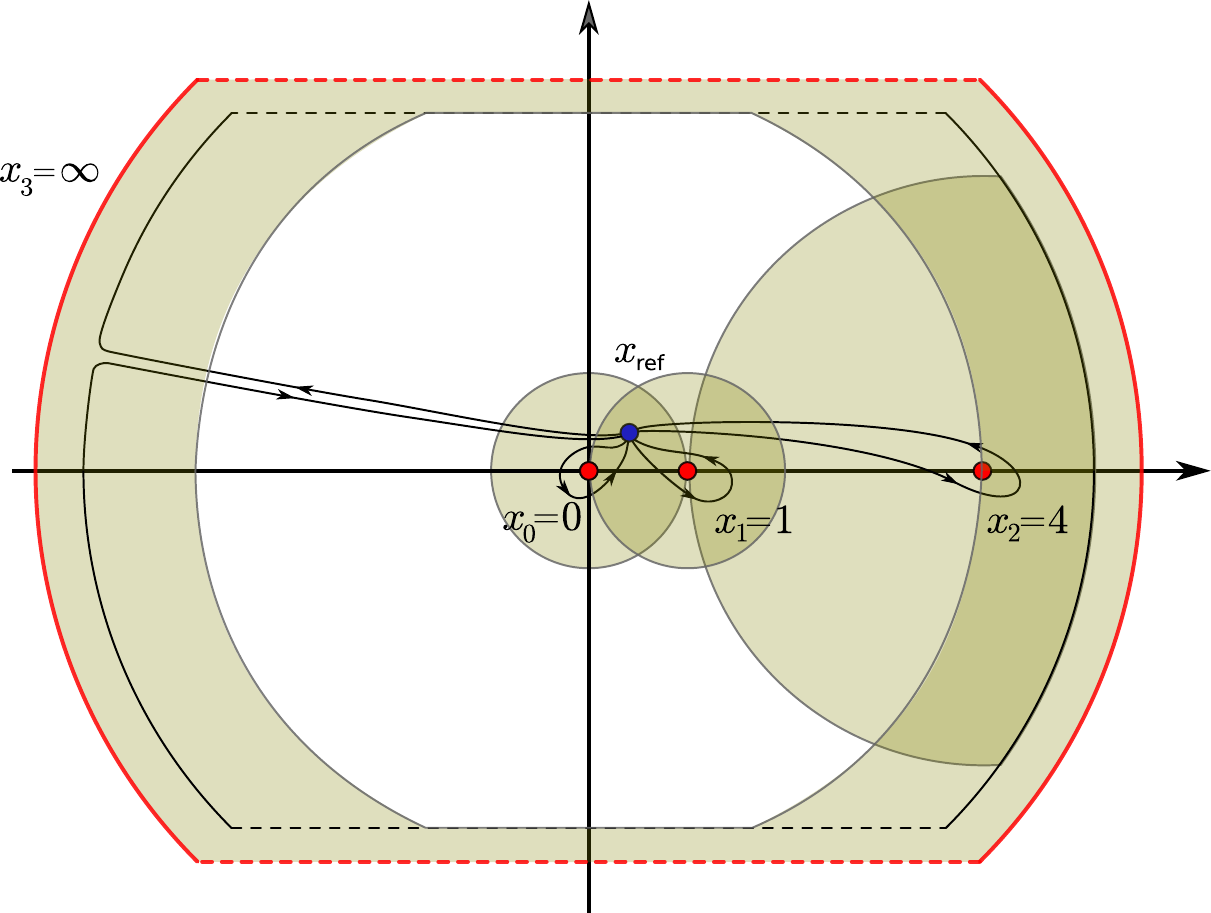}\end{center}
\caption{Geometry of the banana differential operator in eq.~\eqref{eq:L_ban_3}. The coefficient functions have poles at $(x_0,x_1,x_2,x_3)=(0,1/4,1,\infty)$. The corresponding radii of convergence are shaded in green.} 
\label{fig:banana}
\end{figure}
The singular point $x_0=0$ is a MUM-point, and the Frobenius method delivers three solutions, which read:
\beq\bsp
	\label{eqn:banzerolocsolall}
	\chi_3(x_0;x)&=x \left(\frac{9 x^2}{4}+\frac{135 x^3}{16}+\frac{7089 x^4}{256}+ \mathcal{O}(x^5)\right)+2 \chi_2(x_0;x) \log (x)\\
	&\,+\chi_1(x_0;x) \log ^2(x)\,,\\
\chi_2(x_0;x)&=x \left(\frac{3 x}{2}+\frac{57 x^2}{16}+\frac{73 x^3}{8}+\frac{13081 x^4}{512} + \mathcal{O}(x^5)\right)+\chi_1(x_0;x) \log (x)\,,\\
\chi_1(x_0;x)&=x \left(1 + x + \frac{7 x^2}{4} + 4 x^3 + \frac{679 x^4}{64} + \mathcal{O}(x^5)\right)\,.
\esp\eeq
This basis is related to the basis of ref.~\cite{Primo:2017ipr} (see also eq.~\eqref{eq:H1_to_Psi1}) via a constant rotation
\begin{align}
	\label{eqn:Brot}
	\begin{pmatrix}\chi_3(x_0;x)\\\chi_2(x_0;x)\\\chi_1(x_0;x)\end{pmatrix} = B
	\begin{pmatrix}I_1(x)\\J_1(x)\\H_1(x)\end{pmatrix},
\end{align}
with
\begin{align}
B=\left(
\begin{array}{ccc}
 \frac{4}{3} & -\frac{8 \log (2)}{\pi }+\frac{4 i}{3} & -1+\frac{4 \log ^2(2)}{\pi ^2}-\frac{4 i \log (2)}{\pi } \\
 0 & -\frac{2}{\pi } & \frac{2 \log (2)}{\pi ^2}-\frac{i}{\pi } \\
 0 & 0 & \frac{1}{\pi ^2} \\
\end{array}
\right)\,.
\end{align}
In eq.~\eqref{eq:H1_to_Psi1} we also show how the functions $(H_1(x), I_1(x), J_1(x))$ are related to the maximal cuts of the two-loop sunrise integral.

The hierarchy of logarithms in eq.~\eqref{eqn:banzerolocsolall} allows us to read off the monodromy matrix:\footnote{Different than for the sunrise above, this time we are not going to normalize the matrix right away, for reasons to become clear below.}
\begin{align}
	\label{eqn:locmonzero}
	\rho_0(\gamma_{0})= 
\left(
\begin{array}{ccc}
 1 & -4 i \pi  & -4 \pi ^2 \\
 0 & 1 & -2 i \pi  \\
 0 & 0 & 1 \\
\end{array}
\right)\,.
\end{align}
The structure of the local solutions around the poles at $x_1=1/4$ and $x_2=1$ is different. For the singularity point $x_1=1/4$, the Frobenius method delivers the following functions: 
\beq\bsp
	\label{eqn:banquarterlocsol}
	\chi_3(x_1;x)&=1+4 \left(x-\qu\right)+\frac{64}{45} \left(x-\qu\right)^3-\frac{512}{945} \left(x-\qu\right)^4+\mathcal{O}\left(\left(x-\qu\right)^5\right)\,,\\
	\chi_2(x_1;x)&=
\sqrt{x-\qu}\left(1+2 \left(x-\qu\right)-2 \left(x-\qu\right)^2+\frac{548}{105} \left(x-\qu\right)^3\right.\\
		     &\qquad\qquad\qquad\qquad\qquad\qquad\left.-\frac{1306}{105} \left(x-\qu\right)^4+\mathcal{O}\left(\left(x-\qu\right)^5\right)\right)\,,\\
	\chi_1(x_1;x)&=
\left(x-\qu\right)\left(1+\frac{4}{3}
   \left(x-\qu\right)-\frac{16}{9} \left(x-\qu\right)^2+\frac{1088}{189} \left(x-\qu\right)^3\right.\\
		     &\qquad\qquad\qquad\qquad\qquad\qquad\left.
   -\frac{8704}{567} \left(x-\qu\right)^4+\mathcal{O}\left(\left(x-\qu\right)^5\right)\right)\,.
\esp\eeq
The structure of the solution space close to $x_2=1$ is similar. We see that the local exponents for $x_1$ and $x_2$ are $0,1/2,1$. Hence, the singular points $x_1$ and $x_2$ are not MUM-points. 
However, the corresponding monodromy matrices can be read off immediately also in this case: while the polynomials in $\chi_3$ and $\chi_1$ have trivial monodromy, the square root in $\chi_2(x_1;x)$ acquires a minus sign when transported around the singularity. Thus one finds: 
\begin{align}
	\label{eqn:locmonquarterone}
	\rho_{\qu}(\gamma_{\qu})=\rho_1(\gamma_{1})=\begin{pmatrix}1 & 0 & 0 \\ 0 & -1 & 0 \\ 0 & 0 & 1\end{pmatrix}. 
\end{align}
The singular point $x_3=\infty$ is a MUM-point. Substituting $x\to\frac{1}{y}$ in  $\cL^{\ban,(3)}_x$ leads to the differential operator 
\begin{equation}
	\cL^{\ban,(3)}_{1/y} = -y^3 \partial^3_y-\frac{3 (y (4 y-15)+8) y^2 }{2 (y-4) (y-1)}\partial^2_y-\frac{(y (7 y-17)+4) y }{(y-4) (y-1)}\partial_y-\frac{y}{y-4}
\end{equation}
Since $y=0$ is a MUM point, the structure of the solutions has a logarithmic hierarchy again, and the monodromy matrix equals the one in eq.~\eqref{eqn:banzerolocsolall}: $\rho_\infty(\gamma_{\infty})=\rho_0(\gamma_0)$.

Again, the matching matrices can be calculated using eq.~\eqref{eqn:calcmatchingmatrices}. However, while it was comparably easy to infer the analytic values of the matrix entries in the sunrise case, the banana matrices require rather high orders in the expansion of the solutions. Using expansions up to order 120 and the PSLQ~\cite{PSLQ} algorithm, as well as eq.~\eqref{eqn:monodromytranslation}, one obtains the monodromy matrices ($\mathrm{L}_2=\log(2)$):
\begin{small}
\begin{align}
	\label{eqn:banresult}
	\rho_0(\gamma_0)&=
\left(
\begin{array}{ccc}
 1 & -4 i \pi  & -4 \pi ^2 \\
 0 & 1 & -2 i \pi  \\
 0 & 0 & 1 \\
\end{array}
\right)\!,\\
\nonumber	\rho_0(\gamma_{\qu}) & =\frac{1}{\pi^2}
\left(
\begin{array}{ccc}
 12 \mathrm{L}_2^2 & 4 \mathrm{L}_2 \left(\pi ^2-12 \mathrm{L}_2^2\right) & \frac{1}{3} \left(\pi ^2-12 \mathrm{L}_2^2\right)^2 \\
 6 \mathrm{L}_2 & \pi ^2-24 \mathrm{L}_2^2 & -2 \mathrm{L}_2 \left(\pi ^2-12 \mathrm{L}_2^2\right) \\
 3 & -12 \mathrm{L}_2 & 12 \mathrm{L}_2^2 \\
\end{array}
\right)\!,\,\\
\nonumber	\rho_0(\gamma_1)&=\frac{1}{\pi^2}
\left(
\begin{array}{ccc}
	3 (4 \mathrm{L}_2{+}i \pi )^2 & -4 (4 \mathrm{L}_2{+}i \pi ) \left(12 \mathrm{L}_2^2{+}6 i \pi  \mathrm{L}_2{-}\pi ^2\right) & \frac{4}{3} \left(-12 \mathrm{L}_2^2-6 i \pi  \mathrm{L}_2+\pi
   ^2\right)^2 \\
		6 (4 \mathrm{L}_2{+}i \pi ) & -96 \mathrm{L}_2^2-48 i \pi  \mathrm{L}_2+7 \pi ^2 & 2 (4 \mathrm{L}_2{+}i \pi ) \left(12 \mathrm{L}_2^2{+}6 i \pi  \mathrm{L}_2{-}\pi ^2\right) \\
 12 & -12 (4 \mathrm{L}_2+i \pi ) & 3 (4 \mathrm{L}_2+i \pi )^2 \\
\end{array}
\right)\!,\,\\
\nonumber	\rho_0(\gamma_\infty)&=\frac{1}{\pi^2}
\left(
\begin{array}{ccc}
 4 (2 \pi -3 i \mathrm{L}_2)^2 & -12 i (\pi -2 i \mathrm{L}_2)^2 (2 \pi -3 i\mathrm{L}_2) & -9 (\pi -2 i \mathrm{L}_2)^4 \\
 -18 \mathrm{L}_2-12 i \pi  & 72\mathrm{L}_2^2+72 i \pi \mathrm{L}_2-17 \pi ^2 & 6 (3 \mathrm{L}_2+i \pi ) (\pi -2 i\mathrm{L}_2)^2 \\
 -9 & 12 (3 \mathrm{L}_2+i \pi ) & 4 (\pi -3 i\mathrm{L}_2)^2 \\
\end{array}
\right).
\end{align}
\end{small}
We can check that $\rho_0(\infty)=(\rho_0(1)\rho_0(\qu)\rho_0(0))^{-1}$. 
These matrices generate the monodromy group associated to the banana differential operator $\cL^{\ban,(3)}_x$ as a subgroup of $\textrm{GL}_3(\mathbb{C})$. It is possible to choose a basis for the solution space so that the monodromy matrices have integer entries. However, this will not be needed in our case. Instead, we want to use the fact that $\cL^{\ban,(3)}_x$ is a symmetric square to identify the image of the monodromy group in $\textrm{GL}_3(\mathbb{C})$ as arising from a group of $2\times2$ matrices. More precisely, we are looking for a set of $2\times2$ matrices such that acting with the monodromy matrix $\rho_0(0)$ on a $3$-dimensional solutions vector should equal the action of the $2\times 2$-representation on the solution vector $\begin{pmatrix}\Psi_2\\\Psi_1\end{pmatrix}$. Since the relation between the two- and three-dimensional solution spaces is most transparent from eq.~\eqref{eq:H1_to_Psi1}, we prefer to work here with the basis of the solution space of $\cL_x^{\mathsf{ban}}$ from eq.~\eqref{eq:H1_to_Psi1}. Due to the change of basis we need to conjugate the monodromy matrices in eq.~\eqref{eqn:banresult} with the matrix $B$ from eq.~\eqref{eqn:Brot} resulting in 
\begin{align}
	\label{eqn:BMon}
&\tilde{\rho}_0(\gamma_0)=\left(
\begin{array}{ccc}
 1 & 6 i & -5 \\
 0 & 1 & i \\
 0 & 0 & 1 \\
\end{array}
\right)\!,\quad\tilde{\rho}_0(\gamma_{\qu})=\left(
\begin{array}{ccc}
 1 & 0 & 0 \\
 -2 i & 3 & 2 i \\
 4 & 4 i & -3 \\
\end{array}
\right)\!,\\
&\tilde{\rho}_0(\gamma_1)\left(
\begin{array}{ccc}
 -3 & -10 i & 7 \\
 -12 i & 31 & 21 i \\
 16 & 40 i & -27 \\
\end{array}
\right)\,.\nonumber
\end{align}
The comparison is made in components, here for example the equation for the third component: 
\begin{equation}
	\left(\underbrace{B^{-1}\rho_0(0) B}_{\tilde{\rho}_0(\gamma_0)} \begin{pmatrix}I_1(x)\\J_1(x)\\H_1(x)\end{pmatrix}\right)_{\!\!3}\stackrel{!}{=}-\frac{1}{2}\,t\,\left(\Psi_1(t)^2\right)_{\circlearrowleft}\,,
\end{equation}
where we use the following ansatz for the monodromy matrices:
\begin{equation}
	\label{eqn:tbtansatz}
	\begin{pmatrix}\Psi_2(t)\\\Psi_1(t)\end{pmatrix}_{\circlearrowleft} = \begin{pmatrix}c_{11} & c_{12} \\ c_{21} & c_{22}\end{pmatrix}\begin{pmatrix}\Psi_2(t)\\\Psi_1(t)\end{pmatrix}\,.
\end{equation}
Plugging in a couple of numerical values for $t$, which are selected such as to place $x$ in the corresponding region, allows to determine the values $c_{ij}$ in the ansatz in eq.~\eqref{eqn:tbtansatz} for each generator. Finally, one finds the following representations for the generators of the monodromy group: 
\beq\bsp
	\label{eqn:banresult2}
	\cR_0:=\rho_0^\tbt(\gamma_0)    &=\begin{pmatrix}1 & -1 \\ 0 & 1 \end{pmatrix},\quad 
	\cR_{\qu}:=\rho_0^\tbt(\gamma_{\qu})  =-i{\sqrt{3}}\begin{pmatrix}1 & 2/3 \\ -2 & -1 \end{pmatrix},\\
	\cR_1:=\rho_0^\tbt(\gamma_1)    &=-i{\sqrt{3}}\begin{pmatrix}1 & 1/3 \\ -4 & -1 \end{pmatrix}\,.
\esp\eeq
These three matrices generate a subgroup $\Gamma^{\mathsf{ban},(2)}$ of $\textrm{GL}_2(\mathbb{C})$, which is closely related to the monodromy group $\Gamma^{\mathsf{ban},(3)}\subset\textrm{GL}_3(\mathbb{C})$ generated by the matrices $\tilde{\rho}_0(\gamma_0)$, $\tilde{\rho}_0(\gamma_{\qu})$ and $\tilde{\rho}_0(\gamma_1)$ in eq.~\eqref{eqn:BMon}. 
More precisely, consider the map $\sigma: \textrm{GL}_2(\mathbb{C}) \to \textrm{GL}_3(\mathbb{C})$ defined by
\beq
\sigma\left(\begin{smallmatrix}a& b\\c& d\end{smallmatrix}\right) = \frac{1}{3}\,\left(
\begin{smallmatrix}
 (a+c) (3 a+c) & 2 i \left(6 a^2-9 a b+8 a c-6 a d-6 b c+2 c^2-3 c d\right) & -3 (3 a-3 b+c-d) (a-b+c-d) \\
 i c (a+c) & -4 a c+3 a d+3 b c-4 c^2+6 c d & -3 i (c-d) (a-b+c-d) \\
 -c^2 & -2 i c (2 c-3 d) & 3 (c-d)^2 \\
\end{smallmatrix}
\right)\,.
\eeq
One can show that $\sigma$ is a group homomorphism with kernel $\textrm{Ker }\sigma = \mathbb{Z}_2$ such that
\beq
\tilde\rho_0 = \sigma\circ \rho_0^{2\times2}\,.
\eeq
Together with $-\mathds{1}\notin\Gamma^{\mathsf{ban},(2)}$, it follows that $\sigma(\Gamma^{\mathsf{ban},(2)}) = \Gamma^{\mathsf{ban},(3)}$, and so $\Gamma^{\mathsf{ban},(2)}$ and $\Gamma^{\mathsf{ban},(3)}$ are isomorphic. 

Let us discuss the structure of the group $\Gamma^{\mathsf{ban},(2)}$ in a bit more detail. First, one can check (e.g., by comparing to {\tt Sage}) that $\cR_0$, $\cR_0^{-1}\cR_{1}\cR_{\qu}\cR_0$, $\cR_0^{-1}\cR_{\qu}\cR_0\cR_{\qu}\cR_0$ and $-\mathds{1}$ are generators of $\Gamma_0(6)$. Note that $\cR_1$ and $\cR_{\qu}$ are self-inverse. We thus see that, while $\Gamma^{\mathsf{ban},(2)}$ does not contain $\Gamma_0(6)$ as a subgroup, it does contain\footnote{The notion $\overline{\Gamma}$ has been defined at the end of subsection~\ref{ssec:sunrisefamily}.} $\overline{\Gamma_0(6)}$. Moreover, one can easily check that 
\beq\label{eq:Rel_to_Verrill}
\overline{\Gamma^{\mathsf{ban},(2)}}\simeq \overline{\Gamma_0(6)^{+3}}\,,
\eeq
with
\beq\bsp
\Gamma_0(6)^{+3} &\,= \left\{\left(\begin{smallmatrix} a& b\\ 6c &d\end{smallmatrix}\right),\sqrt{3}\left(\begin{smallmatrix} a& b/3\\ 2c &d\end{smallmatrix}\right)\in\textrm{SL}_2(\mathbb{R})\big| a,b,c,d\in\mathbb{Z}\right\}\\
&\,=\Gamma_0(6) \cup (i\cR_{\qu})\Gamma_0(6)\,.
\esp\eeq

Next, let us discuss how modular forms and modular functions make an appearance here. We define (cf.~eq.~\eqref{eq:tau_def}):
\beq\bsp
\tau &\,= i\frac{J_1(x)}{H_1(x)}-1 = \frac{\Psi_2(t)}{\Psi_1(t)}\,.
\esp\eeq
We can invert this relation to express $x$ in terms of $\tau$~\cite{verrill1996}:
\beq
x(\tau) = -4\,\left(\frac{\eta(2\tau)\eta(6\tau)}{\eta(\tau)\eta(3\tau)}\right)^6\,.
\eeq
We also define:\footnote{Our definition of $\varpi(\tau)$ differs from the one used by Verrill in ref.~\cite{verrill1996} by a factor $(2\pi i)^2$, i.e., $\varpi^{\textrm{our}}(\tau) = (2\pi i)^2\varpi^{\textrm{Ver.}}(\tau)$.}
\beq
\varpi(\tau) = H_1(x(\tau)) = \frac{\eta(2\tau)^4\eta(6\tau)^4}{\eta(\tau)^2\eta(3\tau)^2}\,.
\eeq
Let us discuss the modular properties of $x(\tau)$ and $\varpi(\tau)$. One finds that $x(\tau)$ is a modular function and $\varpi(\tau)$ is a modular form of weight two for $\Gamma_0(6)$
Moreover, we find
\beq
x(\cR_{\qu}\cdot \tau) = x(\cR_{1}\cdot \tau) = x(\tau)\,,
\eeq
which shows that $x(\tau)$ is a modular function for the monodromy group $\Gamma^{\mathsf{ban},(3)}\simeq\Gamma^{\mathsf{ban},(2)}$, as expected. In addition, since $\Gamma^{\mathsf{ban},(2)}$ acts via M\"obius transformations via $\overline{\Gamma^{\mathsf{ban},(2)}}$, eq.~\eqref{eq:Rel_to_Verrill} implies that $x(\tau)$ is also a modular function for $\Gamma_0(6)^{+3}$. Similarly, we find:
\beq\bsp\label{eq:modular_varpi}
\varpi(\cR_{\qu}\cdot \tau) &\,= -3(2\tau+1)^2\,\varpi(\tau)\,,\\
\varpi(\cR_{1}\cdot \tau) &\,= -3(4\tau+1)^2\,\varpi(\tau)\,.
\esp\eeq
Accordingly, $\varpi(\tau)$ is a modular form of weight two for the monodromy group $\Gamma^{\mathsf{ban},(3)}\simeq\Gamma^{\mathsf{ban},(2)}$, again as expected. However, $\varpi(\tau)$ is not a modular form for $\Gamma_0(6)^{+3}$, which would require the factor of automorphy to be $+3(c\tau+d)$ in eq.~\eqref{eq:modular_varpi}.

\vspace{1mm}\noindent



\section{Banana integrals and iterated integrals of meromorphic modular forms}
\label{sec:bananameromorphic}

The analysis of the monodromy group of the sunrise and banana integrals implies via Theorem~\ref{thm:section2} that both integrals can be expressed through all orders in $\eps$ in terms of iterated integrals of meromorphic modular forms for the congruence subgroup $\Gamma_1(6)$, which is a neat subgroup in the sense of Definition~\ref{defi:neat}. In the remainder of this section we make this statement concrete, and we derive a form of the differential equation satisfied by the master integrals for the sunrise and banana families that involves the basis of meromorphic modular forms defined in Theorem~\ref{thm:main} only. 

The strategy of section~\ref{sec:FIs_and_deqs} of solving the first-order systems then implies that all iterated integrals that appear in the solution, to all orders in the dimensional regulator $\eps$, only involves the basis of meromorphic modular forms implied by Theorem~\ref{thm:main}. Note that once this form of the differential equation is known, it is straightforward to solve it explicitly. In particular, the initial condition is known to all orders in $\eps$ in terms of $\Gamma$ functions for banana integrals of arbitrary loop order~\cite{Bonisch:2021yfw}.

\subsection{The iterated integrals for the sunrise integral}
We start by discussing the case of the two-loop equal-mass sunrise integral. This case is in principle well known, and we will show that we can recover the results of ref.~\cite{Adams:2018yfj}. However, we discuss this case in some detail, as it allows us to set our conventions and to point out differences with respect to the three-loop equal-mass banana integral, which will be discussed in section~\ref{sec:ban_iterated}.

There are two master integrals $(\cS_1(\eps;t), \cS_2(\eps;t))$ for the two-loop equal-mass sunrise integral, which are accompanied by a tadpole integral (which is constant in our normalisation: $I^{\sun}_{2,2,0,0,0}(p^2,m^2;2-2\eps)=1$)\cite{Remiddi:2016gno}:
\beq\bsp\label{eq:sunrise_MIs}
\cS_1(\eps;t) &\, = -I^{\sun}_{1,1,1,0,0}(p^2,m^2;2-2\eps)\,,\\
\cS_2(\eps;t) &\, = -\left[\frac{1}{3}(t^2-6t+21)-12\eps(t-1)\right]\,I^{\sun}_{1,1,1,0,0}(p^2,m^2;2-2\eps)\\
&\,\phantom{=}-2(t-1)(t-9)\,I^{\sun}_{2,1,1,0,0}(p^2,m^2;2-2\eps)\,,
\esp\eeq
where the variable $t$ is defined in eq.~\eqref{eq:t_def}. Both master integrals are finite for $\eps=0$ and satisfy the differential equation:
\beq\label{eq:sun_GM}
\partial_t\begin{pmatrix}\cS_1(\eps;t) \\ \cS_2(\eps;t)\end{pmatrix} = \Big[B^\sun(t) -2 \eps D^\sun(t)\Big]\begin{pmatrix}\cS_1(\eps;t) \\ \cS_2(\eps;t)\end{pmatrix} + \begin{pmatrix}0\\1\end{pmatrix}\,.
\eeq
Explicit expression for the matrices $B^\sun(t)$ and $D^\sun(t)$ are collected in appendix~\ref{app:sun}.

In a next step, we want to introduce a modular parametrisation and apply the results of section~\ref{sec:mero_sec}. In order to do this, the Hauptmodul needs to be normalised as in eq.~\eqref{eqn:Hauptmodul}. This is, however, not the case for the Hauptmodul for $\Gamma_1(6)$ defined in eq.~\eqref{eq:t_def}. We define:
\beq\label{eq:t_to_xi}
\tz(\tau) = \frac{9}{t(\tau)} = \frac{\eta(2\tau)^8\eta(3\tau)^4}{\eta(\tau)^4\eta(6\tau)^8} = \frac{1}{q} + \ord(q^0)\,.
\eeq
Similarly, we define
\beq
 \mH_1(\tau)
 = \frac{\sqrt{3}}{2\pi\,\tz(\tau)}\,\Psi_1(t(\tau)) = \frac{\eta(\tau)\eta(6\tau)^6}{\eta(2\tau)^2\eta(3\tau)^3} = q + \ord(q^2)\,,
\eeq
in agreement with the normalisation in eq.~\eqref{eq:Hk_normalisation} for $k=1$.\footnote{Since $\Gamma_1(6)$ is neat, we must $h=1$ in eq.~\eqref{eq:Hk_normalisation}. Moreover, $\Gamma_1(6)$ has four cusps, so eq.~\eqref{eq:eps_infty_to_dGamma} implies $d_{\Gamma_1(6)}=1$.} The Jacobian of the change of variables from $\tz$ to $\tau$ is
\beq
d\tz = -2\pi i \tz(\tau)(\tz(\tau)-1)(\tz(\tau)-9)\,\mH_1(\tau)^2\,d\tau\,.
\eeq
Then, letting
\beq
\begin{pmatrix}\cS_1(\eps;t) \\ \cS_2(\eps;t)\end{pmatrix}   = \frac{1}{\eps^2(2\pi i)^2\,\tz(\tz-1)(\tz-9)}\,W^{\sun}(\tau)\begin{pmatrix}\widetilde{\cS}_1(\eps;t) \\ \widetilde{\cS}_2(\eps;t)\end{pmatrix}\,,
\eeq
with
\beq
W^{\sun}(\tau) = \begin{pmatrix} (2\pi i)^2 \tz(\tz-1)(\tz-9)\,\mH_1(\tau) & 0 \\
\frac{\pi^2}{3}\left[11 \tz^2-54\tz+27+6\eps(\tz+3)^2\right]\,\mH_1(\tau)-\frac{G_2(\tau)}{\mH_1(\tau)} & -\frac{2\pi i \eps}{\mH_1(\tau)}
\end{pmatrix}\,,
\eeq
we find
\beq
\partial_{\tau}\begin{pmatrix}\widetilde{\cS}_1(\eps;t) \\ \widetilde{\cS}_2(\eps;t)\end{pmatrix} = \eps\,\widetilde{D}^\sun(\tau)\,\begin{pmatrix}\widetilde{\cS}_1(\eps;t) \\ \widetilde{\cS}_2(\eps;t)\end{pmatrix}+108\pi^2 \eps\,(\tz-1)(\tz-9)\,\mH_1(\tau)^3\,\begin{pmatrix}0\\1\end{pmatrix}\,,
\eeq
with
\beq
\widetilde{D}^\sun(\tau) = \begin{pmatrix}i\pi (\tz^2+10\tz-27) \,\mH_1(\tau)^2 & 1 \\ 
-\pi^2\,(\tz+3)^4\,\mH_1(\tau)^4 & i\pi (\tz^2+10\tz-27) \,\mH_1(\tau)^2
\end{pmatrix}\,.
\eeq
In the previous equations we used the shorthand $\tz=\tz(\tau)$ to keep the notation as light as possible. Let us comment on the form of the differential equation. We observe that the differential equation only involves (holomorphic) modular forms of weights up to 4 for $\Gamma_1(6)$. We also observe that $\eps$ factorises from the matrix multiplying the homogeneous part, so that the differential equation is in canonical form. As a consequence, the master integrals in the basis $(\widetilde{\cS}_1(\eps;t), \widetilde{\cS}_2(\eps;t))$ can be expressed in terms of iterated integrals of modular forms for $\Gamma_1(6)$, which are pure functions of uniform weight~\cite{ArkaniHamed:2010gh} according to the definition of ref.~\cite{Broedel:2018qkq}. The initial condition can be fixed to all orders in $\eps$ in terms of zeta values. These results are actually not new, but they agree with the findings of ref.~\cite{Adams:2018yfj}. The change of basis from $(\widetilde{\cS}_1(\eps;t), \widetilde{\cS}_2(\eps;t))$ to $({\cS}_1(\eps;t), {\cS}_2(\eps;t))$ involves a matrix whose entries are rational in $\eps$ and meromorphic quasi-modular forms for $\Gamma_1(6)$ with poles at most at the cusps. More precisely, for $i\ge j$, we have
\beq\bsp
\widetilde{D}^\sun(\tau)_{ij} &\,\in M_{2(1+i-j)}(\Gamma_1(6))\,,\\
W^{\sun}(\tau)_{ij} &\,\in \cQ\cM_{3-2j}^{\le (i-1)}(\Gamma_1(6),S_{\Gamma_1(6)})(\eps)\,.
\esp\eeq


\subsection{The iterated integrals for the three-loop banana integral}
\label{sec:ban_iterated}

We now extend the discussion of the previous section to the three-loop equal-mass banana integrals. We choose three master integrals as~\cite{Primo:2017ipr}
\eqs{
    \label{eq:fints}
\cI_1(\eps;x) &= (1 + 2 \epsilon ) (1 + 3 \epsilon)I_{1,1,1,1,0,0,0,0,0}(p^2,1;2-2\eps)\, ,\\
\cI_2(\eps;x) &= (1 + 2 \epsilon ) I_{2,1,1,1,0,0,0,0,0}(p^2,1;2-2\eps)\, ,\\
\cI_3(\eps;x) &= I_{2,2,1,1,0,0,0,0,0}(p^2,1;2-2\eps)\,.
}
All three master integrals are finite at $\eps=0$. 
The fourth master integral is the three-loop tadpole integral with squared propagators (which in our
normalisation again evaluates to unity, $I_{2,2,2,0,0,0,0,0,0}(p^2,1;2-2\eps) = 1$).
The three master integrals in eq.~\eqref{eq:fints} satisfy the inhomogeneous equation~\cite{Primo:2017ipr}
\begin{equation}\label{eq:banana_DEQ}
\partial_x \ivec = \Big[B^\ban(x) + \eps D^\ban(x)\Big]\ivec + \begin{pmatrix}0\\0\\-\frac{1}{2(4x-1)}\end{pmatrix}\,.
\end{equation}
The explicit expressions of the matrices can be found in appendix~\ref{app:ban}.

We change variables from $x$ to $t$ according to eq.~\eqref{eq:change_of_vars}, followed by the change of variables in eq.~\eqref{eq:t_to_xi}. We introduce a new basis according to
\beq
\ivec = \frac{(1+2\eps)(1+3\eps)}{\eps^2}\,W^\ban(\tau)\begin{pmatrix}\widetilde{\cI}_1(\eps;\tau)\\\widetilde{\cI}_2(\eps;\tau)\\\widetilde{\cI}_3(\eps;\tau)\end{pmatrix}\,.
\eeq
The non-vanishing entries of $W^\ban(\tau)$ are:
\begin{align}
\nonumber W^{\ban}&(\tau)_{11} = (2\pi i)^2\, \tz\,\mH_1(\tau)^2\,,\\
W^{\ban}&(\tau)_{21} = 
\frac{\tz }{2 \left(\tz^2-9\right) (1+3 \epsilon )}\,G_2(\tau )\\
\nonumber&\,+\frac{ \left(\tz^2-12 \tz+27\right) (\tz+3)^2+6\eps \left(\tz^4+20 \tz^3-90 \tz^2+180 \tz+81\right)}{6 \left(\tz^2-9\right)^2 (1+3 \epsilon )}\,\pi ^2 \tz\,\mH_1(\tau )^2\,,\\
\nonumber W^{\ban}&(\tau)_{22} = \frac{\pi\,\eps\,\tz}{2(1+3\eps)\,(\tz^2-9)}\,,\\
 \nonumber W^{\ban}&(\tau)_{31} = -\frac{\tz }{24 \pi ^2 (\tz-9) (\tz-1) (\tz+3)^2 (1+2 \epsilon ) (1+3 \epsilon)}\,\frac{G_2(\tau )^2}{ \mH_1(\tau )^2}\\
\nonumber&-\frac{\tz^4+6 \tz^3-540 \tz^2 +162 \tz+243+6\eps \left(\tz^4+16 \tz^3-306 \tz^2+144 \tz+81\right) }{36 (\tz-9) (\tz-3) (\tz-1) (\tz+3)^3 (1+2 \epsilon) (1+3 \epsilon)}\,\tz\,G_2(\tau )\\
\nonumber&-\frac{\pi ^2 \tz \mH_1(\tau )^2 }{216 (\tz-9) (\tz-3)^2 (\tz-1) (\tz+3)^4 (1+2 \epsilon) (1+3 \epsilon)}\\
\nonumber&\quad\times\Big[(\tz^5+3 \tz^4+1062 \tz^3-3726 \tz^2+729 \tz+2187) (\tz+3)^3\\
\nonumber&\quad\phantom{\times}+12\eps (\tz^8+10 \tz^7+1386 \tz^6-18126 \tz^5+82188 \tz^4-194562 \tz^3\\
\nonumber&\quad \phantom{\times}+78246 \tz^2+39366 \tz+19683)  +36 \epsilon ^2 (\tz^8+40 \tz^7+860 \tz^6-17064 \tz^5+68454 \tz^4\\
\nonumber&\quad\phantom{\times} -153576 \tz^3+69660 \tz^2+29160 \tz+6561)  \Big]\,,\\
\nonumber W^{\ban}&(\tau)_{32} = -\frac{\tz \epsilon  }{12 \pi  (\tz-9) (\tz-1) (\tz+3)^2 (1+2 \epsilon ) (1+3 \epsilon )}\,\frac{G_2(\tau )}{ \mH_1(\tau )^2}\\
\nonumber&-\frac{\pi  \tz \epsilon  \left[\tz^4+6 \tz^3-540 \tz^2 +162 \tz+243+6\epsilon\, \left(\tz^4+16 \tz^3-306 \tz^2+144 \tz+81\right) \right]}{36 (\tz-9) (\tz-3) (\tz-1) (\tz+3)^3 (1+2 \epsilon ) (1+3 \epsilon )}\,,\\
\nonumber W^{\ban}&(\tau)_{33} = -\frac{\tz \epsilon ^2}{2 (\tz-9) (\tz-1) (\tz+3)^2 (1+2 \epsilon) (1+3 \epsilon) \mH_1(\tau )^2}\,.
\end{align}
Note that the entries of $W^{\ban}(\tau)$ are again rational in $\eps$ and meromorphic quasi-modular forms for $\Gamma_1(6)$:
\beq
W^{\ban}(\tau)_{ij} \in \cQ\cM_{4-2j}^{\le(i-1)}(\Gamma_1(6),S_{\Gamma_1(6)}\cup \{[\tau_{\pm 3}]\})(\eps)\,, \qquad \tz(\tau_{\pm 3}) = \pm 3\,.
\eeq
Unlike the case of the two-loop sunrise integral, now we do not only have poles at the MUM-points $\tz\in \{0,1,9,\infty\}$, but we also have poles at $\tz = \pm 3$. These poles arise from the singularities of the differential operator in eq.~\eqref{eq:L_ban_3} which are not MUM-points, i.e., $x\in\{1/4,1\}$.

The vector $(\widetilde{\cI}_1(\eps;\tau),\widetilde{\cI}_2(\eps;\tau),\widetilde{\cI}_3(\eps;\tau))$ satisfies the differential equation:
\beq\bsp\label{eq:ban_final}
\partial_{\tau}\begin{pmatrix}\widetilde{\cI}_1(\eps;\tau)\\\widetilde{\cI}_2(\eps;\tau)\\\widetilde{\cI}_3(\eps;\tau)\end{pmatrix}
&\,=i\,\eps\,\widetilde{D}^{\ban}(\eps;\tau)\begin{pmatrix}\widetilde{\cI}_1(\eps;\tau)\\\widetilde{\cI}_2(\eps;\tau)\\\widetilde{\cI}_3(\eps;\tau)\end{pmatrix} + 8\pi i (\tz-1)(\tz-9)(\tz^2-9)\,\mH_1(\tau)^4\begin{pmatrix}0\\0\\1\end{pmatrix} \,,
\esp\eeq
with
\beq
\widetilde{D}^{\ban}(\eps;\tau) = \begin{pmatrix} d_2(\tau)  & -1  &0\\
d_4(\tau) &  d_2(\tau)& -6 \\
\frac{1-4\eps^2}{\eps^2}\,\,d_6(\tau) & \frac{1}{6}d_4(\tau) & d_2(\tau)
\end{pmatrix}\,,
\eeq
where we defined:
\beq\bsp
d_2(\tau) &\,=\frac{4\pi\,(\tz^4-10\tz^3+18\tz^2-90\tz+81)}{\tz^2-9}\,\mH_1(\tau)^2\,,\\
d_4(\tau) &\,=-\frac{2\pi^2\,(\tz^2-18\tz+9)^2\,(\tz^4-12\tz^3+102\tz^2-108\tz+81)}{(\tz^2-9)^2}\,\mH_1(\tz)^4\,,\\
d_6(\tau) &\,=\frac{8\pi^3\,\tz\,(\tz^2-18\tz+9)^3\,(\tz^4-12\tz^3+38\tz^2-108\tz+81)}{3(\tz^2-9)^3}\,\mH_1(\tau)^6\,.
\esp\eeq
The structure of the differential equation is particularly simple, and the functions $d_k(\tau)$ are meromorphic modular forms of weight $k$:
\beq
d_k(\tau)\in \widetilde{\cM}_{k}(\Gamma_1(6), \{[\tau_{\pm3}]\})\,.
\eeq
The appearance of the additional poles at $\tz = \pm3$ can again be traced back to the singularities at $x\in\{1/4,1\}$, which are not MUM-points, and so they do not map to cusps when passing to the variable $\tau$.
We note that in order to arrive at this simple form, the algorithm of section~\ref{sec:neat_proof}, which allows every meromorphic quasi-modular form to be decomposed according to Theorem~\ref{thm:main}, plays a crucial role. The differential equation can easily be solved to arbitrary orders in $\eps$ in terms of iterated integrals of meromorphic modular forms for $\Gamma_1(6)$. The initial condition is known to all order in $\eps$ from ref.~\cite{Broedel:2019kmn,Bonisch:2021yfw}. We have explicitly computed all master integrals through $\ord(\eps^2)$, and we have checked numerically that our results are correct by comparing the numerical evaluation of the iterated integrals in terms of $q$-expansions to a direct numerical evaluation of the banana integrals from Mellin--Barnes integrals in the Euclidean region. The results are lengthy and not very illuminating, and they are available from the authors upon request. 

Let us conclude by making an important observation. Despite all the structural similarities between $\widetilde{D}^{\sun}(\tau)$ and $\widetilde{D}^{\ban}(\tau)$, the differential equation~\eqref{eq:ban_final} is \emph{not} in canonical from, because the entry in the lower left corner of $\widetilde{D}^{\ban}(\tau)$ is not independent of $\eps$! This is not entirely surprising: Canonical differential equations are expected to be closely related to the concept of pure functions~\cite{Henn:2013pwa}. Pure functions in turn are expected to have only logarithmic singularities~\cite{ArkaniHamed:2010gh,Broedel:2018qkq}. We see, however, that $d_4(\tau)$ and $d_6(\tau)$ have double and triple poles at $\tz=\pm3$. More generally, we see that, as soon as we consider poles that do not lie at the cusps, the basis of meromorphic modular forms obtain from Theorem~\ref{thm:main} will generically lead to functions with higher-order poles, and there is in general no way to preserve modularity and only have logarithmic singularities. 
It is possible to achieve an alternative decomposition which leads to a basis of quasi-modular forms with single-poles. More precisely, for $k\ge 2$ we have a decomposition:
\beq\label{eq:dec_QM}
\cQ\cM_{k-2}(\Gamma,R_S) = \delta\cQ\cM_{k-2}(\Gamma,R_S) \oplus \cM_{2-k}(\Gamma,R_S)\,G_2^{k-1}\oplus \widetilde{\cQ\cM}_k(\Gamma,R_{\infty})\,,
\eeq
where we defined (cf.~eqs.~\eqref{eq:Mtilde_def} and~\eqref{eq:Shat_def}):
\beq\bsp\widetilde{\cQ\cM}_k(\Gamma,R_{\infty})&\, := M_k(\Gamma) \cup \widehat{\cS}_k(\Gamma)\cup\widehat{\cQ\cM}_k(\Gamma,R_{\infty})\,\\
\widehat{\cQ\cM}_k(\Gamma,R_{\infty})&\,:= \bigoplus_{\substack{P\in R \\ 0\le m<k-1}}\!\!\!\! \mathbb{C}\,\frac{\mH_{k-2m}\,G_2^m}{\tz-P}\,.
\esp\eeq
The difference between the sets $\widehat{\cM}_k(\Gamma,R_{\infty})$ and $\widehat{\cQ\cM}_k(\Gamma,R_{\infty})$ is that the former only contains meromorphic modular forms, but with poles of higher order, and the latter only contains quasi-modular forms of higher depth, but with at most simple poles. The proof of the decomposition in eq.~\eqref{eq:dec_QM} (for neat subgroups of genus zero) is similar to the proof in section~\ref{sec:neat_proof}.
The form of the differential equation in this basis, however, is extremely complicated (and even further away from being canonical). For the future, it would be interesting to investigate if it is possible to to define a canonical basis for the equal-mass three-loop banana integrals. This may involve introducing integration kernels that are primitives of modular forms with only simple poles, but at the expense of loosing modularity, similar to the case of elliptic polylogarithms~\cite{Broedel:2017kkb} (see also ref.~\cite{matthesfonseca}).

\vspace{1mm}\noindent



\section{Conclusion}
\label{sec:conclusions}
In this paper we have considered a class of differential equations which can be solved to all orders in $\eps$ in terms of iterated integrals of meromorphic modular forms. We have described these differential equations in detail, and we have argued that the type of modular forms required is related to the monodromy group of the associated homogeneous differential equation. On the mathematical side, one of the main results of this paper is a generalisation of the main theorems for the full modular group $\slz$ of ref.~\cite{matthes2021iterated} to arbitrary genus-zero subgroups of finite index. In particular, we have provided an explicit decomposition of the space of meromorphic modular forms into a direct sum of two spaces. The first space collects all those meromorphic modular forms which can be written as derivatives of other functions, and which are thus irrelevant when considering integrals. We provide an explicit basis for the second space (at least in the case of so-called neat subgroups), and, using a classical result due to Chen, we show that the resulting iterated integrals are independent. 

On the physics side, we have clarified by explicit calculations how the monodromy groups of the associated homogeneous differential equations determine the type of modular forms that can arise. In particular, this gives another argument why the congruence subgroup associated to the two-loop equal-mass sunrise integral should have level 6, rather than 12 (see refs.~\cite{Adams:2017ejb,Frellesvig:2021vdl}). Finally, we have provided, for the first time, a complete description of the higher orders in $\eps$ for all master integrals for the three-loop equal-mass banana family. The results, which involve iterated integrals of meromorphic modular forms, are rather lengthy, and they are available from the authors upon request.

In some sense, the differential equations and iterated integrals considered here can be interpreted as one of the simplest generalisations of MPLs: while MPLs arise from iterated integrations of rational functions, our integrals arise from iterated integrations of rational functions multiplied by solutions of a second-order linear differential operator that admits a modular parametrisation. Moreover, similar to the case of MPLs, we can identify classes of differential equations which can always be solved in terms of these functions in an algorithmic way. Note that there are natural generalisations of the class of Feynman integrals to which our construction applies, like those where the maximal cuts are rational functions, but the inhomogeneity involves iterated integrals of meromorphic modular forms. This is for example the case for the two-loop kite integral or some integrals contributing to the three-loop $\rho$ parameter~\cite{Adams:2016xah,Remiddi:2016gno,Adams:2018yfj,Abreu:2019fgk}.

There are still some open questions. First, it often happens for Feynman integrals that one needs to consider additional square roots, in addition to modular forms (cf., e.g., ref.~\cite{Aglietti:2004tq}). If all square roots can be rationalised, one can reduce the complexity again to the situation of rational functions. In the setup of modular forms, however, if the branch points of the square root are not aligned with the cusps of the modular curve, it is not clear that the functions obtained by rationalising the square roots will fall within the class of meromorphic modular forms considered here. Second, we have shown that for the three-loop banana integrals, the differential equation is very compact when expressed in terms of the basis of meromorphic modular forms defined in section~\ref{sec:mero_sec}, but it is not in canonical form. For the future, it would be interesting to understand if and how a canonical form for this differential equation can be obtained, and what the resulting concept of pure functions would be. We leave these questions for future work.

\vspace{1mm}\noindent

\subsection*{Acknowledgments}
This project has received funding from the European Research Council (ERC) under the European Union's Horizon 2020 research and innovation programme (grant agreement No.~724638). JB is grateful to Pietro Longhi for discussions. Furthermore, the authors would like to thank Helena Verrill for correspondence. 

\appendix

\section{The case of a general finite-index subgroup of genus zero}
\label{app:mathy}

The purpose of this appendix is to give a proof of Theorem \ref{thm:main} for a general finite-index subgroup of genus zero. The difference to the case considered in section \ref{sec:mero_sec} is that now there might be elliptic points or irregular cusps, whose local analytic structure is more complicated.

Throughout this appendix, we keep the notation of section \ref{sec:mero_sec}.

\subsection{The case \texorpdfstring{$k\leq 1$}{}}

\begin{proposition} \label{prop:kleq1}
For $k\leq 1$, we have
\[
\cQ\cM_k(\Gamma,R_S)=\delta(\cQ\cM_{k-2}(\Gamma,R_S))\oplus\cM_k(\Gamma,R_S).
\]
In other words, the following statements are true.
\begin{itemize}
\item[(i)]
We have
\[
\delta(\cQ\cM_{k-2}(\Gamma,R_S))\cap \cM_k(\Gamma,R_S)=\{0\},
\]
as subspaces of $\cQ\cM_k(\Gamma,R_S)$.
\item[(ii)]
We have
\[
\cQ\cM_k(\Gamma,R_S)=\delta(\cQ\cM_{k-2}(\Gamma,R_S))+\cM_k(\Gamma,R_S).
\]
\end{itemize}

\end{proposition}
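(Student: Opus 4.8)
The plan is to reduce both assertions to a single computation that records how the \emph{depth} of a meromorphic quasi-modular form behaves under $\delta$. The starting point is the structural identity $\cQ\cM(\Gamma)=\cM(\Gamma)[G_2]$, which I would use in its refined form $\cQ\cM_k(\Gamma,R_S)=\bigoplus_{r}\cM_{k-2r}(\Gamma,R_S)\,G_2^r$: every $f\in\cQ\cM_k(\Gamma,R_S)$ of depth $p$ is written uniquely as $f=\sum_{r=0}^p f_r\,G_2^r$ with $f_r\in\cM_{k-2r}(\Gamma,R_S)$ and $f_p\neq 0$, and I call $f_p\,G_2^p$ its \emph{leading part}. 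Since $G_2$ is holomorphic on $\HP$ and at the cusps, and since $\delta=\tfrac1{2\pi i}\partial_\tau$ neither moves nor creates poles, every form produced below automatically stays within $\cQ\cM(\Gamma,R_S)$.

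The key lemma I would establish is a formula for the leading part of $\delta g$. Writing $m:=\mathrm{wt}(g)$ and combining $\delta G_2=\lambda\,G_2^2+(\text{a weight-}4\text{ modular form})$ with the Serre-derivative identity $\delta h-w\lambda\,h\,G_2\in\cM_{w+2}(\Gamma,R_S)$ for $h\in\cM_w(\Gamma,R_S)$ (where $\lambda\neq0$ is a fixed constant), a short bookkeeping on the $G_2^{p+1}$-coefficient shows that if $g$ has depth $p$ with leading coefficient $g_p$, then the depth-$(p+1)$ component of $\delta g$ equals $(m-p)\lambda\,g_p\,G_2^{p+1}$. Hence $\delta$ raises the depth by exactly one unless $m=p$. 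These are algebraic identities of meromorphic functions that hold for every finite-index $\Gamma$, because $G_2$ is a quasi-modular form for all of $\slz$; I expect the main obstacle to be precisely the verification that these familiar holomorphic identities persist verbatim in the meromorphic category and that the construction introduces no poles outside $R_S$.

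For part (i) I would argue directly. Let $0\neq g\in\cQ\cM_{k-2}(\Gamma,R_S)$ have depth $p\geq0$; its weight is $m=k-2\leq-1$, so the coefficient $(m-p)=(k-2)-p\leq-1$ is nonzero. By the leading-part formula, $\delta g$ then has depth exactly $p+1\geq1$ and therefore cannot be a genuine (depth-$0$) modular form. Consequently $\delta(\cQ\cM_{k-2}(\Gamma,R_S))\cap\cM_k(\Gamma,R_S)=\{0\}$, which is statement (i).

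For part (ii) I would induct on the depth $p$ of $f\in\cQ\cM_k(\Gamma,R_S)$. If $p=0$ then $f\in\cM_k(\Gamma,R_S)$ and we take $g=0$. If $p\geq1$, set $g_0:=\tfrac1{(k-p-1)\lambda}\,f_p\,G_2^{p-1}$, which is a well-defined element of $\cQ\cM_{k-2}(\Gamma,R_S)$ of depth $p-1$, since $k-p-1\leq-1\neq0$ for $k\leq1$ and $p\geq1$, and since $f_p\in\cM_{k-2p}(\Gamma,R_S)$. Applying the leading-part formula to $g_0$ shows that the depth-$p$ component of $\delta g_0$ is exactly $f_p\,G_2^p$, so $f-\delta g_0\in\cQ\cM_k(\Gamma,R_S)$ has depth at most $p-1$. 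The induction hypothesis applied to $f-\delta g_0$ then produces $g_1\in\cQ\cM_{k-2}(\Gamma,R_S)$ and $h\in\cM_k(\Gamma,R_S)$ with $f-\delta g_0=\delta g_1+h$, whence $f=\delta(g_0+g_1)+h$. Together with (i) this gives the direct-sum decomposition. The entire argument hinges on the leading-part formula; once that is in place, both halves reduce to the elementary fact that the hypothesis $k\leq1$ keeps the coefficients $(k-2)-p$ and $k-p-1$ away from zero.
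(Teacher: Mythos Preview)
Your proposal is correct and follows essentially the same route as the paper. The only cosmetic difference is that for part~(i) the paper works with the coefficient functions $g_0,\dots,g_p$ from the transformation law~\eqref{eq:quasi_modular} and derives the recursion $f_r=\delta(g_r)+\tfrac{k-r-1}{12}g_{r-1}$, whereas you work directly with the $G_2$-expansion; since the top coefficient function coincides with the top $G_2$-coefficient, your leading-part formula $(m-p)\lambda\,g_p\,G_2^{p+1}$ is exactly the paper's $r=p+1$ equation, and both arguments conclude identically. For part~(ii) your induction step $g_0=\tfrac{1}{(k-p-1)\lambda}f_p\,G_2^{p-1}$ is literally the paper's (with $\lambda=1/12$ for the normalised $E_2$), and your caution about poles outside $R_S$ is unfounded: $G_2$ is holomorphic and $\delta$ preserves the pole locus, so nothing needs checking there.
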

\begin{proof}
To prove (i), we need to show that, if $f\in \delta(\cQ\cM_{k-2}(\Gamma,R_S))\cap \cM_k(\Gamma,R_S)$, then $f=0$. The proof is essentially the same as in ref.~\cite[Theorem 6.1]{matthes2021iterated}, so we will omit some details. Let $g\in \cQ\cM_{k-2}(\Gamma,R_S)$ be such that $\delta(g)=f$, and denote by $g_0,\ldots,g_p$ the coefficient functions of $g$. The coefficient functions of $f$ are then given by
\[
f_r=\delta(g_r)+\frac{k-2-r+1}{12}g_{r-1}, \qquad 0\leq r\leq p+1,
\]
with the convention that $g_{-1}=g_{p+1}\equiv 0$. On the other hand, since $f$ is modular, we have $f_r=0$ for $1\leq r\leq p+1$. In particular, for $r=p+1$, we have $\frac{k-2-p}{12}g_p=0$, hence $g_p=0$ (here, we use that $k\leq 1$). By recursion on $p$, the same argument yields that $g_r=0$ for all $0\leq r\leq p$, so that $g=0$, by uniqueness of the coefficient functions.

For the proof of (ii), we need to show that every $f\in \cQ\cM_k(\Gamma,R_S)$ can be written as $f=\delta(g)+h$, for some $g\in \cQ\cM_{k-2}(\Gamma,R_S)$ and some $h\in \cM_k(\Gamma,R_S)$. Let $f_0,\ldots,f_p$ denote the coefficient functions of $f$ and assume without loss of generality that $f_p\neq 0$. We can write (cf.~ref.~\cite[Theorem 4.1]{Royer})
\[
f=\sum_{r=0}^p \overline{f}_r\cdot E_2^r,
\]
for uniquely determined $\overline{f}_r\in \cM_{k-2r}(\Gamma,R_S)$, where $E_2$ denotes the normalized, holomorphic Eisenstein series of weight two, and the integer $p$ is, by definition, the \emph{depth} of $f$. Moreover, we have $f_p=\overline{f}_p$. We now prove the desired statement by induction on $p$, the case $p=0$ being trivial (take $g=0$ and $h=f=\overline{f}_0$). In the general case, a direct computation shows that the meromorphic quasi-modular form
\[
\overline{f}_p\cdot E_2^p-\frac{12}{k-p-1}\delta(\overline{f}_p\cdot E_2^{p-1})
\]
has depth $\leq p-1$, and we conclude by the induction hypothesis.
\end{proof}

\subsection{The case \texorpdfstring{$k\geq 2$}{}}

\begin{proposition}
For $k\geq 2$, we have
\[
(\delta(\cQ\cM_{k-2}(\Gamma,R_S))+\cM_k(\Gamma,R_S))\oplus \cM_{2-k}(\Gamma,R_S)E_2^{k-1}=\cQ\cM_k(\Gamma,R_S).
\]
\end{proposition}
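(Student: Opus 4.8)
The plan is to prove both inclusions of the decomposition by tracking the \emph{depth} of meromorphic quasi-modular forms under the operator $\delta$, refining the bookkeeping already used in Proposition~\ref{prop:kleq1}. The algebraic backbone will be Royer's expansion $f=\sum_{r=0}^{p}\overline{f}_r\,E_2^{\,r}$ with $\overline{f}_r\in\cM_{k-2r}(\Gamma,R_S)$ (invoked in the previous proof), the Ramanujan identity $\delta E_2=\tfrac{1}{12}(E_2^2-E_4)$, and the Serre derivative $\vartheta_w:=\delta-\tfrac{w}{12}E_2$, which satisfies $\vartheta_w\big(\cM_w(\Gamma,R_S)\big)\subseteq\cM_{w+2}(\Gamma,R_S)$. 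It is worth stressing that, unlike the finer splitting of $\delta\cQ\cM_{k-2}+\cM_k$ into $\delta\cQ\cM_{k-2}\oplus\widetilde{\cM}_k$, these identities are purely formal and insensitive to elliptic points or irregular cusps; the geometric complications of a general genus-zero subgroup therefore play no role at this stage.

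First I would establish the spanning statement $\cQ\cM_k(\Gamma,R_S)=\delta\cQ\cM_{k-2}(\Gamma,R_S)+\cM_k(\Gamma,R_S)+\cM_{2-k}(\Gamma,R_S)E_2^{\,k-1}$ by induction on the depth $p$ of $f$. A short computation with the three identities above shows that for any $g\in\cQ\cM_{k-2}(\Gamma,R_S)$ of depth $q$ the leading (depth-$(q+1)$) coefficient of $\delta g$ equals $\tfrac{(k-2)-q}{12}\,\overline{g}_q$. Taking $g=\tfrac{12}{k-p-1}\,\overline{f}_p\,E_2^{\,p-1}\in\cQ\cM_{k-2}(\Gamma,R_S)$ (which has weight $k-2$ and depth $p-1$) thus cancels the depth-$p$ term of $f$, so that $f-\delta g$ has depth $\le p-1$, \emph{provided} $p\neq k-1$. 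The induction hypothesis then applies. The sole exceptional value is $p=k-1$, where $\overline{f}_{k-1}\in\cM_{2-k}(\Gamma,R_S)$: there I would simply peel the irreducible term $\overline{f}_{k-1}E_2^{\,k-1}$ off into the summand $\cM_{2-k}(\Gamma,R_S)E_2^{\,k-1}$ and reduce the remaining depth-$\le(k-2)$ part as before. Since each step lowers the depth by at least one, the recursion terminates no matter how large $p$ is initially.

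For directness I would show $\big(\delta\cQ\cM_{k-2}(\Gamma,R_S)+\cM_k(\Gamma,R_S)\big)\cap\cM_{2-k}(\Gamma,R_S)E_2^{\,k-1}=\{0\}$ using the same leading-coefficient formula, but now reading off the \emph{top} depth rather than reducing. For $g\in\cQ\cM_{k-2}(\Gamma,R_S)$ of depth $q$, the leading coefficient $\tfrac{(k-2)-q}{12}\,\overline{g}_q$ of $\delta g$ is nonzero precisely when $q\neq k-2$; hence $\delta g$ has depth $q+1$ if $q\neq k-2$ and depth $\le k-2$ if $q=k-2$. In either case the depth of $\delta g$ \emph{avoids} the value $k-1$. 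Adding some $h\in\cM_k(\Gamma,R_S)$, which has depth $0$ and $0\neq k-1$ since $k\ge 2$, cannot create a depth-$(k-1)$ contribution either, so every element of $\delta\cQ\cM_{k-2}(\Gamma,R_S)+\cM_k(\Gamma,R_S)$ has depth different from $k-1$. On the other hand, a nonzero $\phi\,E_2^{\,k-1}$ with $\phi\in\cM_{2-k}(\Gamma,R_S)$ has depth exactly $k-1$. Equating the two forces $\phi=0$, which gives the claim.

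The main obstacle I anticipate is conceptual rather than computational: correctly isolating the resonance at $q=k-2$, equivalently at depth $k-1$, where the factor $(k-2)-q$ degenerates. This single vanishing is simultaneously responsible for the failure of the depth-reduction step at $p=k-1$ and for the appearance of the genuinely new summand $\cM_{2-k}(\Gamma,R_S)E_2^{\,k-1}$, so the whole statement hinges on reading off the leading depth coefficient cleanly. A secondary point to verify with care is that $\delta$ and $\vartheta_w$ keep all poles within $R_S$: this holds because $E_2$ and $E_4$ are holomorphic on $\HP$ and at every cusp, so that each intermediate form genuinely lies in the stated meromorphic spaces and Royer's expansion may be applied throughout.
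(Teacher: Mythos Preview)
Your argument is correct and is essentially the same as the paper's. Both the spanning step and the directness step hinge on the single depth computation you isolate: for $g\in\cQ\cM_{k-2}$ of depth $q$, the depth-$(q+1)$ coefficient of $\delta g$ is $\tfrac{(k-2)-q}{12}\,\overline{g}_q$, so that $\delta g$ never has depth exactly $k-1$. The paper phrases the directness argument in terms of the coefficient functions $g_r$ (from the modular transformation law) rather than the $E_2$-expansion coefficients $\overline{g}_r$, first constraining the depth of $g$ via the equations $\delta(g_r)+\tfrac{k-1-r}{12}g_{r-1}=0$ for $r\geq k$ and then invoking the same leading-coefficient vanishing; since $f_p=\overline{f}_p$ for the top coefficient, this is the identical computation in different clothing. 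Your observation that the argument is purely formal and independent of elliptic points and irregular cusps is exactly why this proposition carries over verbatim from the neat case.
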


\begin{proof}
We begin by proving that
\[
(\delta(\cQ\cM_{k-2}(\Gamma,R_S))+\cM_k(\Gamma,R_S))\cap \cM_{2-k}(\Gamma,R_S)E_2^{k-1}=\{0\}.
\]
Let $g\in \cQ\cM_{k-2}(\Gamma,R_S)$ with coefficient functions $g_0,\ldots,g_p$ and assume that $f+\delta(g)=h\cdot E_2^{k-1}$, for some $f\in \cM_k(\Gamma,R_S)$ and $h\in \cM_{2-k}(\Gamma,R_S)$. Then
\[
\delta(g_r)+\frac{k-2-r+1}{12}g_{r-1}=0, \qquad \mbox{for all }r\geq k,
\]
which shows that $g$ necessarily has depth $\leq k-2$. Moreover, since $g$ has weight $k-2$, one can show that $\delta(g)$ has depth $\leq k-2$. On the other hand, $h\cdot E_2^{k-1}$ has depth $k-1$, unless $h=0$, so that the equality $f+\delta(g)=h\cdot E_2^{k-1}$ yields that $h=0$. Therefore, also $g=0$, as was to be shown.

We next show that 
\[
\delta(\cQ\cM_{k-2}(\Gamma,R_S))+\cM_k(\Gamma,R_S)+\cM_{2-k}(\Gamma,R_S)E_2^{k-1}=\cQ\cM_k(\Gamma,R_S).
\]
Let $f\in \cQ\cM_k(\Gamma,R_S)$ with coefficient functions $f_0,\ldots,f_p$, such that $f_p\neq 0$, and write $f=\sum_{r=0}^p\overline{f}_r\cdot E_2^r$, for some $\overline{f}_r\in \cM_{k-2r}$. As remarked above, we have $f_p=\overline{f}_p$. If $p\neq 0,k-1$, then the same argument as in the proof of Proposition \ref{prop:kleq1} shows that $\overline{f}_p\cdot E_2^p -\frac{12}{k-p-1}\delta(\overline{f}_p\cdot E_2^{p-1})$ has depth $\leq p-1$. The desired statement now follows by descending induction on $r$.
\end{proof}

\begin{proposition}
For $k\geq 2$, we have
\[
\delta(\cQ\cM_{k-2}(\Gamma,R_S))\cap \cM_k(\Gamma,R_S)=\delta^{k-1}(\cM_{2-k}(\Gamma,R_S)).
\]
\end{proposition}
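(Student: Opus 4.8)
The plan is to prove the two inclusions separately, with the inclusion $\supseteq$ being essentially immediate and the reverse inclusion $\subseteq$ carrying the real content. For $\supseteq$, I would take $h\in\cM_{2-k}(\Gamma,R_S)$ and invoke Bol's identity in its meromorphic form (cf.~eq.~\eqref{eq:Bol}): $\delta^{k-1}h$ is a modular form of weight $k$, and since $\delta$ cannot create poles away from the existing poles and the cusps, $\delta^{k-1}h\in\cM_k(\Gamma,R_S)$. At the same time $\delta^{k-2}h$ has weight $(2-k)+2(k-2)=k-2$, hence lies in $\cQ\cM_{k-2}(\Gamma,R_S)$, so that $\delta^{k-1}h=\delta(\delta^{k-2}h)\in\delta(\cQ\cM_{k-2}(\Gamma,R_S))$. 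This already gives the inclusion $\delta^{k-1}(\cM_{2-k}(\Gamma,R_S))\subseteq\delta(\cQ\cM_{k-2}(\Gamma,R_S))\cap\cM_k(\Gamma,R_S)$.

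For the reverse inclusion I would take $f$ in the intersection, so $f\in\cM_k(\Gamma,R_S)$ has depth $0$ and $f=\delta(g)$ for some $g\in\cQ\cM_{k-2}(\Gamma,R_S)$ (assuming $g\neq 0$, the case $g=0$ being trivial). Writing $g_0,\ldots,g_p$ for the coefficient functions of $g$ with $g_p\neq 0$, the same computation as in the proof of Proposition~\ref{prop:kleq1} yields
\[
f_r=\delta(g_r)+\frac{k-1-r}{12}\,g_{r-1},\qquad 0\leq r\leq p+1,
\]
with $g_{-1}=g_{p+1}\equiv 0$. Since $f$ has depth $0$, we have $f_r=0$ for $r\geq 1$; the relation at $r=p+1$ then reads $\tfrac{k-2-p}{12}g_p=0$, and as $g_p\neq0$ this forces $p=k-2$. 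Thus $g$ has depth \emph{exactly} $k-2$, so its top coefficient function $h:=g_{k-2}$ is a genuine modular form of weight $(k-2)-2(k-2)=2-k$, and because the coefficient functions of $g$ inherit the pole constraint $R_S$ (via the expansion $g=\sum_r\overline{g}_rE_2^r$ with $\overline{g}_r\in\cM_{k-2-2r}(\Gamma,R_S)$ and the identity $g_p=\overline{g}_p$), we get $h\in\cM_{2-k}(\Gamma,R_S)$.

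I would then run the recursion downward. For $1\leq r\leq k-2$ the vanishing $f_r=0$ gives $\delta(g_r)=-\tfrac{k-1-r}{12}g_{r-1}$, and since the factor $k-1-r$ ranges over $k-2,\ldots,1$ and is therefore nonzero throughout this range, each step can be inverted as $g_{r-1}=-\tfrac{12}{k-1-r}\delta(g_r)$. Starting from $g_{k-2}=h$ this produces $g_r=C_{k-2-r}\,\delta^{k-2-r}(h)$ for nonzero constants $C_j$, and in particular $g_0=C\,\delta^{k-2}(h)$ with $C\neq 0$. Finally $f=f_0=\delta(g_0)=C\,\delta^{k-1}(h)=\delta^{k-1}(Ch)$ with $Ch\in\cM_{2-k}(\Gamma,R_S)$, which is exactly what is required; the case $k=2$ is subsumed (then $p=0$, $g=h\in\cM_0(\Gamma,R_S)$ and $f=\delta^{k-1}(g)$).

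The hard part is the reverse inclusion, and within it the two structural facts that drive the recursion: first, that modularity of $f=\delta(g)$ forces the depth of $g$ to equal $k-2$ precisely, so that the top coefficient function lands in weight $2-k$ rather than some other weight; and second, that the recursion coefficients $k-1-r$ never vanish for $1\leq r\leq k-2$, which is what lets one reconstruct $g$ from its modular top coefficient as iterated $\delta$-derivatives. The remaining care is bookkeeping: confirming that the top coefficient function genuinely inherits the $R_S$ pole constraint from $g$, and tracking that the accumulated constants stay nonzero so that the final identity $f=\delta^{k-1}(Ch)$ can be renormalized to exhibit $f$ as lying in $\delta^{k-1}(\cM_{2-k}(\Gamma,R_S))$.
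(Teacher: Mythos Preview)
Your argument is correct and follows the same route as the paper's proof: establish the recursion $f_r=\delta(g_r)+\tfrac{k-1-r}{12}g_{r-1}$, use $f_{p+1}=0$ to pin down the depth $p=k-2$ so that $h:=g_{k-2}\in\cM_{2-k}(\Gamma,R_S)$, and then run the recursion downward (with nonvanishing coefficients) to obtain $g=g_0$ as a nonzero multiple of $\delta^{k-2}(h)$, whence $f=\delta g\in\delta^{k-1}(\cM_{2-k}(\Gamma,R_S))$. The paper is simply terser --- it cites the general facts $g=g_0$ and $g_p\in\cM_{2-k}$ from the literature on coefficient functions and leaves the easy inclusion $\supseteq$ implicit --- whereas you spell out both the Bol-identity direction and the explicit constants in the descent, which is a welcome addition.
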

\begin{proof}
Again, the proof is essentially the same as in ref.~\cite[Theorem 6.1]{matthes2021iterated}. Let $g\in \cQ\cM_{k-2}(\Gamma,R_S)$ be such that $\delta(g)=f$, and denote by $g_0,\ldots,g_p$ the coefficient functions of $g$. We may assume without loss of generality that $g_p\neq 0$. As in the case $k<2$, we have
\[
f_r=\delta(g_r)+\frac{k-2-r+1}{12}g_{r-1}, \qquad 0\leq r\leq p+1,
\]
and $f_r=0$, for $1\leq r\leq p+1$. In particular, the equality $f_{p+1}=0$ implies that $p=k-2$. Likewise, by recursion on $r$, the equality $f_r=0$ shows that $\delta(g_r)=-\frac{k-r-1}{12}g_{r-1}$, for all $1\leq r\leq p$. The desired statement now follows from $g=g_0$ and $g_p\in \cM_{2-k}(\Gamma,R_S)$, which are both general facts about the coefficient functions of quasi-modular forms (cf.~ref.~\cite{Royer}).
\end{proof}
In order to conclude the proof of Theorem \ref{thm:main} for arbitrary genus zero subgroups in the case $k\geq 2$, it now suffices to prove the following analogue of Theorem \ref{thm:bijection}.
\begin{thm} \label{theorem}
There is a direct sum decomposition
\[
\mathcal{M}_k(\Gamma,R_S)=\delta^{k-1}(\mathcal{M}_{2-k}(\Gamma,R_S))\oplus \widetilde{\mathcal{M}}_k(\Gamma,R_{s_0}).
\]
\end{thm}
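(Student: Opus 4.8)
The plan is to prove Theorem~\ref{theorem} exactly along the lines of the neat case (Theorem~\ref{thm:bijection}), establishing the spanning and the directness of the decomposition separately, but replacing the local analysis at each point by one that accounts for the possible elliptic points (of order $h_P\in\{2,3\}$) and irregular cusps. As in section~\ref{sec:neat_proof} it suffices to treat $\#R=1$ and to argue on a generating set of $\cM_k(\Gamma,R_S)$ adapted to the three kinds of points of $X_\Gamma$: ordinary points, elliptic points, and cusps. Since $\delta^{k-1}$ maps $\cM_{2-k}$ into $\cS_k$ by Bol's identity~\eqref{eq:Bol}, the whole argument again reduces to controlling how one application of $\delta^{k-1}$ changes the order of vanishing of a meromorphic modular form at each such point.

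First I would prove the analogue of Lemma~\ref{lem:valence} in the presence of elliptic points and irregular cusps. At an elliptic point $P$ of order $h_P$ the local coordinate is $z=(\tau-\tau_0)^{h_P}$, so expressing $\partial_\tau$ in terms of $z$ introduces a factor scaling like $z^{(1-h_P)/h_P}$ per derivative; tracking the leading term of $\delta^{k-1}f$ then gives that either $\nu_P(\delta^{k-1}f)\ge 0$ or $\delta^{k-1}$ shifts $\nu_P$ by the definite amount controlled by $k$ and $h_P$ that matches the threshold $\tfrac{1-k}{h_P}$ in condition~1 of Definition~\ref{defi:Mtilde}, just as for the elliptic points $i,\rho$ of $\slz$ in ref.~\cite{matthes2021iterated}. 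At an irregular cusp the Fourier expansion runs over half-integral multiples of $q^{1/h}$ because $-T^h\in\Gamma$, which is exactly why condition~3 of Definition~\ref{defi:Mtilde} carries a floor function; there $\delta^{k-1}=(q\partial_q)^{k-1}$ preserves the leading exponent, so $\nu_s(\delta^{k-1}f)=\nu_s(f)$ whenever a pole survives, as in part~2 of Lemma~\ref{lem:valence}. In every case the decisive point is that the Pochhammer-type coefficient multiplying the leading term is nonzero, for this is what makes the one-step pole cancellation go through.

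With this lemma in hand the construction proceeds as in section~\ref{sec:neat_proof}. The building block $\mH_k$ is produced from the valence formula~\eqref{eq:valence_formula} and the Riemann--Roch existence of a meromorphic section, now allowing the fractional divisor orders $\tfrac{n_P}{h_P}$ forced at elliptic points, and the associated generators $U_{k,P,m}$ still span $\cM_k(\Gamma,R_S)$. Spanning then follows by the same recursive pole reduction: a generator whose pole is too large to lie in $\widetilde{\cM}_k(\Gamma,R_{s_0})$ is corrected by an element of $\delta^{k-1}(\cM_{2-k}(\Gamma,R_S))$ with the same leading order, and one iterates until the inequalities of Definition~\ref{defi:Mtilde} hold. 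Directness follows because an element common to the image and to $\widetilde{\cM}_k(\Gamma,R_{s_0})$ must, by the order-tracking lemma, have all its poles cancel---a surviving pole at any $P\in R$ or cusp $s\neq s_0$ would violate conditions~1 and~2, and a pole at $s_0$ beyond the bound of condition~3 is excluded---so it is a holomorphic modular form and hence vanishes by $M_k(\Gamma)\cap\delta^{k-1}(\cM_{2-k}(\Gamma))=0$.

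The main obstacle will be the local analysis at the elliptic points: there the local coordinate is a genuine power $(\tau-\tau_0)^{h_P}$, and one must verify both that $\delta^{k-1}$ shifts the order by the claimed $h_P$-dependent amount and, crucially, that the Pochhammer-type leading coefficient never vanishes over the relevant range of pole orders, since any such vanishing would force the alternative branch $\nu_P(\delta^{k-1}f)\ge 0$ and destroy the clean pole reduction. A second, genuinely new difficulty compared with ref.~\cite{matthes2021iterated} is the bookkeeping at irregular cusps, where the half-integral $q$-grading has to be reconciled with the floor function and the dimension count $\dim S_k(\Gamma)$ in condition~3 of Definition~\ref{defi:Mtilde}; making these bounds agree exactly is what forces the decomposition to be a direct sum on the nose.
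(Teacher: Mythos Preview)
Your plan is a direct extension of the neat-case argument (Theorem~\ref{thm:bijection}): upgrade Lemma~\ref{lem:valence} to elliptic points and irregular cusps, keep explicit generators $U_{k,P,m}$, and run the same recursive pole reduction and injectivity check. The paper takes a different route. It does not attempt a local analogue of Lemma~\ref{lem:valence} at elliptic points and instead proves a single global statement, Proposition~\ref{proposition}: for $0\neq g\in\cM_{2-k}(\Gamma)$ with $k\ge 3$,
\[
\deg\bigl\lfloor\operatorname{div} g\bigr\rfloor \;=\; -1-\dim S_k(\Gamma),
\]
derived from the valence formula~\eqref{eq:valence_formula} together with the arithmetic constraints $\nu_P(g)\equiv\tfrac{2-k}{4},\tfrac{2-k}{3},\tfrac{2-k}{2}\bmod\mathbb Z$ at elliptic points of order $2,3$ and irregular cusps (Lemma~\ref{lemma}). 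Both directness and spanning then follow in a few lines, with no explicit generators and only the crudest order-tracking (of the form ``$\nu_P(g)<0\Rightarrow\nu_P(\delta^{k-1}g)<\tfrac{1-k}{h_P}$''). What this buys is that the delicate verification you flag as the main obstacle---nonvanishing of the Pochhammer coefficient at elliptic points over the full range of orders, and compatibility of the achievable orders of weight-$(2-k)$ forms with the target orders in weight $k$---never has to be carried out.

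Your directness argument has a genuine gap. You conclude that an element $f=\delta^{k-1}g$ in the intersection ``is a holomorphic modular form and hence vanishes by $M_k(\Gamma)\cap\delta^{k-1}(\cM_{2-k}(\Gamma))=0$''. But condition~3 of Definition~\ref{defi:Mtilde} only requires $\lfloor\nu_{s_0}(f)\rfloor\ge -\dim S_k(\Gamma)$, which \emph{permits} a pole at $s_0$; nothing you have written rules that out. What actually closes the argument is Proposition~\ref{proposition}: once order-tracking forces $g$ to be regular off $s_0$, the degree formula gives $\lfloor\nu_{s_0}(g)\rfloor\le -1-\dim S_k(\Gamma)$, so $\nu_{s_0}(f)=\nu_{s_0}(g)$ violates condition~3 after all. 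This is a global constraint obtained by summing over all of $X_\Gamma$ and rounding, not a local one, and your proposal does not supply it. Your final paragraph correctly identifies the matching with $\dim S_k(\Gamma)$ as the crux, but the tool you need is the floor-divisor degree formula rather than a refined local lemma.
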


\subsection{Divisors of meromorphic modular forms in negative weight}
The key ingredient for the proof of Theorem \ref{theorem} is a formula for the degree of the divisor
\[
\left\lfloor\operatorname{div} f\right\rfloor:=\sum_{P\in X_\Gamma}\left\lfloor \nu_P(f) \right\rfloor \cdot [P] \in \operatorname{Div}(X_\Gamma),
\]
where $0\neq f \in \mathcal{M}_{2-k}(\Gamma)$. Since the vanishing order of $f$ at an elliptic point or irregular cusp might be half- or third-integral, the divisor $\left\lfloor\operatorname{div} f\right\rfloor$ is in general different from $\operatorname{div}(f)=\sum_{P\in X_\Gamma}\nu_P(f) \cdot (P)$.
\begin{proposition} \label{proposition}
We have
\[
\deg \left\lfloor\operatorname{div} f\right\rfloor=\begin{cases}
0 & k=2\\
-1-\dim S_k(\Gamma)&k\geq 3.
\end{cases}
\]
\end{proposition}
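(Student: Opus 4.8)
The plan is to read off the degree directly from the valence formula~\eqref{eq:valence_formula}, isolating the fractional parts of the vanishing orders that occur at elliptic points and irregular cusps. Write $w=2-k$ and let $0\neq f\in\cM_{w}(\Gamma)$. Since $\lfloor \nu_P(f)\rfloor=\nu_P(f)-\{\nu_P(f)\}$ with $\{x\}=x-\lfloor x\rfloor\in[0,1)$, summing over $P\in X_\Gamma$ and applying~\eqref{eq:valence_formula} gives
\[
\deg\lfloor\operatorname{div} f\rfloor = w\,d_{\Gamma}-\sum_{P\in X_\Gamma}\{\nu_P(f)\}.
\]
The decisive point is that $\{\nu_P(f)\}$ is nonzero only at an elliptic point or an irregular cusp, and there it depends only on $w$ (modulo $4$, $3$, $2$ respectively), not on the chosen $f$: the local automorphy factor $(c\tau+d)^{-w}$ at such a point forces the $\tau$-order (resp.\ $q$-order) of $f$ to lie in a fixed residue class, whose image under division by the ramification index $\#\overline{\Gamma}_P$ (resp.\ $h_s$) pins down $\{\nu_P(f)\}$. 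In particular the degree is independent of $f$, as it must be, since two weight-$w$ forms differ by a modular function, whose divisor is integral.

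For $k=2$ we have $w=0$, so $f$ is a modular function, every $\nu_P(f)$ is an integer, the fractional-part sum vanishes and $\deg\lfloor\operatorname{div} f\rfloor=0$. For $k\geq 3$ (so $w\leq -1$) I would evaluate the three contributions explicitly: a short computation with the automorphy factor shows that an order-$e$ elliptic point contributes $\tfrac{1}{e}\big(\tfrac{w(e-1)}{2}\bmod e\big)$ — with $\eps_2(\Gamma)$ copies for $e=2$ and $\eps_3(\Gamma)$ copies for $e=3$ — while an irregular cusp contributes $\tfrac12$ exactly when $k$ is odd. Feeding these into the displayed identity, together with the genus-zero relation $d_{\Gamma}=\tfrac{\eps_2(\Gamma)}{4}+\tfrac{\eps_3(\Gamma)}{3}+\tfrac{\eps_\infty(\Gamma)}{2}-1$ coming from~\eqref{eq:genus}, reduces the assertion to a numerical identity among $\eps_2,\eps_3,\eps_r,\eps_i$ and $\dim S_k(\Gamma)$, which is precisely the classical dimension formula for $S_k(\Gamma)$ (with its $\lfloor k/4\rfloor,\lfloor k/3\rfloor$ terms and its distinct treatment of regular and irregular cusps) in disguise.

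The hard part is this last bookkeeping, and it is genuinely delicate for odd $k$: the fractional parts at order-$2$ elliptic points and at irregular cusps must be tracked consistently with the sign conventions in $(c\tau+d)^{-w}$, and then matched against the correct even/odd shape of the cusp-form dimension formula. To obtain a uniform derivation and a cross-check I would instead recognise $\deg\lfloor\operatorname{div} f\rfloor$ as the degree of the automorphic line bundle $\mathcal{L}_{w}$ on the compact Riemann surface $X_\Gamma$, whose meromorphic sections are the weight-$w$ forms and whose transition data encode exactly the floors $\lfloor\cdot\rfloor$ descending past elliptic points and irregular cusps. The Kodaira--Spencer isomorphism $\mathcal{L}_2\cong\Omega^1_{X_\Gamma}(C)$, with $C=\sum_{s\in S_{\Gamma}}[s]$ the reduced cusp divisor, yields $\mathcal{L}_k\otimes\mathcal{L}_{2-k}\cong K_{X_\Gamma}(C)$; combined with $\cS_k(\Gamma)=H^0(\mathcal{L}_k(-C))$ and Serre duality this gives $\dim S_k(\Gamma)=h^1(X_\Gamma,\mathcal{L}_{2-k})$. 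Riemann--Roch on the genus-zero curve then collapses everything to
\[
\deg\mathcal{L}_{2-k}=\dim M_{2-k}(\Gamma)-\dim S_k(\Gamma)-1,
\]
and the two cases drop out from $\dim M_{2-k}(\Gamma)=1$ for $k=2$ and $\dim M_{2-k}(\Gamma)=0$ for $k\geq 3$. In this route the entire difficulty migrates into verifying $\mathcal{L}_2\cong\Omega^1_{X_\Gamma}(C)$ with the correct local contributions at elliptic points and irregular cusps, which is the technical heart either way.
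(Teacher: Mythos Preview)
Your first route—subtracting the fractional parts of $\nu_P(f)$ from the valence formula and then recognising the result as the dimension formula for $S_k(\Gamma)$—is exactly how the paper proceeds. What you call ``a short computation with the automorphy factor'' pinning down $\{\nu_P(f)\}$ at elliptic points and irregular cusps is the content of the paper's auxiliary Lemma, which the paper proves by reducing to $\slz$ (via $f=f'f''$ with $f'\in\cM_0(\Gamma)$ and $f''\in\cM_w(\slz)$) rather than by a direct local computation; the final bookkeeping then splits into the even-$k$ and odd-$k$ dimension formulas just as you indicate.

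Your second route via line bundles and Riemann--Roch is a genuinely different, more structural argument, and once one has $h^1(X_\Gamma,\mathcal{L}_{2-k})=\dim S_k(\Gamma)$ the conclusion is immediate on $\mathbb{P}^1$. One caution: the tensor identity $\mathcal{L}_k\otimes\mathcal{L}_{2-k}\cong K_{X_\Gamma}(C)$ with $C$ the \emph{full} cusp divisor is correct only for even $k$. For odd $k$ one finds $\deg(\mathcal{L}_k\otimes\mathcal{L}_{2-k})=-2+\eps_r$, not $-2+\eps_\infty$, precisely because $\{k/2\}+\{(2-k)/2\}=1$ at each irregular cusp; correspondingly, for odd $k$ a weight-$k$ form already has $\nu_s\ge\tfrac12$ at an irregular cusp, so the cusp-form condition there is vacuous and one must take $C=C_r$ the regular-cusp divisor in $S_k=H^0(\mathcal{L}_k(-C_r))$. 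With that adjustment the Serre-dual identification goes through. You flagged the irregular-cusp issue yourself, so this is a refinement rather than a gap; what the line-bundle route buys is that the parity case split and the elliptic-point floors are absorbed into the definition of $\mathcal{L}_w$, at the cost of having to verify that definition carefully.
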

\begin{proof}
The valence formula yields that
\begin{equation} \label{equation}
\deg \operatorname{div}(f)=\frac{(2-k)d_\Gamma}{12}=-(2-k)-\left(\frac{k}{4}-\frac 12\right)\varepsilon_2-\left(\frac{k}{3}-\frac 23\right)\varepsilon_3-\left( \frac{k}{2}-1 \right)\varepsilon_\infty,
\end{equation}
where in the second equality we have also used that $X_\Gamma$ has genus zero. This proves the statement for $k=2$, since in that case we necessarily have $\operatorname{div}(f)=\left\lfloor\operatorname{div} f\right\rfloor$. If $k\geq 3$, we first need a lemma which provides some arithmetic information about the vanishing order of $f$ at an elliptic point or an irregular cusp.
\begin{lemma} \label{lemma}
Let $k$ be an integer and $0\neq g\in \mathcal{M}_k(\Gamma)$. Then
\[
\nu_P(g) \equiv \begin{cases}
\frac{k}{4} \mod \mathbb Z & \mbox{if $P$ is elliptic of order two,} \\
\frac{k}{3} \mod \mathbb Z & \mbox{if $P$ is elliptic of order three,}\\
\frac{k}{2} \mod \mathbb Z & \mbox{if $P$ is an irregular cusp}.
\end{cases}
\]
\end{lemma}
\begin{proof}[Proof of Lemma \ref{lemma}.]
	The statement is trivial for $k=0$, since $g$ is then a holomorphic function on $X_\Gamma$ and therefore $\nu_P(g)\in \mathbb Z$. In general, if $\Gamma=\operatorname{SL}_2(\mathbb Z)$, then the desired assertion follows immediately by comparing both sides of the valence formula (where $i$ and $\rho$ denote the elliptic points defined in eq.~\eqref{eq:eps_infty_to_dGamma} above) 
\[
\nu_{[i]}(g)+\nu_{[\rho]}(g)+\sum_{P\in X_\Gamma\setminus \{[i],[\rho]\}}\nu_P(g)=\frac{k}{12}.
\]
For general $\Gamma$, if $k$ is even, then we can write $g=g'g''$ where $g'\in \mathcal{M}_0(\Gamma)$ and $g''\in \mathcal{M}_k(\operatorname{SL}_2(\mathbb Z))$, and the desired statement follows from the above, as every elliptic point is $\operatorname{SL}_2(\mathbb Z)$-equivalent to either $i$ or $\rho$, and every cusp is $\operatorname{SL}_2(\mathbb Z)$-equivalent to $\infty$. If $\Gamma$ is arbitrary and $k$ is odd, then the existence of a non-zero meromorphic modular form of weight $k$ implies that $-I\notin\Gamma$, hence that there are no elliptic points of order two. On the other hand, if $P$ is elliptic of order three, then $\nu_P(g)=\frac{1}{2}\nu_P(g^2) \equiv \frac{k}{3} \mod \mathbb Z$, by what was just established in the case of even weights. This proves the statement for elliptic points. Finally, if $P$ is an irregular cusp of width $h$, then $g(\tau+h)=(-1)^kg(\tau)$. On the other hand, the Fourier coefficients $\sum_{m=n}^\infty a_me^{\pi im\tau/h}$ of $g$ at $P$ satisfy $a_m=(-1)^ma_m$, and the result follows.
\end{proof}
We now return to the proof of Proposition \ref{proposition}. If $k$ is even, then combining \eqref{equation} with Lemma \ref{lemma} yields that
\[
\deg \left\lfloor\operatorname{div} f\right\rfloor=-(2-k)-\left\lfloor \frac{k}{4} \right\rfloor\varepsilon_2-\left\lfloor \frac{k}{3} \right\rfloor\varepsilon_3-\left(\frac{k}{2}-1\right)\varepsilon_\infty=-1-\dim S_k(\Gamma),
\]
proving the desired statement in that case.
If $k$ is odd, then a similar argument yields that
\[
\deg \left\lfloor\operatorname{div} f\right\rfloor=-(2-k)-\left\lfloor \frac{k}{3} \right\rfloor\varepsilon_3-\left(\frac{k}{2}-1\right)\varepsilon^{\rm reg}_\infty-\left(\frac{k}{2}-\frac 12\right)\varepsilon^{\rm irr}_\infty=-1-\dim S_k(\Gamma),
\]
where $\varepsilon^{\rm reg}_\infty$ (respectively, $\varepsilon^{\rm irr}_\infty$) denotes the number of regular (respectively, irregular) cusps. This ends the proof of Proposition \ref{proposition}.
\end{proof}
\begin{rmk}
If $\Gamma$ is an arbitrary finite-index subgroup of $\operatorname{SL}_2(\mathbb Z)$, not necessarily of genus zero, and $0\neq f\in \mathcal{M}_{2-k}(\Gamma)$, then essentially the same proof yields that
\[
\deg \left\lfloor\operatorname{div} f\right\rfloor=
\begin{cases}
0, & k=2,\\	
g-1-\dim S_k(\Gamma), & k\geq 3,
\end{cases}
\]
where $g$ denotes the genus of $X_\Gamma$. We expect that this formula is well-known to the experts but did not find it in the literature.
\end{rmk}

\subsection{Proof of \texorpdfstring{Theorem \ref{theorem}}{}}
Theorem \ref{theorem} follows by combining the assertions in the next two propositions.
\begin{proposition}
We have
\[
\delta^{k-1}(\mathcal{M}_{2-k}(\Gamma,R_S))\cap \widetilde{\mathcal{M}}_k(\Gamma,R_s)=\{0\}.
\]
\end{proposition}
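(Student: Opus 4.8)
The plan is to argue by contradiction using the degree of the floor-divisor. Suppose there were a nonzero $f\in\delta^{k-1}(\mathcal{M}_{2-k}(\Gamma,R_S))\cap\widetilde{\mathcal{M}}_k(\Gamma,R_{s_0})$, and write $f=\delta^{k-1}(g)$ with $g\in\mathcal{M}_{2-k}(\Gamma,R_S)$; since $f\neq 0$ we necessarily have $g\neq 0$. The entire argument is then a comparison of two evaluations of $\deg\lfloor\operatorname{div} g\rfloor$: one supplied by Proposition~\ref{proposition}, the other extracted from the defining inequalities of $\widetilde{\mathcal{M}}_k(\Gamma,R_{s_0})$ in Definition~\ref{defi:Mtilde}, transported to $g$ through the relation $f=\delta^{k-1}(g)$.

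First I would establish the local behaviour of $\delta^{k-1}$ on orders of vanishing at every type of point, i.e.\ the analogue of Lemma~\ref{lem:valence} that does not assume $\Gamma$ is neat. On $\mathfrak{H}$ one has $\delta^{k-1}=(2\pi i)^{-(k-1)}\partial_\tau^{k-1}$, so two short local computations suffice. At an ordinary or elliptic point $P$ of order $h_P$ at which $g$ has a pole, the most singular Laurent coefficient survives differentiation (a product of nonzero integers), and in the orbifold chart $w=(\tau-\tau_P)^{h_P}$ this yields $\nu_P(\delta^{k-1}g)=\nu_P(g)+\tfrac{1-k}{h_P}$. At any cusp $s$ with $\nu_s(g)\neq 0$, the operator $\delta=q\partial_q$ merely rescales the leading Fourier coefficient by a nonzero scalar, so $\nu_s(\delta^{k-1}g)=\nu_s(g)$; the same holds at irregular cusps, where the expansion runs over $q^{1/2h}$.

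Next I would use these facts together with the constraints defining $\widetilde{\mathcal{M}}_k$ to bound $\lfloor\nu_P(g)\rfloor$ from below at each point. For $P\in R$, a pole of $g$ would combine with the pole-deepening formula and condition (1), $\nu_P(f)\ge\tfrac{1-k}{h_P}$, to give $\nu_P(g)\ge 0$, contradicting the pole; hence $g$ is holomorphic at $P$ and $\lfloor\nu_P(g)\rfloor\ge 0$. For a cusp $s\in S_\Gamma\setminus\{s_0\}$, a pole of $g$ would by the cusp computation force a pole of $f$, violating condition (2); thus $\lfloor\nu_s(g)\rfloor\ge 0$. At $s_0$, pole-preservation gives $\lfloor\nu_{s_0}(g)\rfloor=\lfloor\nu_{s_0}(f)\rfloor\ge-\dim S_k(\Gamma)$ by condition (3) when $g$ is singular there, and $\lfloor\nu_{s_0}(g)\rfloor\ge 0$ otherwise. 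At every remaining point $Q\notin R_S$ the form $g$ is holomorphic by definition of $\mathcal{M}_{2-k}(\Gamma,R_S)$, so $\lfloor\nu_Q(g)\rfloor\ge 0$. Summing over all points yields the uniform bound $\deg\lfloor\operatorname{div} g\rfloor\ge-\dim S_k(\Gamma)$.

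Finally I would confront this bound with Proposition~\ref{proposition}. For $k\ge 3$ it gives $\deg\lfloor\operatorname{div} g\rfloor=-1-\dim S_k(\Gamma)$, which violates $\deg\lfloor\operatorname{div} g\rfloor\ge-\dim S_k(\Gamma)$; hence no nonzero $f$ exists. For $k=2$ the proposition gives degree $0$ and, since $X_\Gamma$ has genus zero, $\dim S_2(\Gamma)=0$, so the bound $\ge 0$ is saturated; equality forces $\nu_P(g)\ge 0$ everywhere, making $g$ a holomorphic function on the compact curve $X_\Gamma$, hence constant, so that $f=\delta g=0$. Either way the assumed nonzero element cannot exist, which is the assertion. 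The hard part will be the first step at elliptic points and irregular cusps: one must verify that the normalizations hidden in $\nu_P$ (the charts $w=(\tau-\tau_P)^{h_P}$ and the $q^{1/2h}$-expansions) combine so that the pole-deepening and pole-preservation statements carry exactly the factors $\tfrac{1-k}{h_P}$ appearing in Definition~\ref{defi:Mtilde}; the fractional-part information recorded in Lemma~\ref{lemma} is precisely what guarantees this consistency.
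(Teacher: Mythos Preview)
Your proposal is correct and follows essentially the same approach as the paper's proof: both establish the local behaviour of $\delta^{k-1}$ at ordinary/elliptic points and at cusps, use the defining inequalities of $\widetilde{\cM}_k(\Gamma,R_{s_0})$ to force $\nu_P(g)\ge 0$ away from $s_0$, and then invoke Proposition~\ref{proposition} to reach a contradiction (or, for $k=2$, to conclude that $g$ is constant). The only cosmetic difference is that you package the final step as a global comparison of two evaluations of $\deg\lfloor\operatorname{div} g\rfloor$, while the paper isolates $\lfloor\nu_{s_0}(g)\rfloor$ from the degree formula and contradicts condition~(3) directly; these are equivalent reorganizations of the same argument.
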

\begin{proof}
Let $f\in \delta^{k-1}(\mathcal{M}_{2-k}(\Gamma,R_S))\cap \widetilde{\mathcal{M}}_k(\Gamma,R_s)$, so that in particular $f=\delta^{k-1}(g)$, for some $g\in \mathcal{M}_{2-k}(\Gamma,R_S)$. If $P\in X_\Gamma\setminus S_\Gamma$ were such that $\nu_P(g)<0$, then $\nu_P(f)<\frac{1-k}{h_P}$, which contradicts $f\in \widetilde{\mathcal{M}}_k(\Gamma,R_s)$. Therefore $\nu_P(g)\geq 0$ and a similar argument shows that $\nu_s(g)\geq 0$, for all $s\in S_\Gamma\setminus\{s_0\}$.

We now distinguish between the cases $k=2$ and $k\geq 3$. If $k=2$, then $f\in \widetilde{\mathcal{M}}_k(\Gamma,R_s)$ implies that $\nu_{s_0}(f)\geq 0$ which in turn yields that $\nu_{s_0}(g)\geq 0$. Therefore the meromorphic function $g$ has no poles on $X_\Gamma$, hence must be constant, and we conclude that $f=\delta(g)=0$. If $k\geq 3$ and $g\neq 0$, then Proposition \ref{proposition} now implies that
\[
\left\lfloor \nu_{s_0}(g)\right\rfloor=-1-\dim S_k(\Gamma)-\sum_{P \in X_\Gamma\setminus \{s_0\}}\left\lfloor \nu_{P}(g)\right\rfloor \leq -1-\dim S_k(\Gamma).
\]
In particular, $\nu_{s_0}(g)<0$, hence that $\nu_{s_0}(f)=\nu_{s_0}(g) \leq -1-\dim S_k(\Gamma)$, which contradicts $f\in \widetilde{\mathcal{M}}_k(\Gamma,R_s)$. Therefore we must have $g=0$, hence also $f=0$, ending the proof.
\end{proof}
\begin{proposition}
We have
\[
\delta^{k-1}(\mathcal{M}_{2-k}(\Gamma,R_S))+\widetilde{\mathcal{M}}_k(\Gamma,R_s)=\mathcal{M}_k(\Gamma,R_S).
\]
\end{proposition}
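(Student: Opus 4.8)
The plan is to prove surjectivity by the same recursive pole-reduction strategy used for neat subgroups in the proof of Theorem~\ref{thm:bijection}, now upgraded to accommodate elliptic points and irregular cusps. Given $f\in\cM_k(\Gamma,R_S)$, I would successively subtract elements of $\delta^{k-1}(\cM_{2-k}(\Gamma,R_S))$ so as to remove, one at a time, every pole of $f$ that violates one of the three defining conditions of $\widetilde{\cM}_k(\Gamma,R_{s_0})$ in Definition~\ref{defi:Mtilde}, until the remainder lands in $\widetilde{\cM}_k(\Gamma,R_{s_0})$. Since we already know (the preceding proposition) that the sum $\delta^{k-1}(\cM_{2-k}(\Gamma,R_S))\oplus\widetilde{\cM}_k(\Gamma,R_{s_0})$ is direct, proving that it exhausts $\cM_k(\Gamma,R_S)$ completes the decomposition.

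The key analytic input is the general-subgroup analogue of Lemma~\ref{lem:valence}, describing how $\delta^{k-1}$ shifts the order of vanishing at each type of point. First I would fix local coordinates as in section~\ref{sec:modular_review}: $z=(\tau-\tau_0)^{h_P}$ at an elliptic point of order $h_P$, and a suitable power $w$ of $q$ at a cusp (regular or irregular). A direct computation gives $\delta=\tfrac{h_P}{2\pi i}\,z^{1-1/h_P}\partial_z$ near an elliptic point and $\delta=\tfrac{1}{h'}\,w\partial_w$ in a cuspidal coordinate, from which one reads off that, whenever $g$ has a pole at the point in question, $\nu_P(\delta^{k-1}g)=\tfrac{1-k}{h_P}+\nu_P(g)$ at a non-cusp point and $\nu_s(\delta^{k-1}g)=\nu_s(g)$ at a cusp. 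The relevant leading coefficient is a product $\prod_{j=0}^{k-2}\bigl(\nu_P(g)-\tfrac{j}{h_P}\bigr)$, respectively a power of $\nu_s(g)$, which is nonzero precisely because $\nu_P(g)<0$ at a pole; Bol's identity~\eqref{eq:Bol} guarantees that the result is again modular. The shift $\tfrac{1-k}{h_P}$ is exactly the threshold appearing in condition~(1) of Definition~\ref{defi:Mtilde}, which is why the reduction terminates inside $\widetilde{\cM}_k(\Gamma,R_{s_0})$.

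With this lemma in hand, the reduction proceeds in three stages. For each $P\in R$ with $\nu_P(f)<\tfrac{1-k}{h_P}$, and for each cusp $s\neq s_0$ with $\nu_s(f)<0$, I would choose $g\in\cM_{2-k}(\Gamma,R_S)$ having a single matching pole there and no new poles elsewhere---its existence on the genus-zero curve $X_\Gamma$ follows from the Riemann--Roch theorem, via the line-bundle argument of section~\ref{sec:neat_proof}---and subtract the appropriate multiple of $\delta^{k-1}g$ to lower the pole order by one, iterating until conditions~(1) and~(2) hold. The remaining poles are then concentrated at $s_0$; to enforce condition~(3) I would invoke Proposition~\ref{proposition}, which fixes $\deg\lfloor\operatorname{div}g\rfloor=-1-\dim S_k(\Gamma)$ for every nonzero $g\in\cM_{2-k}(\Gamma)$ in weight $2-k$ with $k\geq 3$ (and $0$ for $k=2$). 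Since all other $\lfloor\nu_P(g)\rfloor$ are nonnegative once we demand that $g$ be holomorphic away from $s_0$, this pins the deepest pole at $s_0$ realizable in the image, and a finite further sequence of subtractions brings $\lfloor\nu_{s_0}(f)\rfloor$ into the advertised range $\ge-\dim S_k(\Gamma)$.

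The main obstacle is the bookkeeping at elliptic points and irregular cusps, where the orders of vanishing are fractional and constrained modulo $\mathbb{Z}$ by Lemma~\ref{lemma}. I expect the delicate point to be verifying that the leading coefficient of $\delta^{k-1}g$ never vanishes in the cases that actually occur---this rests entirely on the negativity of $\nu_P(g)$ at a genuine pole---and, relatedly, matching the degree count of Proposition~\ref{proposition} against the floor function in condition~(3), so that the residual pole at $s_0$ lands in exactly the stated range rather than being shifted by the elliptic and irregular-cusp corrections. Everything else is a routine adaptation of the neat-subgroup argument.
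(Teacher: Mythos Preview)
Your proposal is correct and follows essentially the same route as the paper's proof: recursive pole reduction via Bol's operator $\delta^{k-1}$, with the auxiliary weight-$(2-k)$ form constructed using the genus-zero hypothesis, and the final bound at $s_0$ coming from Proposition~\ref{proposition}. The paper is slightly more concrete---it fixes one $g\in\cM_{2-k}(\Gamma,R_S)$ and twists by a modular function $\varphi\in\cM_0(\Gamma,R_S)$ to place the pole where needed while pushing all other poles to $s_0$---whereas you invoke Riemann--Roch and carry out the local analysis of $\delta^{k-1}$ explicitly; but these are presentational differences, not a different argument.
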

\begin{proof}
Let $f\in \mathcal{M}_k(\Gamma,R_S)$. If $f\in \widetilde{\mathcal{M}}_k(\Gamma,R_s)$, there is nothing to prove. Otherwise, one of the conditions (i)-(iii) in Theorem \ref{theorem} must be violated. We shall prove that it is always possible to add an element of $\delta^{k-1}(\mathcal{M}_{2-k}(\Gamma,R_S))$ to $f$ such that the result is contained in $\widetilde{\mathcal{M}}_k(\Gamma,R_s)$, which clearly implies the desired result.

Assume first that there exists $P\in X_\Gamma\setminus S_\Gamma$ such that $\nu_P(f)<\frac{1-k}{h_P}$ and choose $0\neq g \in \mathcal{M}_{2-k}(\Gamma,R_S)$. There exists $\varphi \in \mathcal{M}_0(\Gamma,R_S)$ such that $\nu_P(\varphi)=\nu_P(f)+\frac{k-1}{h_P}-\nu_P(g)$ (the right hand side is an integer by Lemma \ref{lemma}) and such that $\nu_Q(\varphi)>|\nu_Q(g)|$, for all $Q\in X_\Gamma \setminus \{P,s_0\}$. Then $\nu_P(\delta^{k-1}(\varphi\cdot g))=\nu_P(f)$, hence there exists $\alpha\in \mathbb C$ such that $\nu_P(f-\alpha\delta^{k-1}(\varphi\cdot g))>\nu_P(f)$ and $\nu_Q(f-\alpha\delta^{k-1}(\varphi\cdot g))\geq \nu_Q(f)$ for all $Q\in X_\Gamma\setminus \{s_0\}$. Repeating this step a finite number of times, we may thus ensure that, up to adding an element of $\delta^{k-1}(\mathcal{M}_{2-k}(\Gamma,R_S))$, we have $\nu_P(f)\geq \frac{1-k}{h_P}$, for all $P\in X_\Gamma\setminus S_\Gamma$. A similar argument shows that we may also assume that $\nu_{s}(f)\geq 0$ for all $s\in S_\Gamma\setminus \{s_0\}$.

Now assume that $\lfloor\nu_{s_0}(f)\rfloor<-\dim S_k(\Gamma)$ and choose $0\neq g\in \mathcal{M}_{2-k}(\Gamma,R_S)$. Up to possibly multiplying $g$ by a suitable modular function that only has a pole at $s_0$, we may assume that $\nu_P(g)\geq 0$, for all $P\in X_\Gamma\setminus \{s_0\}$. Moreover, by Proposition \ref{proposition} and since $X_\Gamma$ has genus zero , there exists $\varphi \in \mathcal{M}_0(\Gamma)$ such that $\nu_P(\varphi\cdot g)\geq 0$, for all $P\in X_\Gamma \setminus \{s_0\}$, and $\nu_{s_0}(\varphi\cdot g)=\nu_{s_0}(f)$. As before, one can now show that, up to adding an element of $\delta^{k-1}(\mathcal{M}_{2-k}(\Gamma,R_S))$, we have $\lfloor\nu_{s_0}(f)\rfloor \geq -\dim S_k(\Gamma)$, proving the proposition.
\end{proof}

\vspace{1mm}\noindent

\newpage


\section{The differential equations for the sunrise and banana integrals}
\label{app:sunban}

\subsection{The differential equations for the sunrise integrals}
\label{app:sun}
The matrices appearing in the differential equation in eq.~\eqref{eq:sun_GM} are
\beq\bsp
B^\sun(t) &\, = \frac{1}{6t(t-1)(t-9)}\begin{pmatrix}
3 (3+14t-t^2) & -9 \\\\
 (t+3) (t^3-15 t^2+75 t+3) & 3 (3+14t-t^2)
 \end{pmatrix}\,,\\
 D^\sun(t) &\, =\frac{1}{6t(t-1)(t-9)}\begin{pmatrix}
  6 (t-1) t & 0 \\
 (t+3) (t^3-9 t^2+63 t+9) & 3 (t-9) (t+1)
  \end{pmatrix}\,.
  \esp\eeq
  
 A basis of maximal cuts for the two-loop equal-mass sunrise integral in $d=2$ dimensions, i.e., a basis for the solution space of the differential operator in eq.~\eqref{eq:sunriseDO} is
\begin{equation}
	\begin{split}
  \label{eq:psi1_def}
  \Psi_1(t) & = \frac{4}{[(3-\sqrt{t})(1+\sqrt{t})^3]^{1/2}}\,{\EK}\left(\frac{t_{14}(t)t_{23}(t)}{t_{13}(t)t_{24}(t)}\right)\,,\\
  \Psi_2(t) & = \frac{4 i}{[(3-\sqrt{t})(1+\sqrt{t})^3]^{1/2}}\,{\EK}\left(\frac{t_{12}(t)t_{34}(t)}{t_{13}(t)t_{24}(t)}\right)\,,
  \end{split}
\end{equation} 
with $t_{ij}(t) = t_i(t)-t_j(t)$ and 
\begin{equation}\label{eq:SR_t_i_def}
  t_1(t) = -4\,,\quad t_2(t) = -(1+\sqrt{t})^2\,,\quad t_3(t) = -(1-\sqrt{t})^2\,, \quad t_4(t)=0\,,
\end{equation}
and $\EK(\lambda)$ denotes the complete elliptic integral of the first kind:
\beq
\EK(\lambda) = \int_0^1\frac{dt}{\sqrt{t(1-t)(1-\lambda t)}}\,.
\eeq

\subsection{The differential equations for the banana integrals}
\label{app:ban}

The matrices $B^\ban(x)$ and $D^\ban(x)$ entering the differential equation~\eqref{eq:banana_DEQ} are:
\begin{align}
B^\ban(x) &= 
\begin{pmatrix}
\frac{1}{x} & \frac{4}{x} & 0 \\
\frac{1}{4(1-x)} & \frac{1}{x}+\frac{2}{1-x} & \frac{3}{x}+\frac{3}{1-x} \\
-\frac{1}{8(1-x)} + \frac{1}{8(1-4x)} \,\,&\,\, -\frac{1}{1-x} + \frac{3}{2(1-4x)}
 \,\,&\,\, \frac{1}{x}+\frac{6}{1-4x}-\frac{3}{2(1-x)}
\end{pmatrix}\,,\\
D^\ban(x) &= \begin{pmatrix}
\frac{3}{x} & \frac{12}{x} & 0 \\
\frac{1}{1-x} & \frac{2}{x}+\frac{6}{1-x} & \frac{6}{x}+\frac{6}{1-x} \\
-\frac{1}{2(1-x)} + \frac{1}{2(1-4x)} \,\,&\,\, -\frac{3}{1-x}+\frac{9}{2(1-4x)} \,\,&\,\, \frac{1}{x}+\frac{12}{1-4x}-\frac{3}{1-x}
\end{pmatrix}\,.
\end{align}

A basis of the solution space for the differential operator $\cL_x^{\mathsf{ban},(3)}$ in eq.~\eqref{eq:L_ban_3} can then be chosen as (for $0\le t\le 1$) 
\begin{equation}
\begin{split}\label{eq:H1_to_Psi1}
I_1(x(t)) &\,=\frac{1}{3}\,t\,(\Psi_1(t)+\Psi_2(t))\,(\Psi_1(t)+3\Psi_2(t))\,,\\
J_1(x(t)) &\,=\frac{i}{3}\,t\,\Psi_1(t)\,(\Psi_1(t)+\Psi_2(t))\,,\\
H_1(x(t)) &\,= -\frac{1}{3}\,t\,\Psi_1(t)^2\,,
\end{split}
\end{equation}
where $\Psi_1(t)$ and $\Psi_1(t)$ are the maximal cuts of the sunrise integral, and $x(t)$ is defined in eq.~\eqref{eq:change_of_vars}.

\vspace{1mm}\noindent


\bibliographystyle{JHEP}
\bibliography{ellip}

\providecommand{\href}[2]{#2}\begingroup\raggedright\begin{thebibliography}{100}

\bibitem{Lappo:1927}
J.~A. Lappo-Danilevsky, {\it {Th\'eorie algorithmique des corps de Riemann}},
  {\em Rec. Math. Moscou} {\bf 34} (1927) 113--146.

\bibitem{GoncharovMixedTate}
A.~B. Goncharov, {\it {Multiple polylogarithms and mixed Tate motives}},
  \href{https://arxiv.org/abs/math/0103059}{{\tt math/0103059}}.

\bibitem{Goncharov:1998kja}
A.~B. Goncharov, {\it {Multiple polylogarithms, cyclotomy and modular
  complexes}},  {\em Math.Res.Lett.} {\bf 5} (1998) 497--516,
  [\href{https://arxiv.org/abs/1105.2076}{{\tt 1105.2076}}].

\bibitem{Remiddi:1999ew}
E.~Remiddi and J.~A.~M. Vermaseren, {\it {Harmonic polylogarithms}},  {\em Int.
  J. Mod. Phys.} {\bf A15} (2000) 725--754,
  [\href{https://arxiv.org/abs/hep-ph/9905237}{{\tt hep-ph/9905237}}].

\bibitem{Gehrmann:2000zt}
T.~Gehrmann and E.~Remiddi, {\it {Two loop master integrals for $\gamma^*\to 3$
  jets: The Planar topologies}},  {\em Nucl.Phys.} {\bf B601} (2001) 248--286,
  [\href{https://arxiv.org/abs/hep-ph/0008287}{{\tt hep-ph/0008287}}].

\bibitem{Ablinger:2011te}
J.~Ablinger, J.~Blumlein, and C.~Schneider, {\it {Harmonic Sums and
  Polylogarithms Generated by Cyclotomic Polynomials}},  {\em J. Math. Phys.}
  {\bf 52} (2011) 102301, [\href{https://arxiv.org/abs/1105.6063}{{\tt
  1105.6063}}].

\bibitem{Duhr:2014woa}
C.~Duhr, {\it {Mathematical aspects of scattering amplitudes}},  in {\em
  {Proceedings, Theoretical Advanced Study Institute in Elementary Particle
  Physics: Journeys Through the Precision Frontier: Amplitudes for Colliders
  (TASI 2014): Boulder, Colorado, June 2-27, 2014}}, pp.~419--476, 2015.
\newblock \href{https://arxiv.org/abs/1411.7538}{{\tt 1411.7538}}.

\bibitem{Gehrmann:2001pz}
T.~Gehrmann and E.~Remiddi, {\it {Numerical evaluation of harmonic
  polylogarithms}},  {\em Comput. Phys. Commun.} {\bf 141} (2001) 296--312,
  [\href{https://arxiv.org/abs/hep-ph/0107173}{{\tt hep-ph/0107173}}].

\bibitem{Gehrmann:2001jv}
T.~Gehrmann and E.~Remiddi, {\it {Numerical evaluation of two-dimensional
  harmonic polylogarithms}},  {\em Comput. Phys. Commun.} {\bf 144} (2002)
  200--223, [\href{https://arxiv.org/abs/hep-ph/0111255}{{\tt
  hep-ph/0111255}}].

\bibitem{Buehler:2011ev}
S.~Buehler and C.~Duhr, {\it {CHAPLIN - Complex Harmonic Polylogarithms in
  Fortran}},  {\em Comput. Phys. Commun.} {\bf 185} (2014) 2703--2713,
  [\href{https://arxiv.org/abs/1106.5739}{{\tt 1106.5739}}].

\bibitem{Vollinga:2004sn}
J.~Vollinga and S.~Weinzierl, {\it {Numerical evaluation of multiple
  polylogarithms}},  {\em Comput. Phys. Commun.} {\bf 167} (2005) 177,
  [\href{https://arxiv.org/abs/hep-ph/0410259}{{\tt hep-ph/0410259}}].

\bibitem{Frellesvig:2016ske}
H.~Frellesvig, D.~Tommasini, and C.~Wever, {\it {On the reduction of
  generalized polylogarithms to $\text{Li}_n$ and $\text{Li}_{2,2}$ and on the
  evaluation thereof}},  {\em JHEP} {\bf 03} (2016) 189,
  [\href{https://arxiv.org/abs/1601.02649}{{\tt 1601.02649}}].

\bibitem{Ablinger:2018sat}
J.~Ablinger, J.~Bl{\"u}mlein, M.~Round, and C.~Schneider, {\it {Numerical
  Implementation of Harmonic Polylogarithms to Weight w = 8}},  {\em Comput.
  Phys. Commun.} {\bf 240} (2019) 189--201,
  [\href{https://arxiv.org/abs/1809.07084}{{\tt 1809.07084}}].

\bibitem{Naterop:2019xaf}
L.~Naterop, A.~Signer, and Y.~Ulrich, {\it {handyG \textemdash{}Rapid numerical
  evaluation of generalised polylogarithms in Fortran}},  {\em Comput. Phys.
  Commun.} {\bf 253} (2020) 107165,
  [\href{https://arxiv.org/abs/1909.01656}{{\tt 1909.01656}}].

\bibitem{Kotikov:1990kg}
A.~V. Kotikov, {\it {Differential equations method: New technique for massive
  Feynman diagrams calculation}},  {\em Phys. Lett.} {\bf B254} (1991)
  158--164.

\bibitem{Kotikov:1991hm}
A.~V. Kotikov, {\it {Differential equations method: The Calculation of vertex
  type Feynman diagrams}},  {\em Phys. Lett.} {\bf B259} (1991) 314--322.

\bibitem{Kotikov:1991pm}
A.~V. Kotikov, {\it {Differential equation method: The Calculation of N point
  Feynman diagrams}},  {\em Phys. Lett.} {\bf B267} (1991) 123--127.

\bibitem{Gehrmann:1999as}
T.~Gehrmann and E.~Remiddi, {\it {Differential equations for two loop four
  point functions}},  {\em Nucl.Phys.} {\bf B580} (2000) 485--518,
  [\href{https://arxiv.org/abs/hep-ph/9912329}{{\tt hep-ph/9912329}}].

\bibitem{ArkaniHamed:2010gh}
N.~Arkani-Hamed, J.~L. Bourjaily, F.~Cachazo, and J.~Trnka, {\it {Local
  Integrals for Planar Scattering Amplitudes}},  {\em JHEP} {\bf 1206} (2012)
  125, [\href{https://arxiv.org/abs/1012.6032}{{\tt 1012.6032}}].

\bibitem{Henn:2013pwa}
J.~M. Henn, {\it {Multiloop integrals in dimensional regularization made
  simple}},  {\em Phys. Rev. Lett.} {\bf 110} (2013) 251601,
  [\href{https://arxiv.org/abs/1304.1806}{{\tt 1304.1806}}].

\bibitem{Sabry}
A.~Sabry, {\it {Fourth order spectral functions for the electron propagator}},
  {\em Nucl. Phys.} {\bf 33} (1962), no.~17 401--430.

\bibitem{Broadhurst:1987ei}
D.~J. Broadhurst, {\it {The Master Two Loop Diagram With Masses}},  {\em Z.
  Phys.} {\bf C47} (1990) 115--124.

\bibitem{Bauberger:1994by}
S.~Bauberger, F.~A. Berends, M.~Bohm, and M.~Buza, {\it {Analytical and
  numerical methods for massive two loop selfenergy diagrams}},  {\em Nucl.
  Phys.} {\bf B434} (1995) 383--407,
  [\href{https://arxiv.org/abs/hep-ph/9409388}{{\tt hep-ph/9409388}}].

\bibitem{Bauberger:1994hx}
S.~Bauberger and M.~Bohm, {\it {Simple one-dimensional integral representations
  for two loop selfenergies: The Master diagram}},  {\em Nucl. Phys.} {\bf
  B445} (1995) 25--48, [\href{https://arxiv.org/abs/hep-ph/9501201}{{\tt
  hep-ph/9501201}}].

\bibitem{Laporta:2004rb}
S.~Laporta and E.~Remiddi, {\it {Analytic treatment of the two loop equal mass
  sunrise graph}},  {\em Nucl. Phys.} {\bf B704} (2005) 349--386,
  [\href{https://arxiv.org/abs/hep-ph/0406160}{{\tt hep-ph/0406160}}].

\bibitem{Kniehl:2005bc}
B.~A. Kniehl, A.~V. Kotikov, A.~Onishchenko, and O.~Veretin, {\it {Two-loop
  sunset diagrams with three massive lines}},  {\em Nucl. Phys.} {\bf B738}
  (2006) 306--316, [\href{https://arxiv.org/abs/hep-ph/0510235}{{\tt
  hep-ph/0510235}}].

\bibitem{Aglietti:2007as}
U.~Aglietti, R.~Bonciani, L.~Grassi, and E.~Remiddi, {\it {The Two loop crossed
  ladder vertex diagram with two massive exchanges}},  {\em Nucl. Phys.} {\bf
  B789} (2008) 45--83, [\href{https://arxiv.org/abs/0705.2616}{{\tt
  0705.2616}}].

\bibitem{Czakon:2008ii}
M.~Czakon and A.~Mitov, {\it {Inclusive Heavy Flavor Hadroproduction in NLO
  QCD: The Exact Analytic Result}},  {\em Nucl. Phys.} {\bf B824} (2010)
  111--135, [\href{https://arxiv.org/abs/0811.4119}{{\tt 0811.4119}}].

\bibitem{Brown:2010bw}
F.~Brown and O.~Schnetz, {\it {A K3 in $\phi^4$}},  {\em Duke Math. J.} {\bf
  161} (2012), no.~10 1817--1862, [\href{https://arxiv.org/abs/1006.4064}{{\tt
  1006.4064}}].

\bibitem{MullerStach:2011ru}
S.~M{\"u}ller-Stach, S.~Weinzierl, and R.~Zayadeh, {\it {A Second-Order
  Differential Equation for the Two-Loop Sunrise Graph with Arbitrary Masses}},
   {\em Commun.Num.Theor.Phys.} {\bf 6} (2012) 203--222,
  [\href{https://arxiv.org/abs/1112.4360}{{\tt 1112.4360}}].

\bibitem{CaronHuot:2012ab}
S.~Caron-Huot and K.~J. Larsen, {\it {Uniqueness of two-loop master contours}},
   {\em JHEP} {\bf 10} (2012) 026, [\href{https://arxiv.org/abs/1205.0801}{{\tt
  1205.0801}}].

\bibitem{Huang:2013kh}
R.~Huang and Y.~Zhang, {\it {On Genera of Curves from High-loop Generalized
  Unitarity Cuts}},  {\em JHEP} {\bf 04} (2013) 080,
  [\href{https://arxiv.org/abs/1302.1023}{{\tt 1302.1023}}].

\bibitem{Brown:2013hda}
F.~Brown and O.~Schnetz, {\it {Modular forms in Quantum Field Theory}},  {\em
  Commun. Num. Theor Phys.} {\bf 07} (2013) 293--325,
  [\href{https://arxiv.org/abs/1304.5342}{{\tt 1304.5342}}].

\bibitem{Nandan:2013ip}
D.~Nandan, M.~F. Paulos, M.~Spradlin, and A.~Volovich, {\it {Star Integrals,
  Convolutions and Simplices}},  {\em JHEP} {\bf 05} (2013) 105,
  [\href{https://arxiv.org/abs/1301.2500}{{\tt 1301.2500}}].

\bibitem{Ablinger:2017bjx}
J.~Ablinger, J.~Bl{\"u}mlein, A.~De~Freitas, M.~van Hoeij, E.~Imamoglu, C.~G.
  Raab, C.~S. Radu, and C.~Schneider, {\it {Iterated Elliptic and
  Hypergeometric Integrals for Feynman Diagrams}},  {\em J. Math. Phys.} {\bf
  59} (2018), no.~6 062305, [\href{https://arxiv.org/abs/1706.01299}{{\tt
  1706.01299}}].

\bibitem{Bloch:2013tra}
S.~Bloch and P.~Vanhove, {\it {The elliptic dilogarithm for the sunset graph}},
   {\em J. Number Theor.} {\bf 148} (2015) 328--364,
  [\href{https://arxiv.org/abs/1309.5865}{{\tt 1309.5865}}].

\bibitem{Adams:2015ydq}
L.~Adams, C.~Bogner, and S.~Weinzierl, {\it {The iterated structure of the
  all-order result for the two-loop sunrise integral}},  {\em J. Math. Phys.}
  {\bf 57} (2016), no.~3 032304, [\href{https://arxiv.org/abs/1512.05630}{{\tt
  1512.05630}}].

\bibitem{Adams:2016xah}
L.~Adams, C.~Bogner, A.~Schweitzer, and S.~Weinzierl, {\it {The kite integral
  to all orders in terms of elliptic polylogarithms}},  {\em {J. Math. Phys.}}
  {\bf 57} (2016), no.~12 122302, [\href{https://arxiv.org/abs/1607.01571}{{\tt
  1607.01571}}].

\bibitem{Adams:2014vja}
L.~Adams, C.~Bogner, and S.~Weinzierl, {\it {The two-loop sunrise graph in two
  space-time dimensions with arbitrary masses in terms of elliptic
  dilogarithms}},  {\em J. Math. Phys.} {\bf 55} (2014), no.~10 102301,
  [\href{https://arxiv.org/abs/1405.5640}{{\tt 1405.5640}}].

\bibitem{Adams:2013nia}
L.~Adams, C.~Bogner, and S.~Weinzierl, {\it {The two-loop sunrise graph with
  arbitrary masses}},  {\em J.Math.Phys.} {\bf 54} (2013) 052303,
  [\href{https://arxiv.org/abs/1302.7004}{{\tt 1302.7004}}].

\bibitem{Adams:2013kgc}
L.~Adams, C.~Bogner, and S.~Weinzierl, {\it {The two-loop sunrise graph with
  arbitrary masses}},  {\em J. Math. Phys.} {\bf 54} (2013) 052303,
  [\href{https://arxiv.org/abs/1302.7004}{{\tt 1302.7004}}].

\bibitem{Adams:2015gva}
L.~Adams, C.~Bogner, and S.~Weinzierl, {\it {The two-loop sunrise integral
  around four space-time dimensions and generalisations of the Clausen and
  Glaisher functions towards the elliptic case}},  {\em J. Math. Phys.} {\bf
  56} (2015), no.~7 072303, [\href{https://arxiv.org/abs/1504.03255}{{\tt
  1504.03255}}].

\bibitem{Broedel:2017siw}
J.~Broedel, C.~Duhr, F.~Dulat, and L.~Tancredi, {\it {Elliptic polylogarithms
  and iterated integrals on elliptic curves II: an application to the sunrise
  integral}},  {\em Phys. Rev.} {\bf D97} (2018), no.~11 116009,
  [\href{https://arxiv.org/abs/1712.07095}{{\tt 1712.07095}}].

\bibitem{BeilinsonLevin}
A.~Beilinson and A.~Levin, {\it {The Elliptic Polylogarithm}},  in {\em Proc.
  of Symp. in Pure Math. 55, Part II} (J.-P.~S. U.~Jannsen, S.L.~Kleiman, ed.),
  pp.~123--190, AMS, 1994.

\bibitem{LevinRacinet}
A.~Levin and G.~Racinet, {\it {Towards multiple elliptic polylogarithms}},
  \href{https://arxiv.org/abs/math/0703237}{{\tt math/0703237}}.

\bibitem{BrownLevin}
F.~Brown and A.~Levin, {\it {Multiple Elliptic Polylogarithms}},
  \href{https://arxiv.org/abs/1110.6917}{{\tt 1110.6917}}.

\bibitem{Broedel:2017kkb}
J.~Broedel, C.~Duhr, F.~Dulat, and L.~Tancredi, {\it {Elliptic polylogarithms
  and iterated integrals on elliptic curves. Part I: general formalism}},  {\em
  JHEP} {\bf 05} (2018) 093, [\href{https://arxiv.org/abs/1712.07089}{{\tt
  1712.07089}}].

\bibitem{Broedel:2014vla}
J.~Broedel, C.~R. Mafra, N.~Matthes, and O.~Schlotterer, {\it {Elliptic
  multiple zeta values and one-loop superstring amplitudes}},  {\em JHEP} {\bf
  07} (2015) 112, [\href{https://arxiv.org/abs/1412.5535}{{\tt 1412.5535}}].

\bibitem{Broedel:2015hia}
J.~Broedel, N.~Matthes, and O.~Schlotterer, {\it {Relations between elliptic
  multiple zeta values and a special derivation algebra}},  {\em J. Phys.} {\bf
  A49} (2016), no.~15 155203, [\href{https://arxiv.org/abs/1507.02254}{{\tt
  1507.02254}}].

\bibitem{Broedel:2017jdo}
J.~Broedel, N.~Matthes, G.~Richter, and O.~Schlotterer, {\it {Twisted elliptic
  multiple zeta values and non-planar one-loop open-string amplitudes}},  {\em
  J. Phys.} {\bf A51} (2018), no.~28 285401,
  [\href{https://arxiv.org/abs/1704.03449}{{\tt 1704.03449}}].

\bibitem{Adams:2017ejb}
L.~Adams and S.~Weinzierl, {\it {Feynman integrals and iterated integrals of
  modular forms}},  {\em Commun. Num. Theor. Phys.} {\bf 12} (2018) 193--251,
  [\href{https://arxiv.org/abs/1704.08895}{{\tt 1704.08895}}].

\bibitem{Broedel:2018iwv}
J.~Broedel, C.~Duhr, F.~Dulat, B.~Penante, and L.~Tancredi, {\it {Elliptic
  symbol calculus: from elliptic polylogarithms to iterated integrals of
  Eisenstein series}},  {\em JHEP} {\bf 08} (2018) 014,
  [\href{https://arxiv.org/abs/1803.10256}{{\tt 1803.10256}}].

\bibitem{ManinModular}
Y.~I. Manin, {\it {Iterated integrals of modular forms and noncommutative
  modular symbols}},  in {\em Algebraic geometry and number theory}, vol.~253
  of {\em Progr. Math.}, (Boston), pp.~565--597, Birkh\"auser Boston, 2006.
\newblock \href{https://arxiv.org/abs/math/0502576}{{\tt math/0502576}}.

\bibitem{Brown:mmv}
F.~Brown, {\it Multiple modular values and the relative completion of the
  fundamental group of $\mathcal{M}_{1,1}$},
  \href{https://arxiv.org/abs/1407.5167v4}{{\tt 1407.5167v4}}.

\bibitem{Matthes:QuasiModular}
N.~Matthes, {\it {On the algebraic structure of iterated integrals of
  quasimodular forms}},  {\em Algebra and Number Theory} {\bf 11-9} (2017)
  2113--2130, [\href{https://arxiv.org/abs/1708.04561}{{\tt 1708.04561}}].

\bibitem{Brown:mmv2}
F.~Brown, {\it {From the Deligne-Ihara conjecture to Multiple Modular Values}},
   \href{https://arxiv.org/abs/1904.00178}{{\tt 1904.00178}}.

\bibitem{matthes2021iterated}
N.~Matthes, {\it Iterated primitives of meromorphic quasimodular forms for
  {$\operatorname{SL}_2(\mathbb Z)$}},
  \href{https://arxiv.org/abs/2101.11491}{{\tt 2101.11491}}. To appear in
  \emph{Trans. Amer. Math. Soc.}

\bibitem{Duhr:2019rrs}
C.~Duhr and L.~Tancredi, {\it {Algorithms and tools for iterated Eisenstein
  integrals}},  {\em JHEP} {\bf 02} (2020) 105,
  [\href{https://arxiv.org/abs/1912.00077}{{\tt 1912.00077}}].

\bibitem{Walden:2020odh}
M.~Walden and S.~Weinzierl, {\it {Numerical evaluation of iterated integrals
  related to elliptic Feynman integrals}},  {\em Comput. Phys. Commun.} {\bf
  265} (2021) 108020, [\href{https://arxiv.org/abs/2010.05271}{{\tt
  2010.05271}}].

\bibitem{Weinzierl:2020fyx}
S.~Weinzierl, {\it {Modular transformations of elliptic Feynman integrals}},
  {\em Nucl. Phys. B} {\bf 964} (2021) 115309,
  [\href{https://arxiv.org/abs/2011.07311}{{\tt 2011.07311}}].

\bibitem{Frellesvig:2021vdl}
H.~Frellesvig, C.~Vergu, M.~Volk, and M.~von Hippel, {\it {Cuts and
  Isogenies}},  {\em JHEP} {\bf 05} (2021) 064,
  [\href{https://arxiv.org/abs/2102.02769}{{\tt 2102.02769}}].

\bibitem{Broedel:2018qkq}
J.~Broedel, C.~Duhr, F.~Dulat, B.~Penante, and L.~Tancredi, {\it {Elliptic
  Feynman integrals and pure functions}},  {\em JHEP} {\bf 01} (2019) 023,
  [\href{https://arxiv.org/abs/1809.10698}{{\tt 1809.10698}}].

\bibitem{Adams:2018yfj}
L.~Adams and S.~Weinzierl, {\it {The $\varepsilon$-form of the differential
  equations for Feynman integrals in the elliptic case}},  {\em Phys. Lett.}
  {\bf B781} (2018) 270--278, [\href{https://arxiv.org/abs/1802.05020}{{\tt
  1802.05020}}].

\bibitem{Bogner:2014mha}
C.~Bogner and F.~Brown, {\it {Feynman integrals and iterated integrals on
  moduli spaces of curves of genus zero}},  {\em Commun. Num. Theor. Phys.}
  {\bf 09} (2015) 189--238, [\href{https://arxiv.org/abs/1408.1862}{{\tt
  1408.1862}}].

\bibitem{Bloch:2014qca}
S.~Bloch, M.~Kerr, and P.~Vanhove, {\it {A Feynman integral via higher normal
  functions}},  {\em Compos. Math.} {\bf 151} (2015) 2329--2375,
  [\href{https://arxiv.org/abs/1406.2664}{{\tt 1406.2664}}].

\bibitem{Bloch:2016izu}
S.~Bloch, M.~Kerr, and P.~Vanhove, {\it {Local mirror symmetry and the sunset
  Feynman integral}},  {\em Adv. Theor. Math. Phys.} {\bf 21} (2016), no.~6
  [\href{https://arxiv.org/abs/1601.08181}{{\tt 1601.08181}}].

\bibitem{Bourjaily:2018ycu}
J.~L. Bourjaily, Y.-H. He, A.~J. Mcleod, M.~Von~Hippel, and M.~Wilhelm, {\it
  {Traintracks through Calabi-Yau Manifolds: Scattering Amplitudes beyond
  Elliptic Polylogarithms}},  {\em Phys. Rev. Lett.} {\bf 121} (2018), no.~7
  071603, [\href{https://arxiv.org/abs/1805.09326}{{\tt 1805.09326}}].

\bibitem{Bourjaily:2018yfy}
J.~L. Bourjaily, A.~J. McLeod, M.~von Hippel, and M.~Wilhelm, {\it {A (Bounded)
  Bestiary of Feynman Integral Calabi-Yau Geometries}},  {\em Phys. Rev. Lett.}
  {\bf 122} (2019), no.~3 031601, [\href{https://arxiv.org/abs/1810.07689}{{\tt
  1810.07689}}].

\bibitem{Klemm:2019dbm}
A.~Klemm, C.~Nega, and R.~Safari, {\it {The $l$-loop Banana Amplitude from GKZ
  Systems and relative Calabi-Yau Periods}},  {\em JHEP} {\bf 04} (2020) 088,
  [\href{https://arxiv.org/abs/1912.06201}{{\tt 1912.06201}}].

\bibitem{Bonisch:2020qmm}
K.~B\"onisch, F.~Fischbach, A.~Klemm, C.~Nega, and R.~Safari, {\it {Analytic
  structure of all loop banana integrals}},  {\em JHEP} {\bf 05} (2021) 066,
  [\href{https://arxiv.org/abs/2008.10574}{{\tt 2008.10574}}].

\bibitem{Bonisch:2021yfw}
K.~B\"onisch, C.~Duhr, F.~Fischbach, A.~Klemm, and C.~Nega, {\it {Feynman
  Integrals in Dimensional Regularization and Extensions of Calabi-Yau
  Motives}},  \href{https://arxiv.org/abs/2108.05310}{{\tt 2108.05310}}.

\bibitem{Broedel:2019kmn}
J.~Broedel, C.~Duhr, F.~Dulat, R.~Marzucca, B.~Penante, and L.~Tancredi, {\it
  {An analytic solution for the equal-mass banana graph}},  {\em JHEP} {\bf 09}
  (2019) 112, [\href{https://arxiv.org/abs/1907.03787}{{\tt 1907.03787}}].

\bibitem{Chetyrkin:1981qh}
K.~Chetyrkin and F.~Tkachov, {\it {Integration by Parts: The Algorithm to
  Calculate beta Functions in 4 Loops}},  {\em Nucl.Phys.} {\bf B192} (1981)
  159--204.

\bibitem{Tkachov:1981wb}
F.~V. Tkachov, {\it {A Theorem on Analytical Calculability of Four Loop
  Renormalization Group Functions}},  {\em Phys. Lett.} {\bf B100} (1981)
  65--68.

\bibitem{Chetyrkin:2006dh}
K.~G. Chetyrkin, M.~Faisst, C.~Sturm, and M.~Tentyukov, {\it {epsilon-finite
  basis of master integrals for the integration-by-parts method}},  {\em Nucl.
  Phys. B} {\bf 742} (2006) 208--229,
  [\href{https://arxiv.org/abs/hep-ph/0601165}{{\tt hep-ph/0601165}}].

\bibitem{Lee:2019wwn}
R.~N. Lee and A.~I. Onishchenko, {\it {$\epsilon$-regular basis for
  non-polylogarithmic multiloop integrals and total cross section of the
  process $e^+e^-\to 2(Q\bar Q)$}},  {\em JHEP} {\bf 12} (2019) 084,
  [\href{https://arxiv.org/abs/1909.07710}{{\tt 1909.07710}}].

\bibitem{Primo:2016ebd}
A.~Primo and L.~Tancredi, {\it {On the maximal cut of Feynman integrals and the
  solution of their differential equations}},  {\em Nucl. Phys.} {\bf B916}
  (2017) 94--116, [\href{https://arxiv.org/abs/1610.08397}{{\tt 1610.08397}}].

\bibitem{Frellesvig:2017aai}
H.~Frellesvig and C.~G. Papadopoulos, {\it {Cuts of Feynman Integrals in Baikov
  representation}},  {\em JHEP} {\bf 04} (2017) 083,
  [\href{https://arxiv.org/abs/1701.07356}{{\tt 1701.07356}}].

\bibitem{Harley:2017qut}
M.~Harley, F.~Moriello, and R.~M. Schabinger, {\it {Baikov-Lee Representations
  Of Cut Feynman Integrals}},  {\em JHEP} {\bf 06} (2017) 049,
  [\href{https://arxiv.org/abs/1705.03478}{{\tt 1705.03478}}].

\bibitem{Bosma:2017ens}
J.~Bosma, M.~Sogaard, and Y.~Zhang, {\it {Maximal Cuts in Arbitrary
  Dimension}},  {\em JHEP} {\bf 08} (2017) 051,
  [\href{https://arxiv.org/abs/1704.04255}{{\tt 1704.04255}}].

\bibitem{Doran:1998hm}
C.~F. Doran, {\it {Picard-Fuchs uniformization: Modularity of the mirror map
  and mirror moonshine}},  in {\em {Proceedings of NATO-ASI and CRM Summer
  School on the Arithmetic and Geometry of Algebraic Cycles}}, 12, 1998.
\newblock \href{https://arxiv.org/abs/math/9812162}{{\tt math/9812162}}.

\bibitem{diamond2005first}
F.~Diamond, B.~Sturmfels, J.~Shurman, and S.~S. Media, {\em A First Course in
  Modular Forms}.
\newblock Graduate Texts in Mathematics. Springer, 2005.

\bibitem{YifanYang}
Y.~Yifan, {\it Transformation formulas for generalized dedekind eta functions},
   {\em Bulletin of the London Mathematical Society} {\bf 36} (2004), no.~5
  671--682,
  [\href{https://arxiv.org/abs/https://londmathsoc.onlinelibrary.wiley.com/doi/pdf/10.1112/S0024609304003510}{{\tt
  https://londmathsoc.onlinelibrary.wiley.com/doi/pdf/10.1112/S0024609304003510}}].

\bibitem{allgenus0}
K.~S. Chua, M.~L. Lang, and Y.~Yang, {\it {On Rademacher's conjecture:
  congruence subgroups of genus zero of the modular group}},  {\em Journal of
  Algebra} {\bf 277} (2003) 408 -- 428.

\bibitem{ZagierModular}
D.~Zagier, {\it {Elliptic Modular Forms and Their Applications}},  in {\em The
  1-2-3 of Modular Forms}, Springer, 2008.

\bibitem{AMBP_2012__19_2_297_0}
E.~Royer, {\it Quasimodular forms: an introduction},  {\em Annales
  Math\'ematiques Blaise Pascal} {\bf 19} (2012), no.~2 297--306.

\bibitem{MR2407067}
P.~Guerzhoy, {\it Hecke operators for weakly holomorphic modular forms and
  supersingular congruences},  {\em Proc. Amer. Math. Soc.} {\bf 136} (2008),
  no.~9 3051--3059.

\bibitem{DDMS}
M.~Deneufch\^{a}tel, G.~H.~E. Duchamp, V.~H.~N. Minh, and A.~I. Solomon, {\it
  Independence of hyperlogarithms over function fields via algebraic
  combinatorics},  in {\em Algebraic informatics}, vol.~6742 of {\em Lecture
  Notes in Comput. Sci.}, pp.~127--139.
\newblock Springer, Heidelberg, 2011.

\bibitem{ChenSymbol}
K.~T. Chen, {\it {Iterated path integrals}},  {\em Bull.\ Amer.\ Math.\ Soc.}
  {\bf 83} (1977) 831.

\bibitem{Bol}
G.~Bol, {\it Invarianten linearer {D}ifferentialgleichungen},  {\em Abh. Math.
  Sem. Univ. Hamburg} {\bf 16} (1949), no.~nos. 3-4 1--28.

\bibitem{Remiddi:2016gno}
E.~Remiddi and L.~Tancredi, {\it {Differential equations and dispersion
  relations for Feynman amplitudes. The two-loop massive sunrise and the kite
  integral}},  {\em Nucl. Phys.} {\bf B907} (2016) 400--444,
  [\href{https://arxiv.org/abs/1602.01481}{{\tt 1602.01481}}].

\bibitem{Primo:2017ipr}
A.~Primo and L.~Tancredi, {\it {Maximal cuts and differential equations for
  Feynman integrals. An application to the three-loop massive banana graph}},
  {\em Nucl. Phys.} {\bf B921} (2017) 316--356,
  [\href{https://arxiv.org/abs/1704.05465}{{\tt 1704.05465}}].

\bibitem{SAGE}
W.~Stein and D.~Joyner, {\it {SAGE}: System for algebra and geometry
  experimentation}, . {\tt http://www.sagemath.org/files/sage\_stein2005.pdf}.

\bibitem{Maier}
R.~S. {Maier}, {\it {On Rationally Parametrized Modular Equations}},  {\em
  ArXiv Mathematics e-prints} (Nov., 2006)
  [\href{https://arxiv.org/abs/math/0611041}{{\tt math/0611041}}].

\bibitem{joyce}
G.~Joyce, {\it {On the simple cubic lattice Green function}},  {\em
  Transactions of the Royal Society of London, Mathematical and Physical
  Sciences} {\bf 1973} (273) 583--610.

\bibitem{PSLQ}
H.~R.~P. Ferguson and D.~H. Bailey, {\it A polynomial time, numerically stable
  integer relation algorithm}, .

\bibitem{verrill1996}
H.~A. Verrill, {\it Root lattices and pencils of varieties},  {\em J. Math.
  Kyoto Univ.} {\bf 36} (1996), no.~2 423--446.

\bibitem{matthesfonseca}
N.~Matthes and T.~J. Fonseca, {\it {Towards algebraic iterated integrals on
  elliptic curves via the universal vectorial extension}},
  \href{https://arxiv.org/abs/2101.11491}{{\tt 2101.11491}}.

\bibitem{Abreu:2019fgk}
S.~Abreu, M.~Becchetti, C.~Duhr, and R.~Marzucca, {\it {Three-loop
  contributions to the $\rho$ parameter and iterated integrals of modular
  forms}},  {\em JHEP} {\bf 02} (2020) 050,
  [\href{https://arxiv.org/abs/1912.02747}{{\tt 1912.02747}}].

\bibitem{Aglietti:2004tq}
U.~Aglietti and R.~Bonciani, {\it {Master integrals with 2 and 3 massive
  propagators for the 2 loop electroweak form-factor - planar case}},  {\em
  Nucl. Phys.} {\bf B698} (2004) 277--318,
  [\href{https://arxiv.org/abs/hep-ph/0401193}{{\tt hep-ph/0401193}}].

\bibitem{Royer}
E.~Royer, {\it Quasimodular forms: an introduction},  {\em Annales
  Math\'ematiques Blaise Pascal} {\bf 19} (2012), no.~2 297--306.

\end{thebibliography}\endgroup

\end{document}